\title{Steady-State Cascade Operators and their Role in Linear Control, Estimation, and Model Reduction Problems}
\author{John~W.~Simpson-Porco, ~\IEEEmembership{Senior Member,~IEEE}, 
  Daniele Astolfi, 
  and  Giordano Scarciotti,~\IEEEmembership{Senior Member,~IEEE}
  \thanks{J.~W.~Simpson-Porco is with the Department of Electrical and Computer Engineering, University of Toronto, Toronto, ON M5S 3G4, Canada (jwsimpson@ece.utoronto.ca).} 
\thanks{D. Astolfi is with 
Universit\'e Claude Bernard Lyon 1, CNRS, LAGEPP UMR 5007, 43 boulevard du 11 novembre 1918, F-69100, Villeurbanne, France 
 ({\small  
daniele.astolfi@univ-lyon1.fr}).}
  \thanks{G. Scarciotti is with the Department of Electrical and Electronic Engineering,
Imperial College London, London SW7 2AZ, U.K. ({\small g.scarciotti@ic.ac.uk}).} 
\thanks{Research supported by NSERC Discovery RGPIN-2024-05523 and ANR  via ALLIGATOR project (ANR-22-CE48-0009-01).}
}
\newcommand{\overbar}[1]{\mkern 1.5mu\overline{\mkern-1.5mu#1\mkern-1.5mu}\mkern 1.5mu}
\def\RR{{\mathbb R}}    
\def\CC{{\mathbb C}}    
\newcommand{\bA}{\bf{A}}
\newcommand{\bB}{\bf{B}}
\newcommand{\bC}{\bf{C}}
\newcommand{\bD}{\bf{D}}
\newcommand{\bF}{\bf{F}}
\newcommand{\bG}{\bf{G}}
\newcommand{\bH}{\bf{H}}
\newcommand{\bX}{\bf{X}}
\DeclareMathOperator{\rank}{rank}
\newcommand{\intersection}{\ensuremath{\operatorname{\cap}}}
\newtheorem{theorem}{Theorem}
\newtheorem{proposition}{Proposition}
\newtheorem{remark}{Remark}
\newtheorem{corollary}{Corollary}
\newcommand{\real}{\mathbb{R}}
\newcommand{\eig}{\mathrm{eig}}
\newcommand{\complex}{\mathbb{C}}
\newcommand{\integer}{\mathbb{Z}}
\renewcommand{\j}{\mathrm{j}}
\newcommand{\setdef}[2]{\{#1 \;|\; #2\}}
\newcommand{\map}[3]{#1: #2 \rightarrow #3}
\newcommand{\tb}{\color{black}}
\newcommand{\rom}[1]{
#1{}_{\rm r}}
\newcommand{\define}{\coloneqq}
\newcommand\oprocendsymbol{\hbox{$\square$}}
\newcommand\oprocend{\relax\ifmmode\else\unskip\hfill\fi\oprocendsymbol}
\DeclareFontFamily{U}{BOONDOX-calo}{\skewchar\font=45 }
\DeclareFontShape{U}{BOONDOX-calo}{m}{n}{<-> s*[1.05] BOONDOX-r-calo}{}
\DeclareFontShape{U}{BOONDOX-calo}{b}{n}{<-> s*[1.05] BOONDOX-b-calo}{}
\DeclareMathAlphabet{\mathcalboondox}{U}{BOONDOX-calo}{m}{n}
\SetMathAlphabet{\mathcalboondox}{bold}{U}{BOONDOX-calo}{b}{n}
\DeclareMathAlphabet{\mathbcalboondox}{U}{BOONDOX-calo}{b}{n}
\DeclareMathAlphabet{\mathdutchcal}{U}{dutchcal}{m}{n}
\SetMathAlphabet{\mathdutchcal}{bold}{U}{dutchcal}{b}{n}
\DeclareMathAlphabet{\mathdutchbcal}{U}{dutchcal}{b}{n}
\newcommand{\Syl}{\mathdutchcal{S}}
\newcommand{\Syld}{\mathdutchcal{S}_{\rm d}}
\newcommand{\Sylp}{\mathdutchcal{S}_{\rm p}}
\renewcommand{\L}{\mathdutchcal{C}}
\newcommand{\Lp}{\mathdutchcal{C}_{\rm p}}
\newcommand{\Ld}{\mathdutchcal{C}_{\rm d}}
\newcommand{\Hp}{\mathdutchcal{H}_{\rm p}}
\newcommand{\Hd}{\mathdutchcal{H}_{\rm d}}
\newcommand{\Ros}{\mathcalboondox{R}}
\def\sylp{\mathcalboondox{s}_{\rm p}}
\def\syld{\mathcalboondox{s}_{\rm d}}
 \def\lp{\mathdutchcal{c}_{\rm p}}
 \def\ld{\mathdutchcal{c}_{\rm d}}
\newcommand{\Mom}{\mathcalboondox{M}}
\tikzstyle{every node}=[font=\small]
\tikzstyle{every path}=[line width=0.8pt,line cap=round,line join=round]
\begin{document}

\maketitle

\begin{abstract}
    Certain linear matrix operators arise naturally in systems analysis and design problems involving cascade interconnections of linear time-invariant systems, including problems of stabilization, estimation, and model order reduction. We conduct here a comprehensive study of these operators and their relevant system-theoretic properties. The general theory is leveraged to delineate both known and new design methodologies for {\tb control and observation of cascades, and to characterize structural properties of reduced models}. Several entirely new designs arise from this systematic categorization, including new recursive and low-gain design frameworks for observation of cascaded systems. The benefits of the results beyond the linear time-invariant setting are demonstrated through preliminary extensions for nonlinear systems, with an outlook towards the development of a similarly comprehensive nonlinear theory.
\end{abstract}

\begin{IEEEkeywords}
Sylvester equations, recursive design, forwarding, backstepping, output regulation, observer design, model order reduction, moment matching, tuning regulator
\end{IEEEkeywords}

\maketitle

\section{Introduction}
\label{Sec:Intro}

Equations and inequalities involving matrix variables arise frequently in systems analysis and design problems. A prominent example is the \emph{Sylvester equation} \cite[Chp. 6]{ACA:05}
\begin{equation}\label{Eq:Sylvester}
\Syl(\bX) \define \bA \bX - \bX\bB = \bC\bD,
\end{equation}
a linear equation in the unknown matrix $\bX$, with 
$(\bA,\bB,\bC,\bD)$ 
given matrix data of appropriate dimensions. As is well known, when $\bA$ and $\bB$ have disjoint spectra, the linear operator $\Syl$ is invertible, and $\bX = \Syl^{-1}(\bC\bD)$ is the unique solution to \eqref{Eq:Sylvester}. An incomplete list of the system-theoretic applications of \eqref{Eq:Sylvester} include linear output regulation \cite{BAF:77,HLT-AS-MH:01}, observer design \cite{DGL:64}, eigenstructure assignment \cite{BS-SPB:88}, model order reduction \cite{KAG-AV-PVD:04a, ACA:05} and disturbance decoupling \cite{VLS:94}. 

{\tb Sylvester equations however are often \emph{intermediate} constructions in problems involving \emph{cascaded interconnections}, including model reduction, feedback stabilization, and estimator design. In these problems, the objects of prime importance are often secondary linear matrix operators $\L$ derived from $\Syl$, which take the form
\begin{equation}\label{Eq:IntroSSC}
\bD \mapsto \L(\bD) \define  \bF\Syl^{-1}(\bC\bD)\bG + \bD \bH
\end{equation}
with $(\bF,\bG, \bH)$ given matrix data of appropriate dimensions. For reasons that will become apparent, we call operators of the form \eqref{Eq:IntroSSC} \emph{steady-state cascade (SSC) operators}. In some design frameworks, the system-theoretic properties of operator's \emph{values} $\L(\bD)$ are the point of focus. For example, a controllability property of $\L(\bD)$ was leveraged in \cite{DA-LP-LM:22b} as part of a recursive stabilizer design procedure, while in observer design procedures following \cite{DGL:64}, $\L(\bD)$ ends up specifying the required control input matrix for the observer. Similarly, in the model reduction context \cite{GS-AA:24}, the values $\L(\bD)$ are related to the so-called \emph{moments} (roughly, frequency response samples) of the linear system, which should be matched by a lower-order reduced model. In contrast, when other design pathways are taken, the properties of $\L$ \emph{as a linear operator} appear to be more critical. For instance, surjectivity of $\L$ was the key property enabling a static gain design procedure in \cite{LC-JWSP:23d}. 
%
%
Despite their frequent appearance in various analysis and design problems, no systematic theoretical study of SSC operators is available in the literature, limiting both our fundamental understanding of these operators and the development of design methodologies based on them.
}


{\tb Another motivation for the current work comes from the desire to develop new recursive stabilizer and estimator design procedures for nonlinear systems. As will become clear in Section \ref{Sec:Applications}, SSC operators \eqref{Eq:IntroSSC} naturally appear when recursive design is pursued in the LTI setting. In the nonlinear setting, the Sylvester equation \eqref{Eq:Sylvester} generalizes into a partial differential equation in an unknown function $\mathbf{x}(\cdot)$, and the SSC operator \eqref{Eq:IntroSSC} generalizes accordingly. While little is known regarding the global properties of such nonlinear SSC operators, by linearization, their local properties are determined by the properties of associated linear SSC operators (e.g., \cite[Lemma 1]{DA-LP-LM:22b}). It follows then that a thorough understanding of SSC operators in the LTI context has some immediate implications for local nonlinear design. Moreover though, as we will see in Sections \ref{Sec:Applications} and \ref{Sec:Nonlinear}, a clear delineation of LTI design procedures based on SSC operators immediately inspires new nonlinear recursive design problems and procedures.
}



\smallskip


\emph{Contributions:} 
{\tb Our preliminary work \cite{DA-JWSP-GS:23j} surveyed known applications of Sylvester and invariance equations in linear and nonlinear systems theory. In contrast, the present paper identifies SSC operators as the key unexplored objects of interest, and provides a comprehensive treatment of these operators, their properties, and their application in analysis and design problems.} Specifically, the paper contains three main contributions. First, in Section \ref{Sec:SSC} we provide a unifying theoretical study of SSC operators arising from cascade interconnections of linear time-invariant (LTI) state-space systems. There are two natural SSC operators one can define, depending on the order of the interconnection, and they are treated in parallel. Connections between these operators and the moments of the underlying systems are established. {\tb The key novel technical results are in Section \ref{Sec:SystemTheoreticSSC}}, wherein we establish relationships between ``non-resonance conditions'' on the plant data, injectivity and surjectivity of the SSC operators, and controllability and observability properties of their images. 

Second, in Section \ref{Sec:Applications} we apply the general theory of Section \ref{Sec:SystemTheoreticSSC} to the problems of cascade stabilization, cascade estimation, and model order reduction. For each case, we describe how different properties of the SSC operators may be leveraged to obtain different designs under varying assumptions. The results provide a library of design approaches, and highlights parallels between the distinct problems. Some of the specific design procedures outlined are known, while others are new. Of note, our treatment of disturbance estimation in this framework lead immediately to a novel low-gain design that is dual to the so-called tuning regulator framework \cite{EJD:76, LC-JWSP:23d} in the area of linear output regulation. 

The results of Sections \ref{Sec:SSC} and \ref{Sec:Applications} {\tb provide a map for nonlinear extensions; some of the theoretical and design results for LTI systems have known nonlinear counterparts, while some do not.} As a third contribution, in Section \ref{Sec:Nonlinear} {\tb we define nonlinear versions of SSC operators}, comment on the known nonlinear counterparts to our LTI design procedures, and identify unexplored extensions of the LTI design results for nonlinear systems. While this is primarily intended as an agenda for future research, as a concrete illustration of how our catalog of LTI designs may inspire new nonlinear design techniques, we present a novel recursive observer design for a cascade in which a nonlinear signal generator drives an LTI plant. {\tb Section \ref{Sec:Conclusions} concludes and lists further open directions.}

\medskip

\subsubsection*{Notation}
We denote with $\RR$, resp. $\CC$ the set of real, resp. complex, numbers. The symbol $\complex_{\geq 0}$ denotes the set of
complex numbers with non-negative real part, with $\complex_{>0}, \complex_{\leq 0}$, and so forth having the obvious meanings.  The set of eigenvalues of a matrix $A$ is denoted by $\eig(A)$. Given a positive integer $n$, $I_n$ denotes the $n \times n$ identity matrix. If $A \in \complex^{n \times m}$ and $\mathcal{V} \subset \complex^{m}$ is a subspace, then $A\mathcal{V} = \setdef{Av}{v \in \mathcal{V}}$. Given vectors (or matrices with the same number of columns) $x_1,\ldots,x_N$, $\mathrm{col}(x_1,\ldots,x_n)$ denotes their vertical concatenation. The symbol ``$\define$'' indicates a definition. {\tb Let $\mathcal{E}(\real;\real^{n_1\times n_2})$ denote the set of all continuous maps of $\epsilon \mapsto E(\epsilon) \in \real^{n_1 \times n_2}$ such that every element of $E$ is $O(\epsilon)$ as $\epsilon \to 0^+$, i.e., such that $\lim_{\epsilon \to 0^{+}}\Vert E(\epsilon)\Vert / \epsilon$ is finite.}

\section{The Steady-State Cascade Operators}
\label{Sec:SSC}



This section establishes technical results relating to two Sylvester equations and two associated linear operators, which we term the steady-state cascade (SSC) operators. Applications of these results to problems of stabilization, estimation, and model order reduction will be discussed in Section \ref{Sec:Applications}.

\subsection{Definitions and Interpretation}
\label{Sec:SSC_DI}

With $n, \nu, m, p$ positive integers, let $\Sigma = (A,B,C,D)$ with $A \in \complex^{n \times n}$, $B \in \complex^{n \times m}$, $C \in \complex^{p \times n}$, and $D \in \complex^{p \times m}$, and let $\Sigma^{\prime} = (F,G,H,J)$ with $F \in \complex^{\nu \times \nu}$, $G \in \complex^{\nu \times p}$, $H \in \complex^{m \times \nu}$, and $J \in \complex^{m \times p}$. When convenient, we will interpret these quadruples as defining LTI systems
\begin{equation}\label{Eq:SigmaSigma}
\Sigma:\,\,\begin{cases}\,\begin{aligned}
\dot{x} &= Ax + Bu\\
y &= Cx + Du
\end{aligned}\end{cases} \qquad \Sigma^{\prime}:\,\,\begin{cases}\,\begin{aligned}
\dot{\eta} &= F\eta + Gv\\
z &= H\eta + Jv
\end{aligned}\end{cases}
\end{equation}
with states $x \in \complex^{n}$ and $\eta \in \complex^{\nu}$ and associated transfer matrices
\begin{subequations}\label{eq:TF_definition}
    \begin{align}\label{eq:TF_definition_Sigma}
        \hat\Sigma(s) &= C(sI_n-A)^{-1}B+D,
        \\
        \hat\Sigma^\prime(s)& = H(sI_\nu-F)^{-1}G+J. \label{eq:TF_definition_SigmaPrime}
    \end{align}
\end{subequations}
We associate with $\Sigma$ and $\Sigma^{\prime}$, for lack of better terminology, \emph{primal and dual Sylvester operators}
\begin{subequations}
\begin{align}
&\map{\Sylp}{\complex^{n\times \nu}}{\complex^{n\times \nu}}, \quad \Sylp(\Pi) = \Pi F - A\Pi,\\
&\map{\Syld}{\complex^{\nu \times n}}{\complex^{\nu \times n}}, \quad \Syld(M) = MA - FM
\end{align}
\end{subequations}
If $\eig(A) \cap \eig(F) = \emptyset$, both $\Sylp$ and $\Syld$ are invertible linear operators, and the associated \emph{Sylvester equations}
\begin{subequations}\label{Eq:GenSyl}
\begin{align}
\label{Eq:GenSyl1}
\Sylp(\Pi) &= \Pi F - A\Pi = BH,\\
\label{Eq:GenSyl2}
\Syld(M) &= MA - FM = GC,
\end{align}
\end{subequations}
have unique solutions $\Pi = \Sylp^{-1}(BH)$ and $M = \Syld^{-1}(GC)$, respectively. To this point, $\Sigma$ and $\Sigma^{\prime}$ have been treated symmetrically. We elect now to think of $(A,B,C,D)$ and $F$ as fixed data, and we interpret the above solutions as \emph{linear} functions of $H$ and $G$, respectively. Based on this, we call the linear operators
\[
\map{\Lp}{\complex^{m \times \nu}}{\complex^{p \times \nu}}, \qquad \map{\Ld}{\complex^{\nu \times p}}{\complex^{\nu \times m}}
\]
defined by
\begin{subequations}\label{Eq:GenL}
\begin{align}
\label{Eq:GenL1}
\Lp(H) &\define C\Pi + DH = C\Sylp^{-1}(BH) + DH\\
\label{Eq:GenL2}
\Ld(G) &\define -MB + GD = -\Syld^{-1}(GC)B + GD
\end{align}
\end{subequations}
the \textit{steady-state cascade operators}. 

To provide some insight into these constructions, consider first the cascade interconnection $\Sigma^{\prime} \to \Sigma$ in which $\Sigma^{\prime}$ drives $\Sigma$ with $u = z$, and the output $y$ is observed, as shown in Figure \ref{Fig:Cascade1}. 
\begin{figure}[ht!]
\begin{center}
    \begin{tikzpicture}[auto, scale = 0.6, node distance=2cm,>=latex', every node/.style={scale=0.85}]
      \tikzstyle{anch} = [coordinate]
      \tikzstyle{block} = [draw, fill=white, rectangle, 
      minimum height=3em, minimum width=6em, blur shadow={shadow blur steps=5}]
      \tikzstyle{wideblock} = [draw, fill=white, rectangle, 
      minimum height=3em, minimum width=9em, blur shadow={shadow blur steps=5}]
      \tikzstyle{hold} = [draw, fill=white, rectangle, 
      minimum height=3em, minimum width=3em, blur shadow={shadow blur steps=5}]
      \tikzstyle{dzblock} = [draw, fill=white, rectangle, minimum height=3em, minimum width=4em, blur shadow={shadow blur steps=5},
      path picture = {
        \draw[thin, black] ([yshift=-0.1cm]path picture bounding box.north) -- ([yshift=0.1cm]path picture bounding box.south);
        \draw[thin, black] ([xshift=-0.1cm]path picture bounding box.east) -- ([xshift=0.1cm]path picture bounding box.west);
        \draw[very thick, black] ([xshift=-0.5cm]path picture bounding box.east) -- ([xshift=0.5cm]path picture bounding box.west);
        \draw[very thick, black] ([xshift=-0.5cm]path picture bounding box.east) -- ([xshift=-0.1cm, yshift=+0.4cm]path picture bounding box.east);
        \draw[very thick, black] ([xshift=+0.5cm]path picture bounding box.west) -- ([xshift=+0.1cm, yshift=-0.4cm]path picture bounding box.west);
      }]
      \tikzstyle{sum} = [draw, fill=white, circle, node distance=1cm, blur shadow={shadow blur steps=8}]
      \tikzstyle{input} = [coordinate]
      \tikzstyle{output} = [coordinate]
      \tikzstyle{split} = [coordinate]
      \tikzstyle{pinstyle} = [pin edge={to-,thin,black}]
      \node [input,name=input] {};
      \node [block, right of=input, node distance=2.2cm] (system) 
      {
      \makecell[c]{$
                \begin{aligned}
                \dot{\eta} &= F\eta + Gv\\
                z &= H\eta + Jv
               \end{aligned}$
                 }
     };
      \node[block, right of=system, node distance=3.5cm] (opt) {\makecell[c]{$
    \begin{aligned}
    \dot{x} &= Ax + Bu\\
    y &= Cx + Du
    \end{aligned}$}};
      \draw [thick, -latex] (input) -- node[above, name=u, pos=.2] {$v$} (system);
      \draw [thick,-latex] (system) -- node[name=z] {$u = z$} (opt);
      \node [output, right of = opt, node distance=2cm] (output) {};
      \draw [thick, -latex] (opt) -- node[name=eta, pos=0.5] {$y$} (output);

%
    \end{tikzpicture}
\caption{The cascade interconnection $\Sigma^{\prime} \to \Sigma$.}
\label{Fig:Cascade1}
\end{center}
\end{figure}

The equations describing {\tb the} interconnection are
\begin{equation}\label{Eq:CascadePrimal}
\begin{aligned}
\begin{bmatrix}\dot{x} \\ \dot{\eta}\end{bmatrix} &= \begin{bmatrix}A & BH\\
0 & F
\end{bmatrix}\begin{bmatrix}x \\ \eta\end{bmatrix} + \begin{bmatrix}BJ \\ G\end{bmatrix}v\\
y &= \begin{bmatrix}C & DH\end{bmatrix}\begin{bmatrix}x \\ \eta\end{bmatrix} + \begin{bmatrix}DJ\end{bmatrix}v.
\end{aligned}
\end{equation}
 For \eqref{Eq:CascadePrimal} when $v \equiv 0$, the matrix $\Pi = \Sylp^{-1}(BH)$ defines an \emph{invariant subspace} $\setdef{(x,\eta)}{x = \Pi\eta}$ for the dynamics \eqref{Eq:CascadePrimal}. Motivated by this, by defining the error variable $\xi \define x - \Pi\eta$, {\tb the dynamics \eqref{Eq:CascadePrimal} become}
\begin{equation}\label{Eq:CascadePrimalTransformed}
\begin{aligned}
\begin{bmatrix}\dot{\xi} \\ \dot{\eta}\end{bmatrix} &= \begin{bmatrix}A & 0\\
0 & F
\end{bmatrix}\begin{bmatrix}\xi \\ \eta\end{bmatrix} + \begin{bmatrix}-\Pi G + BJ \\ G\end{bmatrix}v\\
y &= \begin{bmatrix}C & \Lp(H)\end{bmatrix}\begin{bmatrix}\xi \\ \eta\end{bmatrix} + \begin{bmatrix}DJ\end{bmatrix}v.
\end{aligned}
\end{equation}
When $\xi \equiv 0$ and $v \equiv 0$, we obtain the unforced dynamics on the invariant subspace, which are now simply described by
\begin{equation}\label{Eq:CascadePrimalObservation}
\dot{\eta} = F\eta, \qquad y = \Lp(H)\eta.
\end{equation}
{\tb The matrix $\Lp(H)$ is the observation matrix for the autonomous dynamics restricted to the invariant subspace.} Observe that if $A$ is Hurwitz, then trajectories of \eqref{Eq:CascadePrimalTransformed} converge to this invariant subspace, and if a steady-state exists (in the sense of, e.g., \cite{AI-CIB:08}), then $\Lp(H)$ describes the steady-state observation matrix relating $y$ to the state $\eta$ of the driving system. This scenario is the motivation behind our nomenclature \emph{steady-state} cascade operator.

Consider now the reverse cascaded system $\Sigma \to \Sigma^{\prime}$ in which $\Sigma$ drives $\Sigma^{\prime}$ with $v= y$, and the input $u$ is to be manipulated, as shown in Figure \ref{Fig:Cascade2}.
\begin{figure}[ht!]
\begin{center}
    \begin{tikzpicture}[auto, scale = 0.6, node distance=2cm,>=latex', every node/.style={scale=0.85}]
      \tikzstyle{anch} = [coordinate]
      \tikzstyle{block} = [draw, fill=white, rectangle, 
      minimum height=3em, minimum width=6em, blur shadow={shadow blur steps=5}]
      \tikzstyle{wideblock} = [draw, fill=white, rectangle, 
      minimum height=3em, minimum width=9em, blur shadow={shadow blur steps=5}]
      \tikzstyle{hold} = [draw, fill=white, rectangle, 
      minimum height=3em, minimum width=3em, blur shadow={shadow blur steps=5}]
      \tikzstyle{dzblock} = [draw, fill=white, rectangle, minimum height=3em, minimum width=4em, blur shadow={shadow blur steps=5},
      path picture = {
        \draw[thin, black] ([yshift=-0.1cm]path picture bounding box.north) -- ([yshift=0.1cm]path picture bounding box.south);
        \draw[thin, black] ([xshift=-0.1cm]path picture bounding box.east) -- ([xshift=0.1cm]path picture bounding box.west);
        \draw[very thick, black] ([xshift=-0.5cm]path picture bounding box.east) -- ([xshift=0.5cm]path picture bounding box.west);
        \draw[very thick, black] ([xshift=-0.5cm]path picture bounding box.east) -- ([xshift=-0.1cm, yshift=+0.4cm]path picture bounding box.east);
        \draw[very thick, black] ([xshift=+0.5cm]path picture bounding box.west) -- ([xshift=+0.1cm, yshift=-0.4cm]path picture bounding box.west);
      }]
      \tikzstyle{sum} = [draw, fill=white, circle, node distance=1cm, blur shadow={shadow blur steps=8}]
      \tikzstyle{input} = [coordinate]
      \tikzstyle{output} = [coordinate]
      \tikzstyle{split} = [coordinate]
      \tikzstyle{pinstyle} = [pin edge={to-,thin,black}]
      \node [input,name=input] {};
      \node [block, right of=input, node distance=2.2cm] (system) 
      {
      \makecell[c]{$
                \begin{aligned}
                \dot{x} &= Ax + Bu\\
                        y &= Cx + Du
               \end{aligned}$
                 }
     };
      \node[block, right of=system, node distance=3.5cm] (opt) {\makecell[c]{$
    \begin{aligned}
        \dot{\eta} &= F\eta + Gv\\
        z &= H\eta + Jv    
    \end{aligned}$}};
      \draw [thick, -latex] (input) -- node[above, name=u, pos=.2] {$u$} (system);
      \draw [thick,-latex] (system) -- node[name=z] {$v=y$} (opt);
      \node [output, right of = opt, node distance=2cm] (output) {};
      \draw [thick, -latex] (opt) -- node[name=eta, pos=0.5] {$z$} (output);

%
    \end{tikzpicture}
\caption{The cascade interconnection $\Sigma \to \Sigma^{\prime}$.}
\label{Fig:Cascade2}
\end{center}
\end{figure}

The equations describing the interconnection are
\begin{equation}\label{Eq:CascadeDual}
\begin{aligned}
\begin{bmatrix}\dot{x} \\ \dot{\eta}\end{bmatrix} = \begin{bmatrix}A & 0\\
GC & F
\end{bmatrix}\begin{bmatrix}x \\ \eta\end{bmatrix} + \begin{bmatrix}B \\ GD\end{bmatrix}u,
\end{aligned}
\end{equation}
where we omit the output $z$ as it will not be of interest. With $u \equiv 0$, the matrix 
$M = \Syld^{-1}(GC)$
defines an invariant subspace 
$\setdef{(x,\eta)}{\eta = Mx}$ 
{\tb for the joint dynamics \eqref{Eq:CascadeDual}}. Defining the error variable $\zeta \define \eta - Mx$, {\tb \eqref{Eq:CascadeDual} becomes}
\begin{equation}\label{Eq:CascadeDualTransformed}
\begin{aligned}
\begin{bmatrix}\dot{x} \\ \dot{\zeta}\end{bmatrix} &= \begin{bmatrix}A & 0\\
0 & F
\end{bmatrix}\begin{bmatrix}x \\ \zeta\end{bmatrix} + \begin{bmatrix}B \\ \Ld(G)\end{bmatrix}u\\
\end{aligned}
\end{equation}
{\tb The matrix $\Ld(G)$ is precisely the input matrix for the dynamics of the deviation variable $\zeta$, and thus describes how the control $u$ impacts the deviation from the invariant subspace. }

The derivations above suggest that $\Lp$ is most naturally associated with an observer design problem, while $\Ld$ is most naturally associated with a controller design problem; {\tb indeed, this will be the case. We will however} see that \textemdash{} under mild additional assumptions \textemdash{} $\Lp$ is nonetheless useful for controller design (Section \ref{Sec:LowGainStabLp}), and $\Ld$ is also useful for observer design (Section \ref{Sec:LowGainEstimatorLd}). 

{\tb
\begin{remark}[\bf Literature on Sylvester Equations]
The equation \eqref{Eq:Sylvester} has a long history of theoretical and numerical study. Among many references, see \cite{VK:74}, \cite[Chapter 6]{ACA:05}, \cite{MK-VM-PP:00}, 
for overviews of solvability properties and solution representations, \cite{EDS-SPB:81} for useful rank and controllability properties of solutions, and \cite{WA-FR-AS:94} for infinite-dimensional operator extensions. Computational methods based on the Schur decomposition for solving \eqref{Eq:Sylvester} are quite mature, and overviews of standard numerical methods can be found in \cite[Chapter 6]{ACA:05}, \cite{VS:16}. \hfill \oprocend
\end{remark}
}

\subsection{System-Theoretic Properties of the SSC Operators}
\label{Sec:SystemTheoreticSSC}

We now provide properties of the SSC operators that will be subsequently leveraged to develop different design pathways for a variety of system control problems. {\tb As motivation for the results}, consider again the cascade of Figure \ref{Fig:Cascade2}, leading to the transformed system \eqref{Eq:CascadeDualTransformed} and in particular to the simple deviation dynamics $\dot{\zeta} = F\zeta + \Ld(G)u$. For controller design purposes, we may now wonder when the pair $(F,\Ld(G))$ is stabilizable or controllable. Alternatively, we may wish to define a matrix $\overbar{G} \in \complex^{\nu \times m}$ such that $(F,\overbar{G})$ is stabilizable or controllable, and then ask whether there exists a (possibly, unique) matrix $G$ such that $\Ld(G) = \overbar{G}$. Analogous questions apply to the cascade of Figure \ref{Fig:Cascade1}.

%
%
The following results address these questions by providing a number of equivalent characterizations for the desired properties. As notation, for $\lambda \in \complex$ let
\begin{equation}\label{Eq:Rosenbrock}
\Ros_{\Sigma}(\lambda) \define \begin{bmatrix}A - \lambda I_{n} & B\\
C & D\end{bmatrix} \in \complex^{(n+p) \times (n+m)}
\end{equation}
be the \emph{Rosenbrock system matrix} associated with $\Sigma = (A,B,C,D)$.

\begin{theorem}[SSC Operators and Right-Invertibility]\label{Thm:SSC}
Suppose that $\eig(A) \cap \eig(F) = \emptyset$ and consider the operators $\Lp$ and $\Ld$ defined in \eqref{Eq:GenL}. The following statements are equivalent:
\begin{enumerate}[(i)]
\item \label{SSCNewItm:1} $\Ros_{\Sigma}(\lambda)$ has full row rank for all $\lambda \in \eig(F)$;
\item \label{SSCNewItm:2} For any pair $(P,Q)$, the system of equations
\begin{equation}\label{Eq:Francis}
\begin{aligned}
\Pi F  &= A\Pi + B\Psi + P\\
0 &= C\Pi + D\Psi + Q
\end{aligned}
\end{equation}
admits a solution $(\Pi,\Psi)$;
\item \label{SSCNewItm:3} For any pair $(\overbar{P},\overbar{Q})$ such that the system of equations
\begin{equation}\label{Eq:DualFrancis}
\begin{aligned}
MA &= FM + GC + \overbar{P}\\
0 &= -MB + GD + \overbar{Q}
\end{aligned}
\end{equation}
admits a solution $(M,G)$, the solution is unique;
\item \label{SSCNewItm:4} $\Lp$ is surjective; 
\item \label{SSCNewItm:5} $\Ld$ is injective.
\end{enumerate}
Moreover,
\begin{enumerate}[(a)]
\item  \label{SSCNewItm:6a} $(F,G)$ controllable and \eqref{SSCNewItm:1} $\Rightarrow$ $(F,\Ld(G))$ controllable,
\item  \label{SSCNewItm:6b} $(F,G)$ stabilizable and $\Ros_{\Sigma}(\lambda)$ full row rank {\tb for all} $\lambda \in \eig(F) \cap \complex_{\geq 0}$ $\Rightarrow$ $(F,\Ld(G))$ stabilizable,
\end{enumerate}
and the converses of \eqref{SSCNewItm:6a}, 
resp. \eqref{SSCNewItm:6b}, holds if $G^{\sf T}\ker(\lambda I_{\nu}-F^{\sf T}) = \complex^p$ for all $\lambda \in \eig(F)$, resp. for all $\lambda \in \eig(F) \cap \complex_{\geq 0}$.
\end{theorem}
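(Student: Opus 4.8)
The plan is to prove the five equivalences through the route (ii)$\Leftrightarrow$(iv), (iii)$\Leftrightarrow$(v), (iv)$\Leftrightarrow$(v), (v)$\Leftrightarrow$(i), and then to dispatch the controllability/stabilizability claims (a)--(b) and their converses by a single Popov--Belevitch--Hautus (PBH) computation. The reductions (ii)$\Leftrightarrow$(iv) and (iii)$\Leftrightarrow$(v) I would obtain by direct elimination using invertibility of $\Sylp$ and $\Syld$. In \eqref{Eq:Francis}, solving the first equation as $\Pi=\Sylp^{-1}(B\Psi+P)$ and substituting into the second collapses solvability for arbitrary $(P,Q)$ to solvability of $\Lp(\Psi)=-Q-C\Sylp^{-1}(P)$ for an arbitrary right-hand side, i.e.\ to surjectivity of $\Lp$. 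Dually, the homogeneous version of \eqref{Eq:DualFrancis} forces $M=\Syld^{-1}(GC)$ and then $\Ld(G)=0$, so the map $(M,G)\mapsto(MA-FM-GC,\,-MB+GD)$ is injective exactly when $\Ld$ is, which is precisely the uniqueness statement (iii).

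For (iv)$\Leftrightarrow$(v) I would invoke a transpose/adjoint duality. Equipping each matrix space with the pairing $\langle X,Y\rangle=\operatorname{tr}(X^{\sf T}Y)$, surjectivity of $\Lp$ is equivalent to injectivity of its adjoint $\Lp^{\sharp}$. A short calculation, using that the adjoint of $\Sylp$ is $Z\mapsto ZF^{\sf T}-A^{\sf T}Z$, yields $\Lp^{\sharp}(Y)=B^{\sf T}Z+D^{\sf T}Y$ with $Z$ determined by $ZF^{\sf T}-A^{\sf T}Z=C^{\sf T}Y$. Transposing these two relations and setting $(N,W)=(Z^{\sf T},-Y^{\sf T})$ identifies $\ker\Lp^{\sharp}$ with $\{(N,W):NA-FN=WC,\ NB=WD\}$, which projects bijectively onto $\ker\Ld$. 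Hence $\Lp^{\sharp}$ is injective iff $\Ld$ is, giving (iv)$\Leftrightarrow$(v).

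The substantive step is (v)$\Leftrightarrow$(i), which I would prove by PBH built on the identity
\[
[\,M\ \ -G\,]\,\Ros_{\Sigma}(\lambda)=[\,(F-\lambda I_\nu)M\ \ 0\,],
\]
valid for every $(M,G)$ with $MA-FM=GC$, $MB=GD$ and every $\lambda$. For $\neg$(i)$\Rightarrow\neg$(v): given a nonzero left annihilator $[\,a\ \ b\,]$ of $\Ros_{\Sigma}(\lambda_0)$ with $\lambda_0\in\eig(F)$ and a right eigenvector $F\phi=\lambda_0\phi$, the pair $(M,G)=(\phi a,-\phi b)$ lies in $\ker\Ld$, and $\eig(A)\cap\eig(F)=\emptyset$ forces $b\neq0$, so $G\neq0$ and $\Ld$ is not injective. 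For (i)$\Rightarrow$(v), fix $(M,G)\in\ker\Ld$ and a generalized left eigenvector with $w^{\sf T}(F-\lambda_0 I_\nu)^{k}=0$; writing $a_j=w^{\sf T}(F-\lambda_0 I_\nu)^{j}M$ and $b_j=w^{\sf T}(F-\lambda_0 I_\nu)^{j}G$, the identity gives the recursion $[\,a_j\ \ -b_j\,]\,\Ros_{\Sigma}(\lambda_0)=[\,a_{j+1}\ \ 0\,]$. Since $a_k=0$, a descending induction using full row rank of $\Ros_{\Sigma}(\lambda_0)$ forces $a_j=b_j=0$ for all $j$, hence $w^{\sf T}M=w^{\sf T}G=0$; ranging over a basis of generalized left eigenvectors of $F$ yields $M=G=0$. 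This Jordan-chain bookkeeping is the main obstacle: an ordinary-eigenvector PBH test only controls $w^{\sf T}M$ and $w^{\sf T}G$ at the bottom of each chain, and the recursion above is exactly what propagates the vanishing up the chain.

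Finally, (a)--(b) and their converses I would settle by PBH applied to $(F,\Ld(G))$. If $w^{\sf T}F=\lambda_0 w^{\sf T}$ ($w\neq0$) satisfies $w^{\sf T}\Ld(G)=0$, then, with $\mu^{\sf T}=w^{\sf T}M$, the relations $MA-FM=GC$ and $\Ld(G)=-MB+GD$ give $[\,\mu^{\sf T}\ \ -w^{\sf T}G\,]\,\Ros_{\Sigma}(\lambda_0)=0$. Under (i) --- taken over all $\lambda_0\in\eig(F)$, or only over $\eig(F)\cap\complex_{\geq 0}$ for the stabilizability statement --- this forces $w^{\sf T}G=0$, contradicting controllability (resp.\ stabilizability) of $(F,G)$, which proves (a) and (b). For the converses I would argue by contrapositive: starting from a nonzero left annihilator $[\,a\ \ b\,]$ of $\Ros_{\Sigma}(\lambda_0)$ (so $b\neq0$), the surjectivity hypothesis $G^{\sf T}\ker(\lambda_0 I_\nu-F^{\sf T})=\complex^p$ lets me pick an eigenvector $w^{\sf T}$ of $F$ with $w^{\sf T}G=-b$; the same computation then returns $\mu^{\sf T}=a$ (using invertibility of $A-\lambda_0 I_n$) and $w^{\sf T}\Ld(G)=-(aB+bD)=0$, exhibiting an uncontrollable (resp.\ non-stabilizable) mode of $(F,\Ld(G))$. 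Since $\eig(A)\cap\eig(F)=\emptyset$ already guarantees that controllability/stabilizability of $(F,\Ld(G))$ implies that of $(F,G)$, this establishes the stated converses.
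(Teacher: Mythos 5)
Your proof is correct, but it follows a genuinely different route from the paper's. The paper funnels all five equivalences through (i): it obtains (i)$\Leftrightarrow$(ii) and (i)$\Leftrightarrow$(iii) from a general solvability result for Hautus-type operators (Theorem \ref{Thm:HautusExtended} in Appendix \ref{Sec:Hautus}, whose surjectivity half is the classical \cite[Theorem 9.6]{HLT-AS-MH:01}), proves (i)$\Leftrightarrow$(iv) by combining the trivial instance $(P,Q)=(0,\overbar{H})$ of (ii) with an eigenvector-based counterexample for necessity, and proves (iii)$\Leftrightarrow$(v) by the same elimination you use; it never compares (iv) and (v) directly. You instead establish (ii)$\Leftrightarrow$(iv) and (iii)$\Leftrightarrow$(v) by elimination through $\Sylp^{-1}$ and $\Syld^{-1}$, tie (iv) to (v) by a trace-pairing adjoint computation, and then carry the full weight of the theorem in the single equivalence (v)$\Leftrightarrow$(i), proved from scratch by a PBH argument. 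The price of bypassing the classical Hautus theorem is exactly the Jordan-chain bookkeeping you flag: ordinary left eigenvectors of $F$ need not span $\complex^{\nu}$, so you must propagate the vanishing of $w^{\sf T}M$ and $w^{\sf T}G$ up each chain via the recursion $[\, a_j \;\; -b_j\,]\Ros_{\Sigma}(\lambda_0) = [\, a_{j+1} \;\; 0\,]$; this is, in effect, a self-contained re-derivation of the content the paper imports by citation. What your route buys is self-containedness and an explicit primal/dual symmetry (your duality step is the analogue, applied directly to $\Lp$ and $\Ld$, of the adjoint computation the paper performs abstractly inside Appendix \ref{Sec:Hautus}); what the paper's route buys is modularity, since Theorem \ref{Thm:HautusExtended} is reused verbatim to prove Theorem \ref{Thm:SSC2}, whereas your argument would have to be dualized by hand.

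For (a), (b) and their converses your computation coincides with the paper's: your identity $[\,\mu^{\sf T} \;\; -w^{\sf T}G\,]\Ros_{\Sigma}(\lambda_0) = [\,0 \;\; -w^{\sf T}\Ld(G)\,]$ is \eqref{Eq:TempRewritten2} up to sign conventions, and your converse construction (solve $w^{\sf T}G=-b$ inside the left eigenspace using $G^{\sf T}\ker(\lambda_0 I_{\nu}-F^{\sf T})=\complex^p$, then recover $\mu^{\sf T}=a$ from invertibility of $A-\lambda_0 I_n$) is the paper's second case. The one soft spot is your closing sentence, which asserts without proof that controllability (resp. stabilizability) of $(F,\Ld(G))$ implies that of $(F,G)$. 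The claim is true and genuinely needed (it is the other half of the contraposition), but it deserves its one-line proof, which is exactly the paper's first case: if $w^{\sf T}G=0$ for a left eigenvector $w$ of $F$ at $\lambda_0$, the first block of your identity gives $\mu^{\sf T}(A-\lambda_0 I_n)=0$, hence $\mu=0$ since $\lambda_0\notin\eig(A)$, and the second block then yields $w^{\sf T}\Ld(G)=0$. Spell that out and your proof is complete.
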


A few comments are in order before proceeding to the proof. If $\hat{\Sigma}(s)$ denotes the transfer matrix from \eqref{eq:TF_definition_Sigma}, then
\[
\begin{bmatrix}-I_{n} & 0\\
C(\lambda I_{n}-A)^{-1} & I_p\end{bmatrix}\Ros_{\Sigma}(\lambda) = \begin{bmatrix}\lambda I_{n}-A & -B\\
0 & \hat{\Sigma}(\lambda)\end{bmatrix},
\]
since $\lambda \notin \mathrm{eig}(A)$ by assumption, from which it follows that \eqref{SSCNewItm:1} is equivalent to $\hat{\Sigma}(\lambda)$ having full row rank for all $\lambda \in \eig(F)$. Such a transfer matrix has rank $p$ for almost all $\lambda \in \complex$, and is called \emph{right invertible}, hence the theorem title. Item \eqref{SSCNewItm:2} is existence (but not uniqueness) of a solution to the traditional Francis regulator equations \cite{BAF:77}, while \eqref{SSCNewItm:3} is uniqueness (but not existence) of a solution to a natural dual set of equations. Items \eqref{SSCNewItm:4} and \eqref{SSCNewItm:5} provide corresponding statements regarding solvability of the linear operator equations $\Lp(H) = \overbar{H}$ and $\Ld(G) = \overbar{G}$, where $\overbar{H} \in \complex^{p \times \nu}$ and $\overbar{G} \in \complex^{\nu \times m}$. Regarding the final set of statements, if it is assumed at the outset that $(F,G)$ is controllable and that $G^{\sf T}\ker(\lambda I_{\nu}-F^{\sf T}) = {\tb \complex^p}$ for all $\lambda \in \eig(F)$, then controllability of $(F,\Ld(G))$ is \emph{equivalent} to statements \eqref{SSCNewItm:1}--\eqref{SSCNewItm:5}. 

{\tb While the equivalence \eqref{SSCNewItm:1} $\Longleftrightarrow$ \eqref{SSCNewItm:2} is classical, the equivalences to and between \eqref{SSCNewItm:3}, \eqref{SSCNewItm:4}, \eqref{SSCNewItm:5} are new contributions. A special version of the equivalence $(F,\Ld(G))$ controllable $\Leftrightarrow$ \eqref{SSCNewItm:1} $\Leftrightarrow$ \eqref{SSCNewItm:2} was presented in \cite[Proposition 2]{DA-LP-LM:22b}, but Theorem \ref{Thm:SSC} removes unnecessary assumptions on $(A,B)$ and $F$, and simplifies the proof; the statement in (b) and its converse are also new.}

\begin{remark}[\bf Cascade Controllability]
Controllability of LTI cascades is a classical problem, and was originally addressed in \cite{EJD-SW:75} via application of the Popov-Belevitch-Hautus (PBH) test. In particular, the cascaded system of Figure \ref{Fig:Cascade2} is controllable if and only if $(A,B)$ is controllable and for all $\lambda \in \eig(F)$ it holds that
\begin{equation}\label{Eq:DavisonPBH}
\mathrm{rank} \begin{bmatrix}
A - \lambda I_{n} & 0 & B\\
GC & F-\lambda I_{\nu} & GD
\end{bmatrix} = n+\nu.
\end{equation}
This if and only if condition depends on a mixture of data from the systems $\Sigma$ and $\Sigma^{\prime}$, and is difficult to generalize beyond the LTI case. On the other hand, the condition $(F,G)$ controllable along with any of \eqref{SSCNewItm:1}--\eqref{SSCNewItm:5} are together sufficient for \eqref{Eq:DavisonPBH}, and these slightly stronger formulations admit useful nonlinear extensions (Section \ref{Sec:Nonlinear})
and extensions to infinite-dimensional systems (see, e.g., \cite{VN:21}). 
\hfill \oprocend
\end{remark}

\begin{remark}[\bf Meaning of $\boldsymbol{G^{\sf T}\ker(\lambda I_{\nu}-F^{\sf T})} = \boldsymbol{\complex^p}$]
The condition $G^{\sf T}\ker(\lambda I_N-F^{\sf T}) = \complex^p$ for all $\lambda \in \eig(F)$ implies that $G$ has full column rank and that all eigenspaces of $F$ are at least $p$-dimensional. In fact, if all eigenspaces are exactly $p$-dimensional, then the condition implies that $(F,G)$ is controllable. This situation occurs, for instance, in linear output regulation design (see, e.g., \cite[Chapter 4]{AI:17}), wherein $(F,G) = (F_{\rm im} \otimes I_p,g_{\rm im} \otimes I_p)$ is a $p$-copy internal model for a single-input controllable pair $(F_{\rm im},g_{\rm im}) \in \real^{q \times q} \times \real^{q}$ with $q \in \integer_{\geq 1}$. \hfill \oprocend
\end{remark}

\begin{proof}
\eqref{SSCNewItm:1} $\Longleftrightarrow$ \eqref{SSCNewItm:2}: This result is classical (see, e.g. \cite[Theorem 9.6]{HLT-AS-MH:01}), and follows immediately by applying the surjectivity statement of Theorem \ref{Thm:HautusExtended} \eqref{Itm:HautusP} given in Appendix~\ref{Sec:Hautus} with $R_1 = \left[\begin{smallmatrix}-A & -B\\ C &D\end{smallmatrix}\right]$, $R_2 = \left[\begin{smallmatrix}I & 0\\ 0 &0\end{smallmatrix}\right]$, $X=\left[\begin{smallmatrix}\Pi\\ \Psi\end{smallmatrix}\right]$, $q_1(s) = 1$, and $q_2(s) = s$. 

\eqref{SSCNewItm:1} $\Longleftrightarrow$ \eqref{SSCNewItm:3}: Apply the injectivity statement of Theorem \ref{Thm:HautusExtended} \eqref{Itm:HautusD} 
given in Appendix~\ref{Sec:Hautus}
with the same selections of $R_1, R_2, q_1, q_2$ as in the above equivalence and $Y=\left[\begin{smallmatrix}M & G\end{smallmatrix}\right]$. 

\eqref{SSCNewItm:1} $\Longleftrightarrow$ \eqref{SSCNewItm:4}: By definition from \eqref{Eq:GenSyl1}, \eqref{Eq:GenL1}, $\Lp$ is surjective if for any $\overbar{H} \in \complex^{p \times \nu}$ there exists a solution $(\Pi,H)$ to $\Pi F = A\Pi + BH$ and $\overbar{H} = C\Pi + DH$ which we write together as
\begin{equation}\label{Eq:LpHautus}
\begin{bmatrix}0 \\ \overbar{H}\end{bmatrix} = \begin{bmatrix}
A & B\\
C & D
\end{bmatrix}\begin{bmatrix}
\Pi \\ H
\end{bmatrix} + \begin{bmatrix}
-I & 0\\
0 & 0
\end{bmatrix}\begin{bmatrix}
\Pi \\ H
\end{bmatrix}F.
\end{equation}
These equations are of course an instance of \eqref{SSCNewItm:2} with $(P,Q) = (0,\overbar{H})$, and thus \eqref{SSCNewItm:1} is certainly sufficient for solvability. For necessity, proceeding by contraposition, suppose that $\rank\,\Ros_{\Sigma}(\lambda) < n + p$ for some $\lambda \in \eig(F)$. Let $\varphi \in \complex^{\nu}$ be any non-zero vector such that $F\varphi = \lambda \varphi$, and let $w = \mathrm{col}(w_1,w_2) \in \complex^{n+p}$ be any non-zero vector such that $w^{\sf T}\Ros_{\Sigma}(\lambda) = 0$; the latter equations read as
\begin{equation}\label{Eq:RightInvertibleFails}
\begin{bmatrix}w_1^{\sf T} & w_2^{\sf T}\end{bmatrix}\begin{bmatrix}A-\lambda I_n\\ C\end{bmatrix} = 0, \qquad \begin{bmatrix}w_1^{\sf T} & w_2^{\sf T}\end{bmatrix}\begin{bmatrix}B \\ D\end{bmatrix}  = 0.
\end{equation}
Note that $w_2 \neq 0$; indeed, if $w_2 = 0$ then necessarily $w_1 \neq 0$, and the above relations imply that $w_1^{\sf T}A = \lambda w_1^{\sf T}$, so $\lambda \in \eig(A)$, which would contradict the assumption that $\eig(A) \cap \eig(F) = \emptyset$. Left and right-multiplying \eqref{Eq:LpHautus} by $w^{\sf T}$ and $\varphi$ and using the above relations we find that
\[
w_2^{\sf T}\overbar{H}\varphi = w^{\sf T}\Ros_{\Sigma}(\lambda)\begin{bmatrix}\Pi \\ H\end{bmatrix}\varphi = 0.
\]
Thus the equations \eqref{Eq:LpHautus} are insolvable for at least the particular choice $\overbar{H} = \mathrm{conj}(w_2)\varphi^* \neq 0$, since then $w_2^{\sf T}\overbar{H}\varphi = \|w\|_2^2\|\varphi\|_2^2 \neq 0$; this shows $\Lp$ is not surjective. 


\eqref{SSCNewItm:3} $\Longleftrightarrow$ \eqref{SSCNewItm:5}: For the forward direction, by definition $\Ld$ is injective if the only solution to $\Ld(G) = 0$ is $G = 0$, or equivalently, if the only solution to the equations
\begin{equation}\label{Eq:MG}
MA = FM + GC, \qquad 0 = -MB + GD
\end{equation}
is $(M,G) = (0,0)$. 
These equations are an instance of \eqref{SSCNewItm:3} with $(\overbar{P},\overbar{Q}) = (0,0)$ and $(0,0)$ is clearly a solution in this case, so injectivity follows. For the converse, if \eqref{SSCNewItm:3} fails then (by linearity) there exists a solution $(M,G) \neq (0,0)$ to \eqref{Eq:MG}. Moreover, this solution must satisfy $G \neq 0$, since if $G = 0$, the first of \eqref{Eq:MG} implies that $M = 0$ since $\eig(A) \cap \eig(F) = \emptyset$. In other words, we have found a non-zero $G$ such that $\Ld(G) = 0$, so $\Ld$ is not injective. 


Statements~(\ref{SSCNewItm:6a}) and its converse (and analogously statement (\ref{SSCNewItm:6b}) and its converse) will follow from the next statement we will prove: an eigenvalue $\lambda \in \eig(F)$ is controllable for the pair $(F,\Ld(G))$ if $\lambda$ is controllable for the pair $(F,G)$ and $\Ros_{\Sigma}(\lambda)$ has full row rank, and under the additional assumption that $G^{\sf T}\ker(\lambda I_{\nu}-F^{\sf T}) = \complex^p$, these two conditions are necessary. Let $G$ be given and select $\lambda \in \eig(F)$ with $\varphi \in \complex^{\nu}$ a left-eigenvector of $F$ associated with $\lambda$. With $M$ as defined in \eqref{Eq:GenSyl2}, left-multiplying \eqref{Eq:GenSyl2} by $\varphi^{\sf T}$ we obtain
\begin{equation}\label{Eq:TempRewritten1}
\varphi^{\sf T}MA - \lambda \varphi^{\sf T}M  =  \varphi^{\sf T}GC.
\end{equation}
Similarly, we have that $\varphi^{\sf T}\Ld(G) = \varphi^{\sf T}(-MB+GD)$. Grouping this equation with \eqref{Eq:TempRewritten1}, we have
\begin{equation}\label{Eq:TempRewritten2}
\begin{bmatrix}-\varphi^{\sf T}M & \varphi^{\sf T}G\end{bmatrix}\underbrace{\begin{bmatrix}A-\lambda I_n & B\\ C & D\end{bmatrix}}_{=\Ros_{\Sigma}(\lambda)} = \begin{bmatrix}0 & \varphi^{\sf T}\Ld(G)\end{bmatrix}.
\end{equation}
If $\lambda$ is controllable for $(F,G)$ and $\Ros_{\Sigma}(\lambda)$ has full row rank, then $\varphi^{\sf T}G \neq 0$ and the left-hand side of \eqref{Eq:TempRewritten2} cannot be zero, so we conclude that $\varphi^{\sf T}\Ld(G) \neq 0$; since $\varphi \in \ker(\lambda I_{\nu}-F^{\sf T})$ was arbitrary, controllability of $\lambda$ for the pair $(F,\Ld(G))$ follows from the eigenvector test. 

For the converse, assume now that $G^{\sf T}\ker(\lambda I_{\nu}-F^{\sf T}) = \complex^p$, and observe that since $G \in \complex^{\nu \times p}$, this condition implies that $G$ has full column rank and that $\dim \ker (\lambda I_{\nu}-F^{\sf T}) \geq p$. Again let $\varphi \in \complex^{\nu}$ be a left-eigenvector of $F$ associated with $\lambda$, leading by identical steps to \eqref{Eq:TempRewritten2}. We now proceed by contraposition. First, if $\lambda$ was uncontrollable for $(F,G)$, then we may take $\varphi$ in \eqref{Eq:TempRewritten2} such that $\varphi^{\sf T}G = 0$, and \eqref{Eq:TempRewritten2} then implies that $\varphi^{\sf T}\Ld(G) = 0$, which shows $\lambda$ is uncontrollable for $(F,\Ld(G))$. For the other case, suppose instead that $\lambda$ is controllable for $(F,G)$, but that $\Ros_{\Sigma}(\lambda)$ does not have full row rank. Then there must exist a non-zero vector $w = \mathrm{col}(w_1,w_2)$ such that $(w_1^{\sf T},w_2^{\sf T})\Ros_{\Sigma}(\lambda) = (0,0)$, which is written out previously in \eqref{Eq:RightInvertibleFails}. By arguments identical to those following \eqref{Eq:RightInvertibleFails}, it must be that $w_2 \neq 0$. Moreover, given $w_2$, since $\lambda \notin \eig(A)$, it follows from the first of \eqref{Eq:RightInvertibleFails} that $w_1$ is uniquely specified by $w_1^{\sf T} = w_2^{\sf T}C(\lambda I_n-A)^{-1}$. Consider now the feasibility of the linear equation $\varphi^{\sf T}G = w_2^{\sf T}$ in the unknown $\varphi \in \ker (\lambda I_{\nu}-F^{\sf T})$. Since $\mathrm{rank}(G) = p$ and $\dim \ker (\lambda I_{\nu}-F^{\sf T}) \geq p$, there must exist a choice of $\varphi \in \ker (\lambda I_{\nu}-F^{\sf T})$ such that $\varphi^{\sf T}G = w_2^{\sf T}$. Selecting this $\varphi$ in \eqref{Eq:TempRewritten2} yields $w_1^{\sf T}= \varphi^{\sf T}M$ from the first equation and, consequently, the second equation becomes $w_1^{\sf T} B + w_2^{\sf T}D=\varphi^{\sf T}\Ld(G)$. The second equation in \eqref{Eq:RightInvertibleFails} establishes that $\varphi^{\sf T}\Ld(G) = 0$, and thus $\lambda$ is uncontrollable for the pair $(F,\Ld(G))$; this completes the converse proof.
\end{proof}

\smallskip

The next result is dual in a very clear sense to Theorem \ref{Thm:SSC}; all proofs follow similar lines and are omitted. {\tb While the equivalence \eqref{SSC2NewItm:1} $\Leftrightarrow$ \eqref{SSC2NewItm:2} is certainly known (although perhaps not stated in this fashion) the remaining equivalences and statements are new.}

\begin{theorem}[SSC Operators and Left-Invertibility]\label{Thm:SSC2}
Suppose that $\eig(A) \cap \eig(F) = \emptyset$ and consider the operators $\Lp$ and $\Ld$ defined in \eqref{Eq:GenL}. The following statements are equivalent:
\begin{enumerate}[(i)]
\item \label{SSC2NewItm:1} $\Ros_{\Sigma}(\lambda)$ has full column rank for all $\lambda \in \eig(F)$;
\item \label{SSC2NewItm:2} For any pair $(P,Q)$ such that the system of equations \eqref{Eq:Francis} admits a solution $(\Pi,\Psi)$, the solution is unique;
\item \label{SSC2NewItm:3} For any pair $(\overbar{P},\overbar{Q})$ the system of equations \eqref{Eq:DualFrancis} admits a solution $(M,G)$;
\item \label{SSC2NewItm:4} $\Lp$ is injective; 
\item \label{SSC2NewItm:5} $\Ld$ is surjective;
\end{enumerate}
Moreover,
\begin{enumerate}[(a)]
\item  \label{SSC2NewItm:6a} $(F,H)$ observable and \eqref{SSC2NewItm:1} $\Rightarrow$ $(F,\Lp(H))$ observable,
\item  \label{SSC2NewItm:6b} $(F,H)$ detectable and $\Ros_{\Sigma}(\lambda)$ full column rank for all $\lambda \in \eig(F) \cap \complex_{\geq 0}$ $\Rightarrow$ $(F,\Lp(H))$ detectable,
\end{enumerate}
and the converses of \eqref{SSC2NewItm:6a} and \eqref{SSC2NewItm:6b} hold if $H\ker(\lambda I_{\nu}-F) = \complex^m$ for all $\lambda \in \eig(F)$ (resp. for all $\lambda \in \eig(F) \cap \complex_{\geq 0}$).
\end{theorem}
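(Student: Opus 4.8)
The plan is to obtain Theorem \ref{Thm:SSC2} from Theorem \ref{Thm:SSC} by transposition, which is the precise sense in which the two results are dual. I would introduce the transposed data $\tilde\Sigma \define (A^{\sf T}, C^{\sf T}, B^{\sf T}, D^{\sf T})$ and $\tilde F \define F^{\sf T}$, so that the input/output dimensions $m$ and $p$ interchange. Two elementary observations set the stage. First, $\eig(\tilde A) \cap \eig(\tilde F) = \eig(A) \cap \eig(F) = \emptyset$, so the standing hypothesis of Theorem \ref{Thm:SSC} holds for the transposed data. Second, $\Ros_{\tilde\Sigma}(\lambda) = \Ros_{\Sigma}(\lambda)^{\sf T}$ for every $\lambda$, so $\Ros_{\tilde\Sigma}(\lambda)$ has full row rank for all $\lambda \in \eig(\tilde F)$ if and only if $\Ros_{\Sigma}(\lambda)$ has full column rank for all $\lambda \in \eig(F)$; this identifies statement \eqref{SSC2NewItm:1} with statement \eqref{SSCNewItm:1} of Theorem \ref{Thm:SSC} applied to $(\tilde\Sigma, \tilde F)$.

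Next I would establish the key operator identities linking the SSC operators of $(\Sigma, F)$ to those of $(\tilde\Sigma, \tilde F)$, denoted $\tilde\Lp$ and $\tilde\Ld$. Transposing the defining Sylvester equations and using uniqueness of their solutions (guaranteed by the disjoint-spectra assumption) yields, after tracking signs, $\Ld(G)^{\sf T} = \tilde\Lp(G^{\sf T})$ and $\Lp(H)^{\sf T} = \tilde\Ld(H^{\sf T})$. Because $X \mapsto X^{\sf T}$ is a linear bijection, these identities show at once that $\Ld$ is surjective iff $\tilde\Lp$ is surjective, and that $\Lp$ is injective iff $\tilde\Ld$ is injective. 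Hence statements \eqref{SSC2NewItm:4} and \eqref{SSC2NewItm:5} are exactly statements \eqref{SSCNewItm:5} and \eqref{SSCNewItm:4} of Theorem \ref{Thm:SSC} applied to $(\tilde\Sigma, \tilde F)$. For the regulator-equation statements \eqref{SSC2NewItm:2} and \eqref{SSC2NewItm:3}, transposing the Francis equations \eqref{Eq:Francis} and the dual Francis equations \eqref{Eq:DualFrancis} maps the former system for $(\Sigma, F)$ onto the dual system for $(\tilde\Sigma, \tilde F)$ and vice versa, with the substitution $(\Pi, \Psi) \leftrightarrow (\tilde M^{\sf T}, \tilde G^{\sf T})$ up to benign sign changes absorbed into the free variables and data. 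Since a linear system has a unique solution for given data iff its transpose does, and is solvable for all data iff its transpose is, this swaps existence $\leftrightarrow$ uniqueness and aligns \eqref{SSC2NewItm:2}, \eqref{SSC2NewItm:3} with \eqref{SSCNewItm:3}, \eqref{SSCNewItm:2} of Theorem \ref{Thm:SSC}. Invoking Theorem \ref{Thm:SSC} for $(\tilde\Sigma, \tilde F)$ then delivers the equivalence of \eqref{SSC2NewItm:1}--\eqref{SSC2NewItm:5}.

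For the observability statements, I would use the standard fact that a pair $(\Phi, \Gamma)$ is observable (resp. detectable) iff $(\Phi^{\sf T}, \Gamma^{\sf T})$ is controllable (resp. stabilizable), applied eigenvalue-by-eigenvalue. Thus $(F, H)$ observable is equivalent to $(\tilde F, H^{\sf T}) = (F^{\sf T}, H^{\sf T})$ controllable, and $(F, \Lp(H))$ observable is equivalent to $(\tilde F, \tilde\Ld(H^{\sf T}))$ controllable via $\Lp(H)^{\sf T} = \tilde\Ld(H^{\sf T})$. Statement \eqref{SSC2NewItm:6a}, together with its detectability analogue \eqref{SSC2NewItm:6b} in which the rank hypothesis is imposed only on $\eig(F) \cap \complex_{\geq 0}$, is then precisely statement \eqref{SSCNewItm:6a}, resp. \eqref{SSCNewItm:6b}, of Theorem \ref{Thm:SSC} applied to $(\tilde\Sigma, \tilde F)$ with $G$ replaced by $H^{\sf T}$. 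Finally, the converse condition $G^{\sf T}\ker(\lambda I_{\nu} - F^{\sf T}) = \complex^p$ of Theorem \ref{Thm:SSC}, under the substitution $G \to H^{\sf T}$, $F \to F^{\sf T}$, $p \to m$, becomes $H\ker(\lambda I_{\nu} - F) = \complex^m$, which matches the converse hypothesis stated in Theorem \ref{Thm:SSC2}.

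I expect the only delicate points to be bookkeeping: verifying the two operator transpose identities with correct signs (which hinges on uniqueness of the Sylvester solutions), and checking that the quantifier structure ``for any $(P,Q)$ such that \dots admits a solution, the solution is unique'' transposes correctly to ``for any $(\overbar P, \overbar Q)$ \dots admits a solution.'' Neither is conceptually difficult, but both must be done carefully to ensure the existence/uniqueness roles swap exactly as claimed. As an alternative to the duality argument, each item could instead be proved directly by mirroring the proof of Theorem \ref{Thm:SSC}, replacing row ranks, left null vectors, and the surjectivity/controllability arguments by column ranks, right null vectors, and the corresponding injectivity/observability arguments.
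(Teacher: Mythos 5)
Your proposal is correct. For comparison: the paper gives no written proof of Theorem \ref{Thm:SSC2} at all---it only remarks that the result is ``dual in a very clear sense'' to Theorem \ref{Thm:SSC} and that ``all proofs follow similar lines and are omitted,'' i.e., the authors' implied route is your fallback option of mirroring each step (column ranks and right null vectors in place of row ranks and left null vectors, the observability eigenvector test in place of the controllability one). Your primary route---reducing Theorem \ref{Thm:SSC2} to Theorem \ref{Thm:SSC} applied to $(\tilde\Sigma,\tilde F)=((A^{\sf T},C^{\sf T},B^{\sf T},D^{\sf T}),F^{\sf T})$---is genuinely different and holds up. The two operator identities are exactly right: transposing $MA-FM=GC$ shows, by uniqueness of Sylvester solutions, that the primal solution for the transposed data with $\tilde H=G^{\sf T}$ is $\tilde\Pi=-M^{\sf T}$, and this sign cancels against the minus sign built into $\Ld(G)=-MB+GD$, giving $\tilde{\Lp}(G^{\sf T})=\Ld(G)^{\sf T}$; symmetrically $\tilde M=-\Pi^{\sf T}$ gives $\tilde{\Ld}(H^{\sf T})=\Lp(H)^{\sf T}$. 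The same correspondence $(\Pi,\Psi)\leftrightarrow(-\tilde M^{\sf T},\tilde G^{\sf T})$ maps the Francis system \eqref{Eq:Francis} for $(\Sigma,F)$ onto the dual system \eqref{Eq:DualFrancis} for $(\tilde\Sigma,\tilde F)$ and vice versa, so existence and uniqueness swap exactly as you claim, and the converse hypothesis $\tilde G^{\sf T}\ker(\lambda I_\nu-\tilde F^{\sf T})=\complex^{\tilde p}$ of Theorem \ref{Thm:SSC} indeed becomes $H\ker(\lambda I_\nu - F)=\complex^m$ under $\tilde G = H^{\sf T}$, $\tilde p = m$. One point worth making explicit in a final write-up: use the plain transpose rather than the conjugate transpose, so that $\eig(F^{\sf T})=\eig(F)$ without conjugation, rank conditions transpose verbatim, and the $\complex_{\geq 0}$ restrictions in statement (b) line up with no extra argument. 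What your reduction buys relative to the paper's implied mirroring is economy and rigor: nothing from the proof of Theorem \ref{Thm:SSC} needs to be repeated, and any later strengthening of that theorem transfers automatically; the price is precisely the sign and quantifier bookkeeping you flagged, all of which checks out.
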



For completeness, we note that stronger statements still can be made in the case where the system $\Sigma$ is square (i.e., when $p = m$).

\begin{corollary}[SSC Operators and Invertibility]\label{Cor:SSC3}
Suppose that $\eig(A) \cap \eig(F) = \emptyset$ and consider the operators $\Lp$ and $\Ld$ defined in \eqref{Eq:GenL}. If $p = m$ then the following statements are equivalent:
\begin{enumerate}[(i)]
\item \label{SSC3Itm:1} $\det\,\Ros_{\Sigma}(\lambda) \neq 0$ for all $\lambda \in \eig(F)$;
\item \label{SSC3Itm:2} for any pair $(P,Q)$ (resp. $(\overbar{P},\overbar{Q})$) the system of equations \eqref{Eq:Francis} (resp. \eqref{Eq:DualFrancis}) admits a unique solution $(\Pi,\Psi)$ (resp. $(M,G)$);
\item \label{SSC3Itm:3} $\Lp$ and $\Ld$ are invertible.
\end{enumerate}
The controllability, observability, stabilizability, and detectability statements (and their converses) of Theorems \ref{Thm:SSC} and \ref{Thm:SSC2} continue to hold as stated.
\end{corollary}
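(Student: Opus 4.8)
The plan is to exploit the single structural fact that distinguishes this corollary from Theorems \ref{Thm:SSC} and \ref{Thm:SSC2}: when $p = m$, the Rosenbrock matrix $\Ros_{\Sigma}(\lambda) \in \complex^{(n+p)\times(n+p)}$ defined in \eqref{Eq:Rosenbrock} is square, so the notions ``full row rank'', ``full column rank'', and ``nonsingular'' all coincide. Consequently, statement \eqref{SSC3Itm:1}, namely $\det\,\Ros_{\Sigma}(\lambda) \neq 0$ for all $\lambda \in \eig(F)$, is simultaneously identical to hypothesis \eqref{SSCNewItm:1} of Theorem \ref{Thm:SSC} and to hypothesis \eqref{SSC2NewItm:1} of Theorem \ref{Thm:SSC2}. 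This observation is the entire engine of the argument: the corollary is obtained by intersecting the two chains of equivalences already established, and no new technical machinery is needed.

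First I would invoke Theorem \ref{Thm:SSC} under \eqref{SSC3Itm:1}: the Francis equations \eqref{Eq:Francis} are solvable for every $(P,Q)$ (item \eqref{SSCNewItm:2}), the dual equations \eqref{Eq:DualFrancis} have at most one solution (item \eqref{SSCNewItm:3}), $\Lp$ is surjective (item \eqref{SSCNewItm:4}), and $\Ld$ is injective (item \eqref{SSCNewItm:5}). Then I would invoke Theorem \ref{Thm:SSC2} under the same condition: \eqref{Eq:Francis} has at most one solution for each $(P,Q)$ (item \eqref{SSC2NewItm:2}), \eqref{Eq:DualFrancis} is solvable for every $(\overbar{P},\overbar{Q})$ (item \eqref{SSC2NewItm:3}), $\Lp$ is injective (item \eqref{SSC2NewItm:4}), and $\Ld$ is surjective (item \eqref{SSC2NewItm:5}). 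Combining the existence half from one theorem with the uniqueness half from the other yields, for any $(P,Q)$, a unique $(\Pi,\Psi)$ solving \eqref{Eq:Francis}, and for any $(\overbar{P},\overbar{Q})$, a unique $(M,G)$ solving \eqref{Eq:DualFrancis}; this is exactly \eqref{SSC3Itm:2}. Likewise, $\Lp$ being both injective and surjective, and $\Ld$ being both injective and surjective, give \eqref{SSC3Itm:3}.

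To close the cycle I would establish the reverse implications, each of which is free. Uniqueness of the Francis solution is item \eqref{SSC2NewItm:2} of Theorem \ref{Thm:SSC2}, which already returns \eqref{SSC2NewItm:1}, i.e.\ \eqref{SSC3Itm:1}; and invertibility of $\Lp$ implies in particular its surjectivity, which is item \eqref{SSCNewItm:4} of Theorem \ref{Thm:SSC} and hence returns \eqref{SSCNewItm:1}, again \eqref{SSC3Itm:1}. Finally, the controllability, observability, stabilizability, and detectability conclusions carry over verbatim, since their hypotheses in Theorems \ref{Thm:SSC} and \ref{Thm:SSC2} are precisely the row- and column-rank conditions on $\Ros_{\Sigma}(\lambda)$, both of which \eqref{SSC3Itm:1} implies. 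I do not anticipate a genuine obstacle: the only point requiring care is the bookkeeping of matching the ``existence'' half of each property to one theorem and the ``uniqueness'' half to the other (with the roles reversed between the primal and dual equations). As a redundant cross-check on \eqref{SSC3Itm:3}, I would note that when $p = m$ the operators $\Lp$ and $\Ld$ each map between spaces of equal dimension $m\nu = p\nu$, so injectivity and surjectivity are automatically equivalent to invertibility for them.
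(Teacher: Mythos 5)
Your proposal is correct and is precisely the argument the paper intends: the corollary is stated without proof because, once one observes that $p=m$ makes $\Ros_{\Sigma}(\lambda)$ square (so full row rank, full column rank, and nonsingularity coincide), it follows immediately by intersecting the equivalence chains of Theorems \ref{Thm:SSC} and \ref{Thm:SSC2}, exactly as you do. Your closing cross-check that $\Lp$ and $\Ld$ map between spaces of equal dimension when $p=m$ is a nice additional sanity check, though not needed for the argument.
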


\begin{remark}[\bf Effect of State Feedback and Output Injection]\label{Rem:StateFeedbackOutputInjection}
{\tb In problems of cascade stabilization (see Section \ref{Sec:Cascade-Stab}) and estimator design (see Section \ref{Sec:Cascade-Obsv}), it is convenient to enforce} the condition $\eig(A) \cap \eig(F) = \emptyset$ via state feedback and/or output injection applied to the system $\Sigma$, which leads to the transformations $A \rightarrow \mathcal{A} \define A + BK + LC$, $C \rightarrow \mathcal{C} \define C+DK$, and $B \rightarrow \mathcal{B} = B+LD$ for some matrices $K \in \complex^{m \times n}$ and $L\in \complex^{n \times p}$. The rank conditions \eqref{SSCNewItm:1} and \eqref{SSC2NewItm:2} however refer to the transmission zeros of $\Sigma$ which are invariant under these transformations. Put differently, {\tb the conditions of} Theorem \ref{Thm:SSC} \eqref{SSCNewItm:1}, \ref{Thm:SSC2} \eqref{SSC2NewItm:1}, or Corollary \ref{Cor:SSC3} \eqref{SSC3Itm:1} may be checked using either $(A,B,C,D)$ or $(\mathcal{A},\mathcal{B},\mathcal{C},D)$. \hfill \oprocend
\end{remark}

%
%
%
%
%


\subsection{Parameterization via Frequency Response and Moments}
\label{Sec:Moments}

At first glance, the SSC operators \eqref{Eq:GenL} would appear to depend densely on the data $(A,B,C,D)$ of the system $\Sigma$. Surprisingly however, these operators may be parameterized using only the so-called \emph{moments} of $\Sigma$ at the eigenvalues of the system matrix $F$ of $\Sigma^{\prime}$. This and related observations have been exploited for {\tb control design based on sampled frequency response data in
\cite{EJD:76, LP:16, VN:21, LC-JWSP:23d}} and for model order reduction in, e.g., \cite{GS-AA:17, GS-AA:24, MFS-GS-AYP-AP-NVDW:23}.

For convenience, our definition of moment differs slightly from what one typically finds in the literature. For $k \in \integer_{\geq 0}$ we define with 
$\Mom_k$
the \emph{$k$-th moment matrix of $\Sigma$ at $s_0 \in \complex \setminus \mathrm{eig}(A)$} as the complex matrix
\[
\Mom_{k}(s_0) = \frac{(-1)^k}{k!}\frac{\mathrm{d}^k}{\mathrm{d} s^k} \hat{\Sigma}(s)\Bigg|_{s = s_0} \in \complex^{p \times m},
\]
where $\hat{\Sigma}(s)$ is the transfer matrix \eqref{eq:TF_definition_Sigma}. It follows that
\[
\begin{aligned}
\Mom_{0}(s_0) &= C(s_0 I_{n}-A)^{-1}B + D = \hat \Sigma(s_0), &&\\
\Mom_{k}(s_0) &= C(s_0I_{n}-A)^{-(k+1)}B, && k \in \integer_{\geq 1}.
\end{aligned}
\]
{\tb Note that if $s_0 = \j\omega_0$, with $\j$ being the imaginary unit, the $0$-th moment matrix $\Mom_{0}(s_0)$ is simply the frequency response of the system $\Sigma$ evaluated at frequency $\omega = \omega_0$. As such, $0$-th moments essentially describe the steady-state response of a system to a prescribed type of input.}

{\tb The next result, Theorem \ref{Thm:Moments}, shows that SSC operators can be parameterized using moment matrices. To concisely state the result,} we require notation associated with a Jordan decomposition of $F$. Let $\{\lambda_1,\lambda_2\ldots,\lambda_{l}\}$ with $l \leq \nu$ denote the distinct eigenvalues of $F$, with associated algebraic multiplicities $\{m_1,\ldots,m_{l}\}$. Let $F = V\mathcal{J}V^{-1}$ denote a Jordan decomposition of $F$, where $\mathcal{J} = \mathrm{diag}(\mathcal{J}_1,\ldots,\mathcal{J}_{l})$ are the Jordan blocks with $\mathcal{J}_k \in \complex^{m_k \times m_k}$. We may always write $\mathcal{J}_k = \lambda_k I_{m_k} + N_k$, where $N_k \in \complex^{m_k \times m_k}$ is nilpotent. The matrix $V = \left[\begin{smallmatrix}V_1 & V_2 & \cdots & V_{l} & \end{smallmatrix}\right]$ is the transformation matrix with $V_k \in \complex^{\nu \times m_k}$ having full column rank. Partitioning $V^{-1}$ in accordance with $\mathcal{J}$, we may write
\[
V^{-1} = \mathrm{col}(
W_1^*, W_2^*,\ldots,W_{l}^*)
\]
for appropriate matrices $W_k \in \complex^{\nu \times m_k}$. Finally, for each $k \in \{1,\ldots,l\}$,  define the $\nu \times \nu$ matrices
\begin{equation}\label{Eq:Xkj}
X_{k,j} \define V_kN_k^{j}W_k^*, \qquad j \in \{0,\ldots,m_k-1\}.
\end{equation}

\begin{theorem}[\bf SSC Operators and Moments]\label{Thm:Moments}
 If $\eig(A) \cap \eig(F) = \emptyset$, then $\Lp$ and $\Ld$ are well-defined and
\begin{subequations}\label{Eq:LfFreq}
\begin{align}
\label{Eq:LfFreqa}
\Lp(H) &= \sum_{k=1}^{l}\nolimits\sum_{j=0}^{m_k-1}\nolimits(-1)^{j}\Mom_{j}(\lambda_k)HX_{k,j},\\
\label{Eq:LfFreqb}
\Ld(G) &= \sum_{k=1}^{l}\nolimits\sum_{j=0}^{m_k-1}\nolimits (-1)^{j}X_{k,j}G\Mom_j(\lambda_k).
\end{align}
\end{subequations}
\end{theorem}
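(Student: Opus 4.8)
The plan is to compute $\Lp(H)$ in closed form by solving the defining Sylvester equation \eqref{Eq:GenSyl1} explicitly through the Jordan structure of $F$, substituting the result into $\Lp(H) = C\Pi + DH$, and then recognizing the resulting resolvent-power terms as moment matrices; the formula \eqref{Eq:LfFreqb} for $\Ld$ will follow by an entirely parallel computation. Throughout I abbreviate the resolvent as $R_k \define (\lambda_k I_n - A)^{-1}$, which exists for every $k$ because $\eig(A) \cap \eig(F) = \emptyset$.

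First I would diagonalize the right factor of the Sylvester operator. Writing $F = V\mathcal{J}V^{-1}$ and introducing $\tilde{\Pi} \define \Pi V$, equation \eqref{Eq:GenSyl1} becomes $\tilde{\Pi}\mathcal{J} - A\tilde{\Pi} = BHV$ after right-multiplication by $V$. Since $\mathcal{J} = \mathrm{diag}(\mathcal{J}_1,\ldots,\mathcal{J}_l)$ is block diagonal, this decouples column-blockwise: partitioning $\tilde{\Pi} = [\tilde{\Pi}_1 \; \cdots \; \tilde{\Pi}_l]$ with $\tilde{\Pi}_k \define \Pi V_k$ and using $\mathcal{J}_k = \lambda_k I_{m_k} + N_k$, each block satisfies $(\lambda_k I_n - A)\tilde{\Pi}_k + \tilde{\Pi}_k N_k = BHV_k$. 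Rewriting this as the fixed point $\tilde{\Pi}_k = R_k BHV_k - R_k \tilde{\Pi}_k N_k$ and iterating, the recursion terminates after finitely many steps because $N_k^{m_k} = 0$, yielding the finite sum $\tilde{\Pi}_k = \sum_{j=0}^{m_k-1}(-1)^j R_k^{\,j+1} BHV_k N_k^{\,j}$.

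Reassembling $\Pi = \tilde{\Pi}V^{-1} = \sum_{k}\tilde{\Pi}_k W_k^*$ and recalling the definition $X_{k,j} = V_k N_k^{\,j} W_k^*$ from \eqref{Eq:Xkj}, I obtain $\Pi = \sum_{k,j}(-1)^j R_k^{\,j+1} B H X_{k,j}$. Substituting into $\Lp(H) = C\Pi + DH$ and using that $C R_k^{\,j+1} B = C(\lambda_k I_n - A)^{-(j+1)}B = \Mom_j(\lambda_k)$ for $j \geq 1$, while $C R_k B = \Mom_0(\lambda_k) - D$ for $j = 0$, gives the desired sum plus a leftover feedthrough discrepancy $DH - \sum_k D H X_{k,0}$. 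The step I expect to be the crux is verifying that this discrepancy vanishes, which is precisely what folds the feedthrough $D$ into the $j=0$ moment. This works because $X_{k,0} = V_k W_k^*$ and the completeness relation $\sum_k V_k W_k^* = V V^{-1} = I_\nu$ forces $DH(I_\nu - \sum_k X_{k,0}) = 0$, establishing \eqref{Eq:LfFreqa}.

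Finally, for \eqref{Eq:LfFreqb} I would run the analogous argument on \eqref{Eq:GenSyl2}: left-multiplying by $V^{-1}$ and setting $\tilde{M} \define V^{-1}M$ decouples $\tilde{M}A - \mathcal{J}\tilde{M} = V^{-1}GC$ into row blocks $\tilde{M}_k(A - \lambda_k I_n) - N_k \tilde{M}_k = W_k^* G C$, whose nilpotent recursion yields $\tilde{M}_k = \sum_{j=0}^{m_k-1}(-1)^{j+1} N_k^{\,j} W_k^* G C R_k^{\,j+1}$. Reassembling $M = V\tilde{M} = \sum_k V_k \tilde{M}_k$ and inserting into $\Ld(G) = -MB + GD$, the same moment identifications and the same completeness relation $\sum_k X_{k,0} = I_\nu$ (now absorbing the feedthrough on the right) produce \eqref{Eq:LfFreqb}. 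The only genuine subtlety in either case is the bookkeeping of the $j=0$ term against the feedthrough; once that is handled via $\sum_k X_{k,0} = I_\nu$, both identities follow by routine index manipulation.
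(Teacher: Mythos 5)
Your proposal is correct, and it reaches \eqref{Eq:LfFreq} by a genuinely different route than the paper. The paper's proof is complex-analytic: it represents the Sylvester solution as a Cauchy contour integral, $\Pi = \frac{1}{2\pi\j}\int_\gamma (sI_n-A)^{-1}BH(sI_\nu-F)^{-1}\,\mathrm{d}s$, writes $\Lp(H) = \frac{1}{2\pi\j}\int_\gamma \hat{\Sigma}(s)H(sI_\nu-F)^{-1}\,\mathrm{d}s$, and then evaluates by the residue theorem, pairing the Laurent expansion of the resolvent $(sI_\nu-F)^{-1}$ (whose singular coefficients are exactly the $X_{k,j}$) with the Taylor expansion of $\hat{\Sigma}(s)$ at $\lambda_k$ (whose coefficients are the moments). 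Your proof is purely algebraic: Jordan-block decoupling of the Sylvester equation, a nilpotent fixed-point recursion that terminates because $N_k^{m_k}=0$, and direct reassembly. Each approach has its merits. Yours is more elementary (no contour integration, no pole-zero cancellation discussion) and, as a byproduct, produces explicit closed-form expressions for $\Pi$ and $M$ themselves, namely $\Pi = \sum_{k,j}(-1)^j(\lambda_k I_n - A)^{-(j+1)}BHX_{k,j}$ and its dual, which are of independent interest. The paper's route keeps the moment interpretation in the foreground—moments arise naturally as Taylor coefficients of the transfer matrix—and sidesteps blockwise solution of the Sylvester equation, which is why it generalizes cleanly to settings where only frequency-response data is available. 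It is worth noting that your key bookkeeping identity $\sum_k X_{k,0} = VV^{-1} = I_\nu$, which absorbs the feedthrough $D$ into the zeroth moments, is precisely the residue-theoretic identity $\frac{1}{2\pi\j}\int_\gamma(sI_\nu-F)^{-1}\,\mathrm{d}s = I_\nu$ used implicitly by the paper for the same purpose, so the two proofs handle the lone subtle step in equivalent ways.
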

\smallskip

The expression \eqref{Eq:LfFreqa} extends \cite[Theorem 2]{LC-JWSP:23d} to the case where the eigenvalues of $F$ are not simple, and can be shown to be equivalent to the expression in \cite[Equation (7)]{MFS-GS-AYP-AP-NVDW:23}. To our knowledge the expression \eqref{Eq:LfFreqb} is new. The main novelty of Theorem \ref{Thm:Moments} is therefore primarily in the technical proof, and in the symmetrical expressions \eqref{Eq:LfFreq} for $\Lp(H)$ and $\Ld(G)$ in terms of the {\tb moment matrices} of the system $\Sigma$. The main \emph{value} of these expressions is that moments can be obtained directly from input-output experiments on the plant; see, e.g., \cite{EJD:76,GS-AA:17,JM-GS:22} for details. {\tb This allows for a parameterization of the SSC operators based directly on experiments.}

\smallskip

\begin{proof}
We prove the first expression; the proof for the second is nearly identical and thus omitted. The proof combines ideas from \cite[Chp. 6]{ACA:05} and \cite{LC-JWSP:23d}. For $s \in \complex$, add $s\Pi$ to both sides of \eqref{Eq:GenSyl1} to obtain
\[
\Pi(sI_\nu - F) = (s I_n-A)\Pi - BH.
\]
Right multiplying this by $(sI_\nu-F)^{-1}$ and left-multiplying by $(sI_n - A)^{-1}$, we obtain
\[
(sI_n-A)^{-1}\Pi = \Pi (sI_\nu-F)^{-1} - (sI_n-A)^{-1}BH(sI_\nu-F)^{-1}.
\]
Let $\gamma$ be a Cauchy contour in the complex plane which encloses the eigenvalues of $F$ in its interior and excludes the eigenvalues of $A$. It follows by Cauchy's integral theorem and elementary application of the residue theorem that
\[
\int_{\gamma}(sI_n-A)^{-1}\,\mathrm{d} s = 0, \qquad \frac{1}{2\pi \j}\int_{\gamma}(sI_\nu-F)^{-1}\,\mathrm{d}s = I_\nu,
\]
and therefore
\[
\Pi = \frac{1}{2\pi {\j}}\int_{\gamma}(sI_n-A)^{-1}BH(sI_{\nu}-F)^{-1}\,\mathrm{d} s.
\]
It now follows by definition of $\Lp(H)$ that
\[
\Lp(H) = C\Pi + DH = \frac{1}{2\pi {\j}}\int_{\gamma}\hat{\Sigma}(s)H(sI_{\nu}-F)^{-1}\,\mathrm{d} s.
\]
Applying the residue theorem, the contour integral evaluates to
\[
\Lp(H) = \sum_{\lambda \in \{\text{poles of}\,\Gamma\,\text{inside}\,\gamma\}}\mathsf{Res}(\Gamma,\lambda),
\]
where $\Gamma(s) = \hat{\Sigma}(s)H(sI_{\nu}-F)^{-1} \in \complex^{p \times \nu}$ and $\mathsf{Res}(\Gamma,\lambda)$ denotes the residue of $\Gamma$ at the pole $\lambda$. By construction of $\gamma$, the poles of $\Gamma$ inside $\gamma$ are a subset of the eigenvalues of $F$, and pole-zero cancellations may occur. However, if some element of $\Gamma(s)$ has a removable singularity at $\lambda \in \mathrm{eig}(F)$, the contribution to the residue for that element will be zero, and we may therefore write
\[
\Lp(H) = \sum_{\lambda \in \eig(F)}\mathsf{Res}(\Gamma,\lambda).
\]
To evaluate the residues, we use the fact that (the singular portion of) the Laurent expansion of $(sI_\nu-F)^{-1}$ around the pole $\lambda_k \in \mathrm{eig}(F)$ is given by \cite[Equation (1.6)]{APC-DD:13}
\begin{equation}\label{Eq:LaurentF}
(sI_{\nu}-F)^{-1} = \sum_{j=0}^{m_k-1}\frac{1}{(s-\lambda_k)^{j+1}}X_{k,j}.
\end{equation}
{\tb At any point $s_0$ where $\hat{\Sigma}(s)$ is analytic, its series expansion may be expressed using moment matrices as \cite[Equation (1.4)]{APC-DD:13}
\begin{equation}\label{Eq:TaylorW}
\hat{\Sigma}(s) = \sum_{i=0}^{\infty}\nolimits (-1)^{i}\Mom_i(s_0)(s-s_0)^{i}.
\end{equation}
Since $\mathrm{eig}(A) \cap \mathrm{eig}(F) = \emptyset$, each element of $\hat{\Sigma}(s)$ is analytic at $s = \lambda_k$, so the expansion \eqref{Eq:TaylorW} is valid at $s_0 = \lambda_k$.} Combining \eqref{Eq:TaylorW} and \eqref{Eq:LaurentF}, the relevant portion of the Laurent series of $\Gamma$ around $\lambda_k$ is
\[
\begin{aligned}
\Gamma(s) &= \sum_{i=0}^{\infty}\sum_{j=0}^{m_k-1}\frac{1}{(s-\lambda_k)^{j+1-i}}(-1)^{i}\Mom_i(\lambda_k)H X_{k,j}
\end{aligned}
\]
from which we may read off that
\[
\mathsf{Res}(\Gamma,\lambda_k) = \sum_{j=0}^{m_k-1}(-1)^{j}\Mom_{j}(\lambda_k)H X_{k,j}
\]
which yields the result.
\end{proof}

\smallskip

The expressions \eqref{Eq:LfFreq} hint at a symmetry between the quantities $G\Lp(H)$ and $\Ld(G)H$. For completeness we make the following observations, the proofs of which are straightforward from \eqref{Eq:LfFreq}.

\begin{proposition}[\bf Symmetry between SSC Operators]\label{Prop:Symmetry}
The following statements hold:
\begin{enumerate}[(i)]
\item $G\Lp(H) = \Ld(G)H$ if the matrices $X_{k,j}$ and $G\Mom_j(\lambda_k)H$ commute for all $k \in \{1,\ldots,l\}$ and all $j \in \{0,\ldots,m_k-1\}$.
\item $G\Lp(H) = (\Ld(G)H)^{\sf T}$ if $F = F^{\sf T}$, $G$ has the form $G = \left[\begin{smallmatrix}g & g & \cdots & g\end{smallmatrix}\right]$ for any $g \in \real^\nu$, and $H = G^{\sf T}$.
\end{enumerate}
\end{proposition}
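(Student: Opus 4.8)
The plan is to substitute the moment expressions \eqref{Eq:LfFreq} into each claimed identity and verify equality summand-by-summand. Throughout I would write $A_{k,j} \define G\Mom_j(\lambda_k)H \in \complex^{\nu \times \nu}$, so that the two operators read $G\Lp(H) = \sum_{k,j}(-1)^{j}A_{k,j}X_{k,j}$ and $\Ld(G)H = \sum_{k,j}(-1)^{j}X_{k,j}A_{k,j}$, where the double sum ranges over $k \in \{1,\ldots,l\}$ and $j \in \{0,\ldots,m_k-1\}$. In this form the two statements become transparent comparisons of the ordered products $A_{k,j}X_{k,j}$ and $X_{k,j}A_{k,j}$.

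For statement (i), the two sums differ only in the order of the factors $A_{k,j}$ and $X_{k,j}$ within each summand. Hence, if $X_{k,j}$ and $A_{k,j} = G\Mom_j(\lambda_k)H$ commute for every admissible $(k,j)$, then $A_{k,j}X_{k,j} = X_{k,j}A_{k,j}$ term-by-term and $G\Lp(H) = \Ld(G)H$ follows immediately; no further structure is required.

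For statement (ii), I would transpose the expression for $\Ld(G)H$ to get $(\Ld(G)H)^{\sf T} = \sum_{k,j}(-1)^{j}A_{k,j}^{\sf T}X_{k,j}^{\sf T}$ and compare it with $G\Lp(H) = \sum_{k,j}(-1)^{j}A_{k,j}X_{k,j}$. It then suffices to show that both $X_{k,j}$ and $A_{k,j}$ are symmetric for every $(k,j)$, since this gives $A_{k,j}^{\sf T}X_{k,j}^{\sf T} = A_{k,j}X_{k,j}$ summand-by-summand. I would establish the two symmetries separately. First, symmetry of $X_{k,j}$: since $F = F^{\sf T}$, the resolvent obeys $((sI_\nu - F)^{-1})^{\sf T} = (sI_\nu - F^{\sf T})^{-1} = (sI_\nu - F)^{-1}$; transposing the Laurent expansion \eqref{Eq:LaurentF} and invoking uniqueness of the principal part of a Laurent series at each pole $\lambda_k$ forces $X_{k,j}^{\sf T} = X_{k,j}$. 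Second, symmetry of $A_{k,j}$: writing $G = g\mathbf{1}_{p}^{\sf T}$ with $\mathbf{1}_p \in \real^{p}$ the all-ones vector and using $H = G^{\sf T} = \mathbf{1}_{p}g^{\sf T}$, one computes $A_{k,j} = (\mathbf{1}_{p}^{\sf T}\Mom_j(\lambda_k)\mathbf{1}_{p})\,gg^{\sf T}$, a scalar multiple of $gg^{\sf T}$, which is visibly symmetric.

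The residual algebra is trivial once these two facts are in place, so I expect the only genuinely nontrivial step to be the symmetry $X_{k,j} = X_{k,j}^{\sf T}$, which is the main obstacle. The cleanest route is the resolvent-plus-uniqueness argument above; I prefer it over transposing the formula $X_{k,j} = V_kN_k^{j}W_k^*$ directly, because $W_k^*$ is a conjugate transpose (making a direct transposition awkward) and because the resolvent argument requires no diagonalizability of $F$ \textemdash{} a relevant point, since a complex symmetric $F$ need not be diagonalizable, so nontrivial nilpotent parts $N_k$, and hence several indices $j$, may genuinely occur.
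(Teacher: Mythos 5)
Your proposal is correct and takes exactly the route the paper intends: the paper omits the proof, stating it is ``straightforward from \eqref{Eq:LfFreq}'', and your term-by-term comparison of $\sum_{k,j}(-1)^{j}A_{k,j}X_{k,j}$ with $\sum_{k,j}(-1)^{j}X_{k,j}A_{k,j}$, where $A_{k,j} \define G\Mom_j(\lambda_k)H$, is precisely that computation. Your added details for (ii) \textemdash{} symmetry of $X_{k,j}$ via the symmetric resolvent and uniqueness of the principal part of the Laurent expansion \eqref{Eq:LaurentF}, and $A_{k,j} = (\mathbf{1}_p^{\sf T}\Mom_j(\lambda_k)\mathbf{1}_p)\,gg^{\sf T}$ \textemdash{} correctly fill in the steps the paper leaves implicit, including the relevant observation that a complex symmetric $F$ need not be diagonalizable, so the indices $j \geq 1$ cannot be dismissed.
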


\section{Applications in Control, Estimation, and Model Reduction of Linear Systems}
\label{Sec:Applications}

This section details the application of the theory in Section \ref{Sec:SSC} to different linear control, estimation, and model reduction problems. While the presented designs are interesting in and of themselves in the context of linear systems, the broader goal is to systematically identify a large family of designs for future potential generalization to nonlinear systems, based on the core ideas in Theorem \ref{Thm:SSC}, Theorem \ref{Thm:SSC2}, and Corollary \ref{Cor:SSC3}. Two themes that will repeatedly appear are (i) \emph{recursive design}, wherein simpler sequential design problems are solved and combined to obtain an overall design, and (ii) \emph{low-gain design}, wherein a small tuning parameter is used to ensure {\tb closed-loop stability}. This suggests at least some of the presented design methods will admit generalizations to nonlinear systems, since recursive \cite{RS-MJ-PK:97} and low-gain designs \cite{JWSP:20a} are well-established in the nonlinear setting; this will be explored further in Section \ref{Sec:Nonlinear}. 

\medskip

\subsection{Stabilization of the Cascade $\Sigma \rightarrow \Sigma^{\prime}$}
\label{Sec:Cascade-Stab}

We begin by studying a recursive stabilizer design procedure for the cascaded system of Figure \ref{Fig:Cascade2}, described again by the equations
\begin{equation}\label{Eq:CascadeDual1}
\begin{aligned}
\begin{bmatrix}\dot{x} \\ \dot{\eta}\end{bmatrix} &= \begin{bmatrix}A & 0\\
GC & F
\end{bmatrix}\begin{bmatrix}x \\ \eta\end{bmatrix} + \begin{bmatrix}B \\ GD\end{bmatrix}u.
\end{aligned}
\end{equation}
The objective is to design a state feedback $u = K_{x}x + K_{\eta}\eta$ achieving exponential stabilization of the origin. Assume that $(A,B)$ is stabilizable, and let $K$ be such that (i) $\mathcal{A} \define A+BK$ is Hurwitz, and (ii)  $\eig(\mathcal{A}) \cap \eig(F) = \emptyset$. Consider the preliminary feedback $u = Kx + \tilde{u}$ applied to \eqref{Eq:CascadeDual1} leading to
\begin{equation}\label{Eq:ForwardingCascade1}
\begin{aligned}
\begin{bmatrix}\dot{x} \\ \dot{\eta}\end{bmatrix} &= \begin{bmatrix}\mathcal{A} & 0\\
G\mathcal{C} & F
\end{bmatrix}\begin{bmatrix}x \\ \eta\end{bmatrix} + \begin{bmatrix}B \\ GD\end{bmatrix}\tilde{u},
\end{aligned}
\end{equation}
where $\mathcal{C} \define C+DK$. This is precisely of the form \eqref{Eq:CascadeDual} under the substitutions $A \rightarrow \mathcal{A}$ and $C \rightarrow \mathcal{C}$ (cf. Remark \ref{Rem:StateFeedbackOutputInjection}). 

{\tb We can now describe six different paths for completing the design, based on our theory in Section \ref{Sec:SSC}; Table \ref{Tab:StabSummary} summarizes and compares the designs. For notational convenience in this table, we label strong and weak rank assumptions as:\\
(RS)\label{Item:NR1}  $\Ros_{\Sigma}(\lambda)$  full row rank for all $\lambda \in \eig(F)$;\\
(RW) \label{Item:NR2} $\Ros_{\Sigma}(\lambda)$  full row rank for all $\lambda \in \eig(F) \cap \complex_{\geq 0}$.

\begin{table}[h!]
\centering \color{black}
\renewcommand{\arraystretch}{1.4}
\begin{tabular}{|l|c|c|c|c|c|c|}
\hline
Design Path  & \ref{3A1a} & \ref{3A1b} & \ref{3A2a} & \ref{3A2b} & \ref{3A3a} & \ref{3A3b}\\
\hline
\hline
Rank condition & RW & RS & RW & RS & RS & RW\\
\hline
 $\mathrm{eig}(F) \subseteq \mathbb{C}_{\leq 0}$ & $\times$ & $\times$ & \checkmark & \checkmark & \checkmark & \checkmark \\
\hline
Feedback var. & $(x,\eta)$ & $(x,\eta)$ & $\eta$ & $\eta$ & $\eta$ & $\eta$\\
\hline
Assumes $p = m$ & $\times$ & \checkmark & $\times$ & \checkmark & $\times$ & \checkmark \\
\hline
Design uses & $M, \Ld$ & $M, \Ld$ & $\Ld$ & $\Ld$ & $\Lp$ & $\Lp$\\
\hline
Related Refs. & \cite{DA-LP-LM:22b} & -- & \cite{SM-DA-VA:22} & -- & \cite{LP:16} & --\\
\hline
\end{tabular}
\caption{\tb Comparison of six design pathways beginning from \eqref{Eq:ForwardingCascade1}. See Remark \ref{Rem:StabLit} for further literature notes.}
\label{Tab:StabSummary}
\end{table}

}

\smallskip


\subsubsection{Stabilization with $\Ld$}
\label{Sec:Cascade-Stab-Ld}

We begin with \eqref{Eq:ForwardingCascade1}. Following the same argument leading to \eqref{Eq:CascadeDualTransformed}, if we set $M = \Syld^{-1}(G\mathcal{C})$ and define the deviation variable $\zeta = \eta - Mx$, the dynamics \eqref{Eq:ForwardingCascade1} \emph{decouple} into the parallel interconnection
\begin{equation}\label{Eq:ForwardingTransformed}
\begin{aligned}
\begin{bmatrix}\dot{x} \\ \dot{\zeta}\end{bmatrix} &= \begin{bmatrix}\mathcal{A} & 0\\
0 & F
\end{bmatrix}\begin{bmatrix}x \\ \zeta\end{bmatrix} + \begin{bmatrix}B \\ \Ld(G)\end{bmatrix}\tilde{u}.
\end{aligned}
\end{equation}
Our theory now points to two pathways for completing the design, one exploiting a controllability property of $\Ld$, and the other exploiting an invertibility property of $\Ld$.

\paragraph{``Fix $G$, Design $K_{\eta}$''}\label{3A1a} If the conditions imposed in Theorem \ref{Thm:SSC} \eqref{SSCNewItm:6b} hold, then $(F,\Ld(G))$ is stabilizable. Choosing $K_{\eta}$ such that $F + \Ld(G)K_{\eta}$ is Hurwitz and applying the feedback $\tilde{u} = K_{\eta}\zeta$, \eqref{Eq:ForwardingTransformed} becomes
\begin{equation}\label{Eq:CascadeTriangularFinal}
\begin{aligned}
\begin{bmatrix}\dot{x} \\ \dot{\zeta}\end{bmatrix} &= \begin{bmatrix}\mathcal{A} & BK_{\eta}\\
0 & F + \Ld(G)K_{\eta}
\end{bmatrix}\begin{bmatrix}x \\ \zeta\end{bmatrix},
\end{aligned}
\end{equation}
which is a cascade of two  exponentially stable LTI systems, and thus is exponentially stable. In the original variables, the final state feedback is $u = (K - K_{\eta}M)x + K_{\eta}\eta$. This is essentially the \emph{forwarding} methodology; see, e.g., \cite{DA-LP-LM:22b}.

\paragraph{``Fix $K_{\eta}$, Design $G$''}\label{3A1b} Suppose that $p = m$ and that {\tb Theorem~\ref{Thm:SSC}\eqref{SSCNewItm:1} holds}. 
For \eqref{Eq:ForwardingTransformed}, let $\tilde{u} = K_{\eta}\zeta$ again be the chosen feedback, where $K_{\eta} \in \real^{m \times \nu}$ is any matrix such that $(F,K_{\eta})$ is detectable. Correspondingly, let $L$ be such that $F + LK_{\eta}$ is Hurwitz. 
{\tb Since $p=m$, Theorem \ref{Thm:SSC}\eqref{SSC2NewItm:5} yields that }$\Ld$ is invertible, and thus the linear equation $\Ld(G) = L$ possesses a unique solution $G = \Ld^{-1}(L)$. With this particular choice of $G$, one again obtains the closed-loop system \eqref{Eq:CascadeTriangularFinal} with the control $u = (K - K_{\eta}M)x + K_{\eta}\eta$, and the same statements hold.


\subsubsection{Low-Gain Stabilization with $\Ld$}
\label{Sec:Cascade-Stab-Ld-LowGain}

Under the additional assumption that $\eig(F) \subseteq \complex_{\leq 0}$, we now show that one may simplify the feedback design to $u = Kx + K_{\eta}\eta$. This simplification is of most interest when $A$ is Hurwitz, as in this case one may take $K = 0$ and thereby obtain a simple design which uses only the state $\eta$ of the driven system. This scenario arises, for example, in the design of so-called \emph{tuning regulators} \cite{EJD:76, LP:16, VN:21, LC-JWSP:23d}, which are minimal-order output-regulating controllers for internally stable LTI plants.

We return to the transformed system \eqref{Eq:ForwardingTransformed} and consider the feedback design $\tilde{u} = K_{\eta}\eta = K_{\eta}(\zeta + Mx)$, leading to the closed-loop system
\begin{equation}\label{Eq:TuningRegulatorClosedLoop}
\begin{bmatrix}
\dot{x} \\ \dot{\zeta}
\end{bmatrix} = \begin{bmatrix}\mathcal{A} + BK_{\eta}M & BK_{\eta}\\
\Ld(G)K_{\eta}M & F + \Ld(G)K_{\eta}\end{bmatrix}\begin{bmatrix}x \\ \zeta\end{bmatrix}.
\end{equation}
As \eqref{Eq:TuningRegulatorClosedLoop} no longer possesses a triangular structure, additional small-gain-type assumptions may be used to ensure closed-loop stability. {\tb The key idea is to introduce a small tunable parameter $\epsilon > 0$ somewhere within the loop.}

\paragraph{``Fix $G$, Design $K_{\eta}(\epsilon)$}\label{3A2a} Suppose again that the stabilizability conditions imposed in Theorem \ref{Thm:SSC} \eqref{SSCNewItm:6b} hold, hence $(F,\Ld(G))$ is stabilizable. Due to this and the fact that $\mathrm{eig}(F) \subseteq \complex_{\leq 0}$, one may in fact select $K_{\eta} \in \mathcal{E}(\real;\real^{m\times \nu})$ {\tb (see \emph{Notation})} such that $F + \Ld(G)K_{\eta}(\epsilon)$ is \emph{low-gain Hurwitz stable}\footnote{A continuous matrix-valued function $\map{\mathcal{A}}{\real_{\geq 0}}{\real^{n \times n}}$ is \emph{low-gain Hurwitz stable} if there exist constants $c,\epsilon^* > 0$ such that $\max_{\lambda \in \mathrm{eig}
(\mathcal{A}(\epsilon))} \mathrm{Re}[\lambda] \leq -c\epsilon$ for all $\epsilon \in [0,\epsilon^*)$; {\tb see \cite{LC-JWSP:23d} and Appendix~\ref{App:LGHS}.}}. With such a selection, a composite Lyapunov construction combining Lyapunov functions for $\mathcal{A}$ and $F + \Ld(G)K_{\eta}(\epsilon)$ can be used to show that \eqref{Eq:TuningRegulatorClosedLoop} is also low-gain Hurwitz stable; {\tb for completeness, this result may be found in Appendix \ref{App:LGHS}.} In sum, one is then guaranteed that for sufficiently small $\epsilon > 0$, the closed-loop system \eqref{Eq:TuningRegulatorClosedLoop} is exponentially stable with its dominant eigenvalue having a stability margin of $O(\epsilon)$. 

\paragraph{``Fix $K_{\eta}$, Design $G(\epsilon)$}\label{3A2b} Suppose that $p = m$ and that the conditions of Corollary \ref{Cor:SSC3} hold. Pick any matrix $K_{\eta} \in \real^{m \times \nu}$ such that $(F,K_{\eta})$ is detectable, and let $L \in \mathcal{E}(\real;\real^{\nu \times m})$ be such that $F + L(\epsilon)K_{\eta}$ is low-gain Hurwitz stable. By Corollary \ref{Cor:SSC3} \eqref{SSC3Itm:3}, $\Ld$ is invertible, and thus the \emph{linear} equation $\Ld(G) = L$ possesses a unique solution $G = \Ld^{-1}(L)$, which by linearity of $\Ld$ must also belong to $\mathcal{E}(\real;\real^{\nu \times p})$. We again arrive at the closed-loop system \eqref{Eq:TuningRegulatorClosedLoop}, and again {\tb a composite Lyapunov construction} may be applied to confirm low-gain Hurwitz stability.

\smallskip
\subsubsection{Low-Gain Stabilization with $\Lp$}
\label{Sec:LowGainStabLp}

Return to the cascaded system in the form \eqref{Eq:ForwardingCascade1}, and as in Section \ref{Sec:Cascade-Stab-Ld-LowGain}, suppose we are interested in the design of a simple feedback $\tilde{u} = K_{\eta}\eta$ depending \emph{only} on the state of the driven system. With $\Pi = \Sylp^{-1}(BK_{\eta})$ and the change of coordinate $\xi = x - \Pi \eta$, straightforward calculations show that the dynamics \eqref{Eq:ForwardingCascade1} take the form 
\begin{equation}\label{Eq:TuningRegulatorClosedLoop2}
\begin{bmatrix}
\dot{\xi} \\ \dot{\eta}
\end{bmatrix} = \begin{bmatrix}\mathcal{A} - \Pi G \mathcal{C} & -\Pi G \Lp(K_{\eta})\\
G\mathcal{C} & F + G\Lp(K_{\eta})\end{bmatrix}\begin{bmatrix}\xi \\ \eta\end{bmatrix}.
\end{equation}
The dynamics \eqref{Eq:TuningRegulatorClosedLoop2} are not dissimilar from \eqref{Eq:TuningRegulatorClosedLoop}, and we can develop two design procedures analogous to those in Section \ref{Sec:Cascade-Stab-Ld-LowGain} assuming again that $\eig(F) \subseteq \complex_{\leq 0}$.

\paragraph{``Fix $G$, Design $K_{\eta}(\epsilon)$}\label{3A3a} Suppose that Theorem \ref{Thm:SSC} \eqref{SSCNewItm:1} holds and that $G$ is such that $(F,G)$ is stabilizable. Now select any $Z \in \mathcal{E}(\real;\real^{p\times \nu})$ such that $F+GZ(\epsilon)$ is low-gain Hurwitz stable. Then by Theorem \ref{Thm:SSC}, $\Lp$ is a surjective linear operator, and thus there exists $K_{\eta} \in \mathcal{E}(\real;\real^{m \times \nu})$ such that $\Lp(K_{\eta}) = Z$. With this selection of $K_{\eta}$ in \eqref{Eq:TuningRegulatorClosedLoop2}, one can again employ composite Lyapunov arguments to establish that \eqref{Eq:TuningRegulatorClosedLoop2} is low-gain Hurwitz stable, completing the design. This particular design  was first presented in \cite{LC-JWSP:23d}.

\paragraph{``Fix $K_{\eta}$, Design $G(\epsilon)$''}\label{3A3b} Suppose that $p = m$, and pick any matrix $K_{\eta}$ such that $(F,K_{\eta})$ is detectable. If the conditions of Theorem \ref{Thm:SSC2}\eqref{SSC2NewItm:6b} hold, then $(F,\Lp(K_{\eta}))$ is detectable, and one may therefore design a gain $G \in \mathcal{E}(\real;\real^{\nu\times p})$ such that $F + G(\epsilon)\Lp(K_{\eta})$ is low-gain Hurwitz stable, with analogous arguments to before completing the stability proof. 

\begin{remark}[\bf Required Model Information]\label{Rem:ModelInfo}
It is important to compare how the presented designs use the model information $(A,B,C,D)$ of the system $\Sigma$. For simplicity, assume that $A$ is Hurwitz, in which case one may take $K = 0$ in all designs above. The design in Section \ref{Sec:Cascade-Stab-Ld} uses the solution $M$ of the Sylvester equation $\Syld(M) = GC$; this requires full knowledge of the plant $A$ matrix. In contrast, under the additional assumption that $\eig(F) \subseteq \complex_{\leq 0}$, the design in Section \ref{Sec:Cascade-Stab-Ld-LowGain} requires only knowledge of $\Ld$, and the design of Section \ref{Sec:LowGainStabLp} requires only knowledge of $\Lp$. It follows from Theorem \ref{Thm:Moments} that these latter two design approaches require only information about the \emph{moments} of the plant, which is a significant relaxation. The price paid for this minimal use of model information is degradation in performance, due to the low-gain character of the latter two designs. Identical comments will apply to our subsequent discussion of estimator design. \hfill \oprocend
\end{remark}

\begin{remark}[\bf Stabilization Literature Notes]\label{Rem:StabLit}
    Some of the procedures summarized in Section~\ref{Sec:Cascade-Stab} have appeared in literature on finite and infinite-dimensional linear systems.
    First, the design of Section \ref{Sec:Cascade-Stab-Ld} a) has appeared in the context of 
cascade stabilization via forwarding (or Sylvester approaches) and output regulation, and in \cite[Theorem 3.7]{VN:21} for ODE-PDE cascades {\tb (with particular attention to the problem of stabilization of plants in the presence of actuator and/or sensor dynamics)}. Supposing that the spectrum of $F$ is simple and lies on the imaginary axis, a standard choice of the gain $K_{\eta}$ is simply $K_\eta = - \Ld(G)^{\sf T}$, as shown in \cite{DA-LP-LM:22b}  for ODE-ODE cascades or \cite{SM-LB-DA:21} for ODE-PDE cascades. Under the same assumptions, the corresponding 
    ``small-gain'' procedure of Section \ref{Sec:Cascade-Stab-Ld-LowGain} a) has also been used in \cite{SM-DA-VA:22} for ODE-PDE cascades. Again for the case where the spectrum of $F$ is simple and lies on the imaginary axis, a version of the design in Section \ref{Sec:LowGainStabLp} a) was developed in \cite[Section 4]{LP:16} for output regulation of PDEs, wherein $K$ was chosen such that $\Lp(K)$ is full column rank, $K_{\eta}$ was selected as $K_{\eta} = \epsilon K$ for $\epsilon > 0$, and $G$ was set as $G = -\Lp(K)^*$, which is dual to the gain design in \cite{DA-LP-LM:22b,SM-DA-VA:22}.
    {\tb Finally, related ideas were also developed in 
 \cite[Chapter 3]{PVK-HKK:86} 
 in the context of singularly-perturbed systems, where feedback laws were obtained by approximating the solution of a non-symmetric algebraic Riccati equation (ARE) via a Sylvester equation.}
    \hfill \oprocend
\end{remark}

\subsection{Observation of the Cascade $\Sigma^{\prime} \rightarrow \Sigma$} 
\label{Sec:Cascade-Obsv}

We now examine an estimation problem for the cascaded system of Figure \ref{Fig:Cascade1}, described again by the equations
\[
\begin{aligned}
\begin{bmatrix}\dot{x} \\ \dot{\eta}\end{bmatrix} &= \begin{bmatrix}A & BH\\
0 & F
\end{bmatrix}\begin{bmatrix}x \\ \eta\end{bmatrix} + \begin{bmatrix}BJ \\ G\end{bmatrix}v, \ 
y = \begin{bmatrix}C & DH\end{bmatrix}\begin{bmatrix}x \\ \eta\end{bmatrix} + \begin{bmatrix}DJ\end{bmatrix}v.
\end{aligned}
\]
The objective is to design a state estimator based on the measurement $y$ of the driven system; we assume that the input signal $v$ is either known or absent. Consider the obvious candidate estimator
\[
\begin{aligned}
\begin{bmatrix}\dot{\hat{x}} \\ \dot{\hat{\eta}}\end{bmatrix} &= \begin{bmatrix}A & BH\\
0 & F
\end{bmatrix}\begin{bmatrix}\hat{x} \\ \hat{\eta}\end{bmatrix} + \begin{bmatrix}BJ \\ G\end{bmatrix}v - \begin{bmatrix}w_x \\ w_{\eta}\end{bmatrix}\\
\hat{y} &= \begin{bmatrix}C & DH\end{bmatrix}\begin{bmatrix}\hat{x} \\ \hat{\eta}\end{bmatrix} + \begin{bmatrix}DJ\end{bmatrix}v
\end{aligned}
\]
where $w_x,w_{\eta}$ are correcting inputs, to be designed next as linear functions of $y - \hat{y}$. Introducing the estimation errors $\tilde{x} = x - \hat{x}$, $\tilde{\eta} = \eta - \hat{\eta}$, and $\tilde{y} = y - \hat{y}$, the error dynamics are found to be
\begin{equation}\label{Eq:EstErrorDynamics}
\begin{aligned}
\begin{bmatrix}\dot{\tilde{x}} \\ \dot{\tilde{\eta}}\end{bmatrix} &= \begin{bmatrix}A & BH\\
0 & F
\end{bmatrix}\begin{bmatrix}\tilde{x} \\ \tilde{\eta}\end{bmatrix} + \begin{bmatrix}w_{x} \\ w_{\eta}\end{bmatrix}, \quad \tilde{y} = \begin{bmatrix}C & DH\end{bmatrix}\begin{bmatrix}\tilde{x} \\ \tilde{\eta}\end{bmatrix}.
\end{aligned}
\end{equation}

\smallskip


\subsubsection{Estimator Design with $\Lp$}
\label{Sec:EstDesignLp}

Assume $(A,C)$ is detectable, and let $L_x$ be such that (i) $\mathcal{A} := A+L_xC$ is Hurwitz and (ii) $\eig(\mathcal{A}) \cap \eig(F) = \emptyset$. Select $w_{x} = L_x\tilde{y} + \tilde{w}_x$ in \eqref{Eq:EstErrorDynamics}, leading to
\begin{equation}\label{Eq:ForwardingCascadeEst1}
\begin{aligned}
\begin{bmatrix}\dot{\tilde{x}} \\ \dot{\tilde{\eta}}\end{bmatrix} &= \begin{bmatrix}\mathcal{A} & \mathcal{B}H\\
0 & F
\end{bmatrix}\begin{bmatrix}\tilde{x} \\ \tilde{\eta}\end{bmatrix} + \begin{bmatrix}\tilde{w}_x \\ w_{\eta}\end{bmatrix}, \quad 
\tilde{y} = \begin{bmatrix}C & DH\end{bmatrix}\begin{bmatrix}\tilde{x} \\ \tilde{\eta}\end{bmatrix},
\end{aligned}
\end{equation}
where $\mathcal{B} = B+L_xD$ and where $\tilde{w}_x$ is still to be designed. Observe that \eqref{Eq:ForwardingCascadeEst1} has the same structure as \eqref{Eq:CascadePrimal} under the substitutions $A \rightarrow \mathcal{A}$ and $B \rightarrow \mathcal{B}$. Setting $\Pi = \Sylp^{-1}(\mathcal{B}H)$ and defining the change of state $\tilde{\xi} = \tilde{x} - \Pi\tilde{\eta}$, the error dynamics \eqref{Eq:ForwardingCascadeEst1} become
\begin{equation}\label{Eq:ObserverErrorForwarding}
\begin{aligned}
\begin{bmatrix}\dot{\tilde{\xi}} \\ \dot{\tilde{\eta}}\end{bmatrix} &= \begin{bmatrix}\mathcal{A} & 0\\
0 & F
\end{bmatrix}\begin{bmatrix}\tilde{\xi} \\ \tilde{\eta}\end{bmatrix} + \begin{bmatrix}\tilde{w}_x - \Pi w_{\eta}\\ w_{\eta}\end{bmatrix}, \quad 
\tilde{y} = \begin{bmatrix}C & \Lp(H)\end{bmatrix}\begin{bmatrix}\tilde{\xi} \\ \tilde{\eta}\end{bmatrix}.
\end{aligned}
\end{equation}
Precisely mirroring the development in Section \ref{Sec:Cascade-Stab-Ld}, the theory of Section \ref{Sec:SSC} again points to two pathways for completing the design.

\paragraph{``Fix $H$, Design $L_{\eta}$''} If the conditions imposed in Theorem \ref{Thm:SSC2}\eqref{SSC2NewItm:6b} hold, then $(F,\Lp(H))$ is detectable, and therefore there exists $L_{\eta} \in \real^{\nu \times p}$ such that $F + L_{\eta}\Lp(H)$ is Hurwitz. With the choices $w_{\eta} = L_{\eta}\tilde{y}$ and $\tilde{w}_x = \Pi w_{\eta}$, the error dynamics reduce to
\begin{equation}\label{Eq:ObserverCascadeTriangular}
\begin{aligned}
\begin{bmatrix}\dot{\tilde{\xi}} \\ \dot{\tilde{\eta}}\end{bmatrix} &= \begin{bmatrix}\mathcal{A} & 0\\
L_{\eta}C & F+L_{\eta}\Lp(H)
\end{bmatrix}\begin{bmatrix}\tilde{\xi} \\ \tilde{\eta}\end{bmatrix},
\end{aligned}
\end{equation}
which is cascade of linear exponentially stable systems, and is thus exponentially stable. We therefore obtain the final observer design (in the original coordinates) as
\begin{equation}\label{Eq:FinalObserver}
\begin{aligned}
\begin{bmatrix}\dot{\hat{x}} \\ \dot{\hat{\eta}}\end{bmatrix} &= \begin{bmatrix}A & BH\\
0 & F
\end{bmatrix}\begin{bmatrix}\hat{x} \\ \hat{\eta}\end{bmatrix} + \begin{bmatrix}BJ \\ G\end{bmatrix}v + \begin{bmatrix}L_{x} + \Pi L_{\eta} \\ L_{\eta}\end{bmatrix}(\hat{y} - y)\\
\hat{y} &= \begin{bmatrix}C & DH\end{bmatrix}\begin{bmatrix}\hat{x} \\ \hat{\eta}\end{bmatrix} + \begin{bmatrix}DJ\end{bmatrix}v.
\end{aligned}
\end{equation}

\paragraph{``Fix $L_{\eta}$, Design $H$''} Assume now that $p = m$ and that the conditions of Corollary \ref{Cor:SSC3} hold. Pick any matrix $L_{\eta}$ such that $(F,L_{\eta})$ is stabilizable, and let $K$ be such that $F + L_{\eta}K$ is Hurwitz. By Corollary \ref{Cor:SSC3} \eqref{SSC3Itm:3}, $\Lp$ is invertible, and thus the linear equation $\Lp(H) = K$ possesses a unique solution $H = \Lp^{-1}(K)$. With this particular choice of $H$, one again obtains the triangular error dynamics \eqref{Eq:ObserverCascadeTriangular} and the same stability conclusions hold.

\medskip


\subsubsection{Low-Gain Estimator Design with $\Lp$ (Tuning Estimators)}
\label{Sec:LowGainEstimatorLp}

Mirroring the ideas in Section \ref{Sec:Cascade-Stab-Ld-LowGain}, we now relax some aspects of the previous estimator design procedure using low-gain methods, again under the further assumption that $\eig(F) \subseteq \complex_{\leq 0}$. We will call these designs \emph{tuning estimators}, as they are dual to idea of a tuning regulator as mentioned in Section \ref{Sec:Cascade-Stab-Ld-LowGain}.

To begin, in \eqref{Eq:ObserverErrorForwarding} we select $\tilde{w}_x = 0$, which will eliminate the need for computation of the matrix $\Pi$ in the design. If we further select $w_{\eta} = L_{\eta}\tilde{y}$, we obtain the non-triangular error dynamics
\begin{equation}\label{Eq:NonTriangularEstimator1}
\begin{aligned}
\begin{bmatrix}\dot{\tilde{\xi}} \\ \dot{\tilde{\eta}}\end{bmatrix} &= \begin{bmatrix}\mathcal{A}-\Pi L_{\eta}C & -\Pi L_{\eta} \Lp(H)\\
L_{\eta}C & F+L_{\eta}\Lp(H)
\end{bmatrix}\begin{bmatrix}\tilde{\xi} \\ \tilde{\eta}\end{bmatrix},
\end{aligned}
\end{equation}
which is non-triangular but directly analogous to \eqref{Eq:TuningRegulatorClosedLoop}. Our theory now enables two ways to complete the design.

\paragraph{``Fix $H$, Design $L_{\eta}(\epsilon)$''} If the conditions of Theorem \ref{Thm:SSC2}\eqref{SSC2NewItm:6b} hold (i.e., $(F,H)$ detectable and $\Ros_{\Sigma}(\lambda)$ full column rank for all $\lambda \in \eig(F) \intersection \complex_{\geq 0}$), then $(F,\Lp(H))$ is detectable, and therefore there exists $L_{\eta} \in \mathcal{E}(\real;\real^{\nu\times p})$ such that $F + L_{\eta}(\epsilon)\Lp(H)$ is low-gain Hurwitz stable. As described previously, a composite Lyapunov construction can now be used to establish that the overall error dynamics \eqref{Eq:NonTriangularEstimator1} are low-gain Hurwitz stable. 

\paragraph{``Fix $L_{\eta}$, Design $H(\epsilon)$''} Assume now that $p = m$ and that the conditions of Corollary \ref{Cor:SSC3} hold. Pick any matrix $L_{\eta}$ such that $(F,L_{\eta})$ is stabilizable, and let $Y \in \mathcal{E}(\real;\real^{p\times \nu})$ be such that $F + L_{\eta}Y(\epsilon)$ is low-gain Hurwitz stable. By Corollary \ref{Cor:SSC3} \eqref{SSC3Itm:3}, $\Lp$ is invertible, and thus the linear equation $\Lp(H) = Y$ possesses a unique solution $H = \Lp^{-1}(Y) \in \mathcal{E}(\real;\real^{m \times \nu})$. Making these choices of $H$ and $L_{\eta}$ in \eqref{Eq:NonTriangularEstimator1} complete the design.

\smallskip
\subsubsection{Low-Gain Estimator Design with $\Ld$}
\label{Sec:LowGainEstimatorLd}

The dual SSC operator $\Ld$ can also be used for low-gain estimator design. We return to the estimation error dynamics in the original coordinates \eqref{Eq:ForwardingCascadeEst1}, and set $\tilde{w}_x = 0$. We are again interested in design $w_{\eta} = L_{\eta}\tilde{y}$ for some $L_{\eta}$ to be determined. With $M = \Syld^{-1}(L_{\eta}C)$ and the change of coordinate $\tilde{\zeta} = \tilde{\eta} - M\tilde{x}$, the error dynamics \eqref{Eq:ForwardingCascadeEst1} become
\begin{equation}\label{Eq:NonTriangularEstimator2}
\begin{aligned}
\begin{bmatrix}\dot{\tilde{x}} \\ \dot{\tilde{\zeta}}\end{bmatrix} &= \begin{bmatrix}\mathcal{A} + \mathcal{B}HM & \mathcal{B}H\\
\Ld(L_{\eta})HM & F+\Ld(L_{\eta})H
\end{bmatrix}\begin{bmatrix}\tilde{x} \\ \tilde{\zeta}\end{bmatrix},
\end{aligned}
\end{equation}
which is of course quite similar to \eqref{Eq:NonTriangularEstimator1}. If $\eig(F) \subseteq \complex_{\geq 0}$, we once again have two paths for completing the design. In brief:
\begin{enumerate}[(a)]
\item ``\emph{Fix $H$, Design $L_{\eta}(\epsilon)$}'': Fix $H$ such that $(F,H)$ detectable, pick $Z \in \mathcal{E}(\real;\real^{\nu \times m})$ such that $F+Z(\epsilon)H$ is low-gain Hurwitz stable, and exploit surjectivity of $\Ld$ from Theorem \ref{Thm:SSC2} to obtain a gain $L_{\eta} \in \mathcal{E}$ satisfying $Z = \Ld(L_{\eta})$. 
\item ``\emph{Fix $L_{\eta}$, Design $H(\epsilon)$}'': If $p = m$, fix $L_{\eta}$ such that $(F,L_{\eta})$ is stabilizable, and exploit stabilizability of $(F,\Ld(L_{\eta}))$ by Corollary \ref{Cor:SSC3} to design $H \in \mathcal{E}$ such that $F + \Ld(L_{\eta})H(\epsilon)$ is low-gain Hurwitz stable.
\end{enumerate}



\begin{remark}[\bf Implementation of Tuning Estimators]\label{Rem:TuningEstimator}
In the low-gain design procedures of Sections \ref{Sec:LowGainEstimatorLp} and {\tb \ref{Sec:LowGainEstimatorLd}}, when $A$ is known to be Hurwitz, we may select $L_x = 0$ and the computed observer \eqref{Eq:FinalObserver} can be expressed as
\begin{equation}\label{Eq:TuningObserver}
\begin{aligned}
\dot{\hat{x}} &= A\hat{x} + B\hat{z},\quad &\hat{y} &= C\hat{x} + D\hat{z},\\
\dot{\hat{\eta}} &= F\hat{\eta} + Gv + L_{\eta}(\hat{y}-y),\quad &\hat{z} &= H\hat{\eta}+Jv.
\end{aligned}
\end{equation}
The first line of \eqref{Eq:TuningObserver} can be interpreted as an open-loop input-output simulation of the plant $\Sigma = (A,B,C,D)$, wherein the estimated signal $\hat{z}$ is used to predict the plant output, yielding $\hat{y}$. The prediction $\hat{y}$ is then used in the second line of \eqref{Eq:TuningObserver} to produce the estimate $\hat{\eta}$. The tunable gain in the design is $L_{\eta}$, which is obtained via the operator $\Lp$, and hence, may be designed using only the moments of the plant $\Sigma$ as model information. This leads to the following idea: the first line of \eqref{Eq:TuningObserver} may be implemented using \emph{any} methodology that enables forward input-output simulation of the plant $\Sigma$ (e.g., via a non-parametric data-driven model or a digital twin \cite{LL-CDP-PT-NM:24}), with the second line of \eqref{Eq:TuningObserver} then processing that output, along with the true measurement, to estimate $\eta$. \hfill \oprocend
\end{remark}

\begin{remark}[\bf Observation Literature Notes]\label{Rem:Ob}
{\tb In the context of observers for cascaded systems involving actuator and sensor dynamics}, the design of Section \ref{Sec:EstDesignLp} a) has been investigated for infinite-dimensional linear systems in \cite[Theorem 4.4]{VN:21}. {\tb To the best of our knowledge, all other design procedures are novel.}
  {\tb Finally, it is worth highlighting that a direct application of this theory is the case of disturbance estimators (or extended state observers)
  in which the disturbance is generated by a ``known'' model; see, e.g. \cite[Section II.C]{WHC-JY-LG-SL:15} or \cite[Section 5]{BRA-IBF:20}.} 
    {\tb As for the stabilization case, 
    Sylvester equations obtained by linearizing non-symmetric AREs were used in 
 \cite[Chapter 3]{PVK-HKK:86} 
to analyze the stability of a Kalman-Bucy steady-state filter for composite singularly-perturbed systems. }
       \hfill \oprocend
\end{remark}


\subsection{Structural Properties of Reduced-Order Models} 
\label{Sec:Cascade-MOR}



Finally, we illustrate the applications of our theory to the problem of model order reduction by \emph{moment matching}. Roughly speaking, the objective is to begin with a plant $\Sigma \define (A,B,C,D)$ and obtain a new plant $\rom{\Sigma} \define (\rom{A},\rom{B},\rom{C},\rom{D})$ of \textit{lower order that matches the moments of $\Sigma$ at $\Sigma^{\prime} \define (F,G,H)$}, i.e., that matches the moments\footnote{The notation is as defined in Section~\ref{Sec:Moments}: $\lambda_k$ for $k \in \{1,\dots,l\}$ denote distinct eigenvalues of $F$, $\Mom_j(\lambda_k)$ denotes the moment at $\lambda_k$ of order $j \in \{0,\ldots,m_k\}$ where $m_k$ is the algebraic multiplicity of $\lambda_k$, and $h_i$ (resp. $g_i$), with $i \in \{1,\dots,\nu\}$, denote the columns of $H$ (resp. rows of $G$).} $\Mom_j(\lambda_k)$ of the original plant at selected interpolation points $\lambda_k \in \eig(F)$ along selected right directions $h_i\in\mathbb{C}^m$ (the columns of $H$) and/or left directions $g_i\in\mathbb{C}^p$ (the rows of $G$). More precisely, if we denote the moments of $\rom{\Sigma}$ by $\hat{\Mom}_j(\lambda_k)$, we aim to construct $\rom{\Sigma}$ such that a number of conditions $\Mom_j(\lambda_k)h_i = \hat{\Mom}_j(\lambda_k)h_i$ and/or $g_i\Mom_j(\lambda_k) = g_i \hat{\Mom}_j(\lambda_k)$ hold, for $i,j,k$ over certain index sets.
This problem has been solved in the literature by different approaches{\tb, such as using the interpolation theory \cite{ACA:05} and the Loewner framework \cite{AJM-ACA:07}. The following presentation is based on the so-called ``time-domain moment matching'', which is more recently known as ``interconnection-based model reduction'' \cite{GS-AA:24}.} In the rest of this section we assume that $\Sigma$ is minimal, that $(F,H)$ is observable, and that $(F,G)$ is controllable; for reasons explained in detail in \cite{GS-AA:17}, these properties are always assumed in the model reduction literature. 

One solution to this problem, which considers only right direction matching, is given by 
\begin{equation}
\label{ROM_class_LP}
\rom{A} \define F -  \rom{B}H,\qquad
\rom{C} \define \Lp(H)-\rom{D}H,
\end{equation}
for any $\rom{B}$ such that $\eig(\rom{A}) \cap \eig(F) = \emptyset$ and any $\rom{D}$. The family~(\ref{ROM_class_LP}) parameterized in $(\rom{B},\rom{D})$ identifies all the reduced-order models of order $\nu$ which match (a linear combination of) the moments of $\Sigma$ computed at the eigenvalues of the matrix $F$ \textit{along} the directions identified by the columns of $H$, 
see \cite[Lemma 3]{MFS-GS-AYP-AP-NVDW:23} for details. When $\Sigma$ is single-input single-output (SISO), then this property reduces exactly to matching the (scalar) moments $\Mom_j(\lambda_k)$ computed at the eigenvalues of the matrix $F$, and in this case $H$ plays no role. The interpretation is that a reduced-order model by moment matching matches the frequency response (and possibly its derivatives) of $\Sigma$ at the frequencies encoded in $F$. For all the above reasons, in the literature the objects $\Lp(H)$ and $\Ld(G)$ themselves are called, with abuse, \textit{moments} of $\Sigma$ at $\eig(F)$. It is worth stressing however that while the elements of $\Lp(H)$ and $\Ld(G)$ are indeed exactly the moments in the SISO case, in the MIMO case these are linear combinations of the moments along the directions $H$ and $G$, respectively (see \cite[Equation (7)]{MFS-GS-AYP-AP-NVDW:23} for the exact relation). 
Consequently, in the MIMO case the frequency response interpretation depends on the selection of the directions $H$ and $G$, see \cite[Section 2.4]{MFS-GS-AYP-AP-NVDW:23}.

Another solution to the moment matching problem, which considers only left-direction matching \cite{GS-AA:24}, is given by
\begin{equation}
\label{ROM_class_Ld}
\rom{A}\define F - G \rom{C},\qquad 
\rom{B}\define \Ld(G)-G\rom{D},
\end{equation}
for any $\rom{C}$ such that $\eig(\rom{A}) \cap \eig(F) = \emptyset$ and any $\rom{D}$. Any model in this family parameterized in $(\rom{C},\rom{D})$  matches (a linear combination of) the moments of $\Sigma$ computed at the eigenvalues of the matrix $F$ \textit{along} the directions identified by the rows of $G$. In the following, we reinterpret the constructions of these reduced-order models using the interconnections in Figures~\ref{Fig:Cascade1} and \ref{Fig:Cascade2} and establish their structural system-theoretic properties. 


\smallskip
\subsubsection{Reduction with $\Lp$ using the Cascade $\Sigma^{\prime} \rightarrow \Sigma$ 
}
\label{Sec:Cascade2-Red-Lp}

Consider the cascaded system of Figure \ref{Fig:Cascade1}. As described in Section~\ref{Sec:SSC_DI}, the matrix $\Pi = \Sylp^{-1}(BH)$ defines the invariant subspace $\setdef{(x,\eta)}{x = \Pi\eta}$ for the dynamics \eqref{Eq:CascadePrimal}. Recall also that the unforced dynamics on the invariant subspace are simply described by
\begin{equation}
\dot{\eta} = F\eta, \qquad y = \Lp(H)\eta.
\end{equation}
Moreover, if $A$ is Hurwitz, then trajectories of \eqref{Eq:CascadePrimalTransformed} converge to this invariant subspace, and $\Lp(H)$ describes the steady-state gain relating the observation $y$ to the state $\eta$ of the driving system. Thus, matching the moments $\Lp(H)$ means matching the steady-state output response of $\Sigma$ for input signals generated by $\Sigma^{\prime}$. The problem of moment matching is then solved if $\rom{\Sigma}$ is designed such that the matching condition 
\begin{equation}\label{Eq:ROMMatching2}
\rom{\Lp}(H)=\Lp(H)
\end{equation}
holds, where $\rom{\Lp}(H) \define \rom{C}\rom{\Pi} + \rom{D}H = \rom{C}\rom{\Sylp}^{-1}(\rom{B}H) + \rom{D}H$, and where the associated primal Sylvester operator is defined as $\rom{\Sylp}(\rom{\Pi}) = \rom{\Pi}F - \rom{A}\rom{\Pi}$. Put differently, we want to determine $(\rom{A},\rom{B},\rom{C},\rom{D})$ such that $\rom{\Pi} := \rom{\Sylp}^{-1}(\rom{B}H)$ 
and the condition~(\ref{Eq:ROMMatching2}) holds. Straightforward computations show that this is achieved by the selection
\begin{equation}
\label{ROM_class_LP_gen}
\begin{aligned}
\rom{A}&=\rom{\Pi}F\rom{\Pi}^{-1} -  \rom{B}H\rom{\Pi}^{-1},\\
\rom{C}&=\left(\Lp(H)-\rom{D}H\right)\rom{\Pi}^{-1},
\end{aligned}
\end{equation}
for any invertible $\rom{\Pi}$, any $\rom{B}$ such that $\eig(\rom{A}) \cap \eig(F) = \emptyset$ and any $\rom{D}$.
Note that (\ref{ROM_class_LP_gen}) and (\ref{ROM_class_LP}) are similar representations, with change of coordinates given by $\rom{\Pi}$. If $\rom{B}$ is selected such that $\rom{A}$ is Hurwitz, then $\rom{\Sigma}$ and $\Sigma$ have the same steady-state output response for input signals generated by $\Sigma^{\prime}$. While the analysis above has been originally derived in \cite{AA:10}, the ideas in Theorem \ref{Thm:SSC2} may now be applied to study observability of the reduced-order model.

\begin{theorem}
[Observability of the Reduced-Order Model]\label{Cor:ROMObs}
Let $\rom{A}$ and $\rom{C}$ be defined as in (\ref{ROM_class_LP}) or (\ref{ROM_class_LP_gen}).
Suppose that $(F,H)$ is observable, that $\eig(\rom{A}) \cap \eig(F) = \emptyset$, and consider the operator $\Lp$ defined in \eqref{Eq:GenL}. For any $\rom{B}$ and $\rom{D}$ such that 
$$
\begin{bmatrix} F - \lambda I_\nu & \rom{B}\\ \Lp(H) & \rom{D}\end{bmatrix}
$$ 
has full column rank for all $\lambda \in \eig(\rom{A})$, the pair $(\rom{A},\rom{C})$ is observable.
\end{theorem}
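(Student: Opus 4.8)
The plan is to verify observability of $(\rom{A},\rom{C})$ through the Popov--Belevitch--Hautus (PBH) test, which requires that $\left[\begin{smallmatrix}\rom{A}-\lambda I_\nu \\ \rom{C}\end{smallmatrix}\right]$ have full column rank for every $\lambda \in \complex$. For $\lambda \notin \eig(\rom{A})$ the block $\rom{A}-\lambda I_\nu$ is already invertible, so the test holds automatically; it therefore suffices to establish the rank condition only at the points $\lambda \in \eig(\rom{A})$, which is precisely where the hypothesis is posed. This reduction is the motivation for stating the assumption on the spectrum of $\rom{A}$ rather than on all of $\complex$.

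The key algebraic step is to recognize that the hypothesized matrix is, up to an invertible right factor, the Rosenbrock matrix $\Ros_{\rom{\Sigma}}(\lambda)$ of the reduced model. Substituting $\rom{A}=F-\rom{B}H$ and $\rom{C}=\Lp(H)-\rom{D}H$ from (\ref{ROM_class_LP}), I would record the factorization
\[
\Ros_{\rom{\Sigma}}(\lambda)=\begin{bmatrix}\rom{A}-\lambda I_\nu & \rom{B}\\ \rom{C} & \rom{D}\end{bmatrix}=\begin{bmatrix}F-\lambda I_\nu & \rom{B}\\ \Lp(H) & \rom{D}\end{bmatrix}\begin{bmatrix}I_\nu & 0\\ -H & I_m\end{bmatrix}.
\]
Since the rightmost factor is lower block-triangular with identity diagonal blocks, it is invertible, so $\Ros_{\rom{\Sigma}}(\lambda)$ has full column rank if and only if the hypothesized matrix does; by assumption this holds for all $\lambda\in\eig(\rom{A})$. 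Full column rank of the $(\nu+p)\times(\nu+m)$ matrix $\Ros_{\rom{\Sigma}}(\lambda)$ in turn forces full column rank of its leftmost $\nu$ columns, namely $\left[\begin{smallmatrix}\rom{A}-\lambda I_\nu \\ \rom{C}\end{smallmatrix}\right]$, which is exactly the PBH condition at $\lambda$. Combining this with the automatic satisfaction off the spectrum of $\rom{A}$ yields observability.

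For the representation (\ref{ROM_class_LP_gen}) I would not repeat the argument: since (\ref{ROM_class_LP_gen}) is obtained from (\ref{ROM_class_LP}) by the state-coordinate change $\rom{\Pi}$, and observability is invariant under similarity, the conclusion transfers directly once the rank hypothesis is read in the corresponding coordinates. It is worth noting that the standing assumptions that $(F,H)$ be observable and $\eig(\rom{A})\cap\eig(F)=\emptyset$ are what guarantee that $\rom{\Sigma}$ is well-posed and genuinely matches the moments (so that $\rom{\Lp}(H)=\Lp(H)$ and the expressions in (\ref{ROM_class_LP}) are meaningful); they are not otherwise invoked in this particular implication.

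I expect no serious analytic obstacle: the entire content is the factorization above together with the elementary fact that a submatrix formed from a subset of the columns of a full-column-rank matrix is itself of full column rank. The only points requiring care are the dimensional bookkeeping --- verifying that the right factor is square ($(\nu+m)\times(\nu+m)$) and invertible, and that passing from $\Ros_{\rom{\Sigma}}(\lambda)$ to its first $\nu$ columns preserves the rank property --- and correctly transporting the hypothesis into the coordinates of (\ref{ROM_class_LP_gen}).
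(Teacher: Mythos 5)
Your proof is correct, and it takes a genuinely more streamlined route than the paper's. The paper works through the moment-matching structure: it sets up the Sylvester equation $\rom{\Pi}F - \rom{A}\rom{\Pi} = \rom{B}H$, uses $\eig(\rom{A})\cap\eig(F)=\emptyset$ to conclude that its unique solution is $\rom{\Pi}=I_\nu$, and then, for each $\lambda \in \eig(\rom{A})$ with right eigenvector $f$, right-multiplies the resulting identities by $f$ to obtain
\[
\begin{bmatrix} F - \lambda I_\nu & \rom{B}\\ \Lp(H) & \rom{D}\end{bmatrix}\begin{bmatrix} f \\ -Hf\end{bmatrix} = \begin{bmatrix} 0 \\ \rom{C} f\end{bmatrix},
\]
from which $\rom{C}f \neq 0$ follows by the full-column-rank hypothesis and the eigenvector form of the PBH test. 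Your factorization
\[
\Ros_{\rom{\Sigma}}(\lambda) = \begin{bmatrix}F-\lambda I_\nu & \rom{B}\\ \Lp(H) & \rom{D}\end{bmatrix}\begin{bmatrix}I_\nu & 0\\ -H & I_m\end{bmatrix}
\]
is precisely this identity in global form \textemdash{} the paper's display is your factorization evaluated on the vector $\mathrm{col}(f,0)$ \textemdash{} but stating it as a factorization lets you bypass the Sylvester equation and eigenvectors altogether: full column rank transfers across the invertible unipotent factor, and the PBH matrix is just the first $\nu$ columns of $\Ros_{\rom{\Sigma}}(\lambda)$. What your route buys: it is purely algebraic, works uniformly in $\lambda$, and exposes that neither observability of $(F,H)$ nor $\eig(\rom{A})\cap\eig(F)=\emptyset$ is actually invoked in this implication (as you note, they serve only to make \eqref{ROM_class_LP} a genuine moment-matching model, i.e., $\rom{\Pi}=I_\nu$ and $\rom{\Lp}(H)=\Lp(H)$); indeed, in the paper's own argument the appeal to observability of $(F-\rom{B}H,H)$ to conclude $Hf\neq 0$ is redundant, since $\mathrm{col}(f,-Hf)\neq 0$ already follows from $f\neq 0$. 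What the paper's route buys is coherence with the rest of its development: the proof deliberately mirrors the eigenvector-test arguments used for the controllability statements of Theorem \ref{Thm:SSC} and explicitly exhibits the reduced model's SSC operator $\rom{\Lp}$, the object being matched. Your disposal of \eqref{ROM_class_LP_gen} via similarity invariance coincides with the paper's.
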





\begin{proof}
As the systems (\ref{ROM_class_LP}) and (\ref{ROM_class_LP_gen}) are similar, we examine observability of (\ref{ROM_class_LP}). 
For (\ref{ROM_class_LP}), consider the associated Sylvester equation $\rom{\Pi}F - \rom{A}\rom{\Pi} = \rom{B}H$, or more explicitly
\begin{equation}
\label{eq-sylrom}
\rom{\Pi}F - (F-\rom{B}H)\rom{\Pi} = \rom{B}H.
\end{equation}
Since $\eig(\rom{A}) \cap \eig(F) = \emptyset$, the unique solution of \eqref{eq-sylrom} is $\rom{\Pi}=I_\nu$. Define the primal SCC operator for this system as $\rom{\Lp}(H)\define \rom{C}\rom{\Pi} + \rom{D} H$. Substituting $\rom{\Pi} = I_{\nu}$ and $\rom{C}$ from \eqref{ROM_class_LP} and rearranging, we obtain 
%
%
\begin{equation}
\label{eq-CpRomProof}
\rom{\Lp}(H)-\rom{D} H = \rom{C} = \Lp(H) -\rom{D}H.
\end{equation}
Select $\lambda \in \eig(F-\rom{B}H)$ with $f \in \complex^{\nu}$ a right-eigenvector of $F-\rom{B}H$ associated with $\lambda$. Right-multiplying \eqref{eq-sylrom} and \eqref{eq-CpRomProof} by $f$, and exploiting that $\rom{\Pi}=I_{\nu}$, we obtain
\begin{equation}\label{Eq:TempRewritten2rom}
\begin{bmatrix} F - \lambda I_\nu & \rom{B}\\ \Lp(H) & \rom{D}\end{bmatrix} \begin{bmatrix} f \\ -Hf \end{bmatrix} = \begin{bmatrix}0 \\ \rom{C}f\end{bmatrix}.
\end{equation}
Since $(F,H)$ is observable, then $(F-\rom{B}H,H)$ is observable because $F-\rom{B}H$ is an output injection of $H$. If $\lambda$ is observable for $(F-\rom{B}H,H)$
and $\Ros_{\rom{\Sigma}}(\lambda)$ has full column rank, then $Hf \neq 0$ and the left-hand side of \eqref{Eq:TempRewritten2rom} cannot be zero, so we conclude that $\rom{C}f \neq 0$; since $f \in \ker(\lambda I_{\nu}-(F-\rom{B}H))$ was arbitrary, observability of $\lambda$ for the pair $(F-\rom{B}H,\rom{C})$ follows from the eigenvector test.
\end{proof}



Note that in the model order reduction literature, the assumptions that $\eig(\rom{A}) \cap \eig(F) = \emptyset$ and $(F,H)$ is observable hold always. The first holds by construction of the reduced-order model while the second is assumed without loss of generality when constructing $(F,H)$ from the interpolation data (target points and directions). Thus, practically, only the rank condition in Theorem~\ref{Cor:ROMObs} needs to be verified. Moreover, note that this rank condition applies to the entire family of models as several $\rom{B}$ and $\rom{D}$ can satisfy the condition. 




\smallskip

\smallskip
\subsubsection{Reduction with $\Ld$ using the Cascade $\Sigma \rightarrow \Sigma^{\prime}$}
\label{Sec:Cascade1-Red-Ld}



Consider the cascaded system of Figure \ref{Fig:Cascade2}, described by the equations \eqref{Eq:CascadeDual}, and subsequently by \begin{equation}\label{Eq:ForwardingTransformedMOR}
\begin{aligned}
\begin{bmatrix}\dot{x} \\ \dot{\zeta}\end{bmatrix} &= \begin{bmatrix}A & 0\\
0 & F
\end{bmatrix}\begin{bmatrix}x \\ \zeta\end{bmatrix} + \begin{bmatrix}B \\ \Ld(G)\end{bmatrix}u.
\end{aligned}
\end{equation}
after the coordinate transformation $\zeta := \eta - Mx$ where $M = \Syld^{-1}(GC)$. As observed in Section~\ref{Sec:SSC_DI}, the matrix $\Ld(G)$ is the input matrix for the dynamics of the deviation variable $\zeta$, and thus influences how the control $u$ impacts the deviation from the invariant subspace. Moreover, if $A$ is Hurwitz, $x(0)=0$, and $\zeta(0)=0$, then the impulse response matrix of the interconnection~(\ref{Eq:ForwardingTransformedMOR}) is given by
$$
\zeta(t) = e^{Ft} \Ld(G) \mathsf{1}(t),
$$
where $\mathsf{1}(t)$ denotes the unit step signal.
Thus, matching the moments $\Ld(G)$ can be interpreted as matching the impulse response of $\Sigma$ \emph{filtered} through $\Sigma'$. The problem of moment matching is then solved if $\rom{\Sigma}$ is designed such that the matching condition
\begin{equation}\label{Eq:ROMMatching1}
\rom{\Ld}(G)=\Ld(G)
\end{equation}
holds where $\rom{\Ld}(G) \define -\rom{M}\rom{B} + G\rom{D} = -\rom{\Syld}^{-1}(G\rom{C})\rom{C} + G\rom{D}$. In summary, we want to determine $(\rom{A},\rom{B},\rom{C},\rom{D})$ such that $\rom{M} := \rom{\Syld}^{-1}(G\rom{C})$ and the condition~(\ref{Eq:ROMMatching1}) holds. Straightforward computations show that this is achieved by the selection
\begin{equation}
\label{ROM_class_Ld_gen}
\begin{aligned}
\rom{A}&=\rom{M}^{-1}F\rom{M} + \rom{M}^{-1}G \rom{C},\\
\rom{B}&=-\rom{M}^{-1}\left(\Ld(G)-G\rom{D}\right),
\end{aligned}
\end{equation}
for any invertible $\rom{M}$, any $\rom{C}$ such that $\eig(\rom{A}) \cap \eig(F) = \emptyset$ and any $\rom{D}$. Note that (\ref{ROM_class_Ld_gen}) and (\ref{ROM_class_Ld}) are similar representations, with change of coordinates given by $\rom{M}$. If $\rom{C}$ is selected such that $\rom{A}$ is Hurwitz, then $\rom{\Sigma}$ and $\Sigma$ have the same impulse response filtered through $\Sigma^{\prime}$. We can now leverage the ideas in Theorem \ref{Thm:SSC} to obtain the following controllability result, which is dual to the observability result of Theorem~\ref{Cor:ROMObs}.

\begin{theorem}
[Controllability of the Reduced-Order Model]\label{Cor:ROMCtrb}
Let $\rom{A}$ and $\rom{B}$ be defined in (\ref{ROM_class_Ld}) or (\ref{ROM_class_Ld_gen}). 
Suppose that $(F,G)$ is controllable, $\eig(\rom{A}) \cap \eig(F) = \emptyset$, and consider the operator $\Ld$ defined in \eqref{Eq:GenL}. For any $\rom{C}$ and $\rom{D}$ such that 
$$
\begin{bmatrix}F-\lambda I_{\nu}  & \Ld(G)\\
\rom{C} & \rom{D}\end{bmatrix}
$$ 
has full row rank for all $\lambda \in \eig(\rom{A})$, the pair $(\rom{A},\rom{B})$ is controllable.
\end{theorem}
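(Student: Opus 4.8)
The plan is to mirror the proof of Theorem~\ref{Cor:ROMObs} in dual form, systematically replacing right-eigenvectors, output injection, and full-column-rank with left-eigenvectors, state feedback, and full-row-rank. Since \eqref{ROM_class_Ld} and \eqref{ROM_class_Ld_gen} are related by the similarity $\rom{M}$, it suffices to treat \eqref{ROM_class_Ld}, where $\rom{A} = F - G\rom{C}$ and $\rom{B} = \Ld(G) - G\rom{D}$. First I would write the dual Sylvester equation associated with the reduced model, $\rom{M}\rom{A} - F\rom{M} = G\rom{C}$, i.e. $\rom{M}(F - G\rom{C}) - F\rom{M} = G\rom{C}$; since $\eig(\rom{A}) \cap \eig(F) = \emptyset$ this has a unique solution, which one checks directly to be $\rom{M} = -I_\nu$. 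The matching condition $\rom{\Ld}(G) = \Ld(G)$ then reduces to $\rom{B} = \Ld(G) - G\rom{D}$, which is precisely the defining relation of \eqref{ROM_class_Ld}.

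Next I would fix $\lambda \in \eig(\rom{A})$ with a left-eigenvector $\varphi$, so $\varphi^{\sf T}\rom{A} = \lambda\varphi^{\sf T}$, and left-multiply both the Sylvester equation (with $\rom{M} = -I_\nu$) and the matching relation $\rom{B} = \Ld(G) - G\rom{D}$ by $\varphi^{\sf T}$. The former yields $\varphi^{\sf T}(F - \lambda I_\nu) = (\varphi^{\sf T}G)\rom{C}$ and the latter yields $\varphi^{\sf T}\Ld(G) - (\varphi^{\sf T}G)\rom{D} = \varphi^{\sf T}\rom{B}$. Assembling these two identities produces the dual of \eqref{Eq:TempRewritten2rom},
\[
\begin{bmatrix}\varphi^{\sf T} & -\varphi^{\sf T}G\end{bmatrix}\begin{bmatrix}F - \lambda I_\nu & \Ld(G)\\ \rom{C} & \rom{D}\end{bmatrix} = \begin{bmatrix}0 & \varphi^{\sf T}\rom{B}\end{bmatrix}.
\]

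To conclude, I would observe that since $\rom{A} = F + G(-\rom{C})$ is obtained from $(F,G)$ by state feedback, controllability of $(F,G)$ implies controllability of $(\rom{A},G)$; hence by the eigenvector test $\varphi^{\sf T}G \neq 0$ for every left-eigenvector $\varphi$ of $\rom{A}$, so the row vector $\begin{bmatrix}\varphi^{\sf T} & -\varphi^{\sf T}G\end{bmatrix}$ is nonzero. If the displayed matrix has full row rank for this $\lambda \in \eig(\rom{A})$, its left null space is trivial, so the right-hand side above cannot vanish, forcing $\varphi^{\sf T}\rom{B} \neq 0$. Since $\lambda$ and $\varphi$ were arbitrary, the eigenvector test delivers controllability of $(\rom{A},\rom{B})$. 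The only genuine care required --- and thus the main (modest) obstacle --- is the sign and transpose bookkeeping of the dualization: confirming that $\rom{M} = -I_\nu$ (rather than $+I_\nu$) solves the reduced dual Sylvester equation, and that left-multiplication by $\varphi^{\sf T}$ reproduces exactly the matrix $\left[\begin{smallmatrix}F - \lambda I_\nu & \Ld(G)\\ \rom{C} & \rom{D}\end{smallmatrix}\right]$ of the statement, signs included.
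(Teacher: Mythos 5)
Your proposal is correct and is exactly the argument the paper intends: the paper omits the proof of this theorem, stating only that it is dual to the observability result (Theorem~\ref{Cor:ROMObs}), and your dualization mirrors that proof step for step — reduction to \eqref{ROM_class_Ld} via the similarity $\rom{M}$, the unique solution $\rom{M}=-I_\nu$ of the reduced dual Sylvester equation (the sign is right, given $\Syld(M)=MA-FM$ and $\Ld(G)=-MB+GD$), the assembled block identity dual to \eqref{Eq:TempRewritten2rom}, and the PBH eigenvector test with state feedback replacing output injection. The only cosmetic remark is that full row rank of the displayed matrix already forces $\varphi^{\sf T}\rom{B}\neq 0$ because $\varphi\neq 0$ makes the left factor nonzero, so the intermediate conclusion $\varphi^{\sf T}G\neq 0$ is not strictly needed — but the paper's own proof of Theorem~\ref{Cor:ROMObs} carries the same redundancy, so this is faithful rather than flawed.
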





\begin{remark}[\bf Two-Sided Moment Matching]\label{Rem:MOR2s} A natural question that arises is whether it is possible to select the free parameters in \eqref{ROM_class_LP} or \eqref{ROM_class_Ld} so that additional moments are matched. This question has an affirmative answers as long as the additional moments are computed at new interpolation points. Specifically, consider then the change of notation $(F,H) \to (F^H,H)$ for the family \eqref{ROM_class_LP} and $(F,G) \to (F^G,G)$ for the family \eqref{ROM_class_Ld}, with $\eig(F^H) \cap \eig(F^G) = \emptyset$. Then the selection
$$
\rom{B} \define (M\Pi)^{-1} M B \quad (\text{resp.} \rom{C} \define C\Pi (M\Pi)^{-1})
$$
is such that the family \eqref{ROM_class_LP} (resp. \eqref{ROM_class_Ld}) matches the right interpolation data $(F^H,H)$ and the left interpolation data $(F^G,G)$, simultaneously. This result, which is due to \cite{TCI:15}, corresponds to a two-sided interconnection $\Sigma^{\prime} \rightarrow \Sigma \rightarrow \Sigma^{\prime}$; the development of a complete interpretation of this two-sided interconnection via SSC operators is beyond the scope of this section.
       \hfill \oprocend
\end{remark}

\begin{remark}[\bf Model Reduction Literature Notes]\label{Rem:MOR}
The relation between moments and the solution of a Sylvester equation was pointed out in \cite{KAG-AV-PVD:04,KAG-AV-PVD:06}. The relation between moments and the steady-state response of $\Sigma^{\prime} \to \Sigma$ was recognized in \cite{Ast:07b}. 
The interconnection $\Sigma \to \Sigma^{\prime}$ was firstly introduced in \cite{AA:10a}, while the two-sided interconnection $\Sigma^{\prime} \to \Sigma \to \Sigma^{\prime}$ was given in \cite{TCI:15}. A completely equivalent framework to solve the (two-sided) moment matching problem (for zero-order moments) is represented by the Loewner framework \cite{AJM-ACA:07}. An interconnection interpretation of the Loewner framework was given in \cite{JDS-AA:19}. 
We highlight again, that the main novelty of this section is the characterization of the structural properties of families of reduced-order models based on non-resonance-type conditions, as an application of the results of Section~\ref{Sec:SSC}.
       \hfill \oprocend
\end{remark}


\subsection{Example: Four-Tank Flow Control System}
 
 {\tb We illustrate the stabilization and estimation design procedures by applying them to a (linearized) four-tank flow control system; see \cite{KHJ:00} for the description of the system and an illustration.}

{\tb First, the stabilization results of Section \ref{Sec:Cascade-Stab} will be applied for the problem of output regulation, i.e., combined reference tracking and disturbance rejection control; see \cite[Chapter 4]{AI:17} for an overview of linear output regulation.} In this context, the system $\Sigma$ in Figure \ref{Fig:Cascade2} is the four-tank system, which has two control inputs $(u_1,u_2)$ (valve flow rates) and two outputs $(e_1,e_2)$ (water level tracking errors in the lower tanks) which should be regulated to zero. The system $\Sigma^{\prime}$ in Figure \ref{Fig:Cascade2} models a \emph{post-processing internal model} {\tb \cite[Eq. (4.28)]{AI:17}}, designed here with
\begin{subequations}
\begin{align}
F &= \mathrm{blkdiag}\left(\left[\begin{smallmatrix}
0
\end{smallmatrix}\right],\left[\begin{smallmatrix}
0 & 1\\
-\omega_1^2 & 0
\end{smallmatrix}\right],\left[\begin{smallmatrix}
0 & 1\\
-\omega_2^2 & 0
\end{smallmatrix}\right]\right) \otimes I_2\\
G &= \mathrm{col}(1,0,1,0,1)\otimes I_2
\end{align}
\end{subequations}
to reject constant disturbances and disturbances at frequencies $\omega_1 = 0.001$ rad/s and $\omega_2 = 0.005$ rad/s. {\tb An external disturbance is present, namely an exogenous water flow into the second upper tank given by $d(t) = 20 + 20\sin(\omega_1t) + 30\sin(\omega_2 t)$ for $t \geq 0$.} {\tb Figure \ref{Fig:Sim1} shows the closed-loop response to such a disturbance when stabilizers are designed using the six methods of Section \ref{Sec:Cascade-Stab}, along with a baseline method of LQR applied to the composite system \eqref{Eq:CascadeDual1}; tuning choices have been made here so that the plots are distinguishable}.

 {\tb To illustrate the estimator designs of Section \ref{Sec:Cascade-Obsv}, the example is slightly modified. The control inputs are kept at zero, and the goal is now to estimate the unmeasured inflow disturbance entering the upper tank. In the context of Figure \ref{Fig:Cascade1}, $\Sigma$ is again the four-tank system with measurement $y = e_2 \in \real$ (the height error in the second lower tank) and input $u = d \in \real$, the flow disturbance entering the upper tank. The system $\Sigma^{\prime}$ is the autonomous system $\dot{\eta} = F\eta$, $d = H\eta$, where
\begin{subequations}\label{Eq:InternalModel}
\begin{align}\label{Eq:InternalModel-F}
F &= \mathrm{blkdiag}\left(\left[\begin{smallmatrix}
0
\end{smallmatrix}\right],\left[\begin{smallmatrix}
0 & 1\\
-\omega_1^2 & 0
\end{smallmatrix}\right],\left[\begin{smallmatrix}
0 & 1\\
-\omega_2^2 & 0
\end{smallmatrix}\right]\right)\\
H &= \left[\begin{smallmatrix}2,\omega_1,0,\omega_2,0\end{smallmatrix}\right].
\end{align}
\end{subequations}
 With an appropriate initial condition, \eqref{Eq:InternalModel} produces the previously described disturbance $d(t)$.} Figure \ref{Fig:Sim2} plots the true disturbance and the estimated disturbance $\hat{d}(t) = H\hat{\eta}(t)$ produced by three of the proposed designs (the responses for the other three are nearly identical), {\tb along with a baseline linear-quadratic estimator (LQE) applied to the composite system \eqref{Eq:EstErrorDynamics}.}

\begin{figure}[t]
\begin{center}
\includegraphics[width=1\columnwidth]{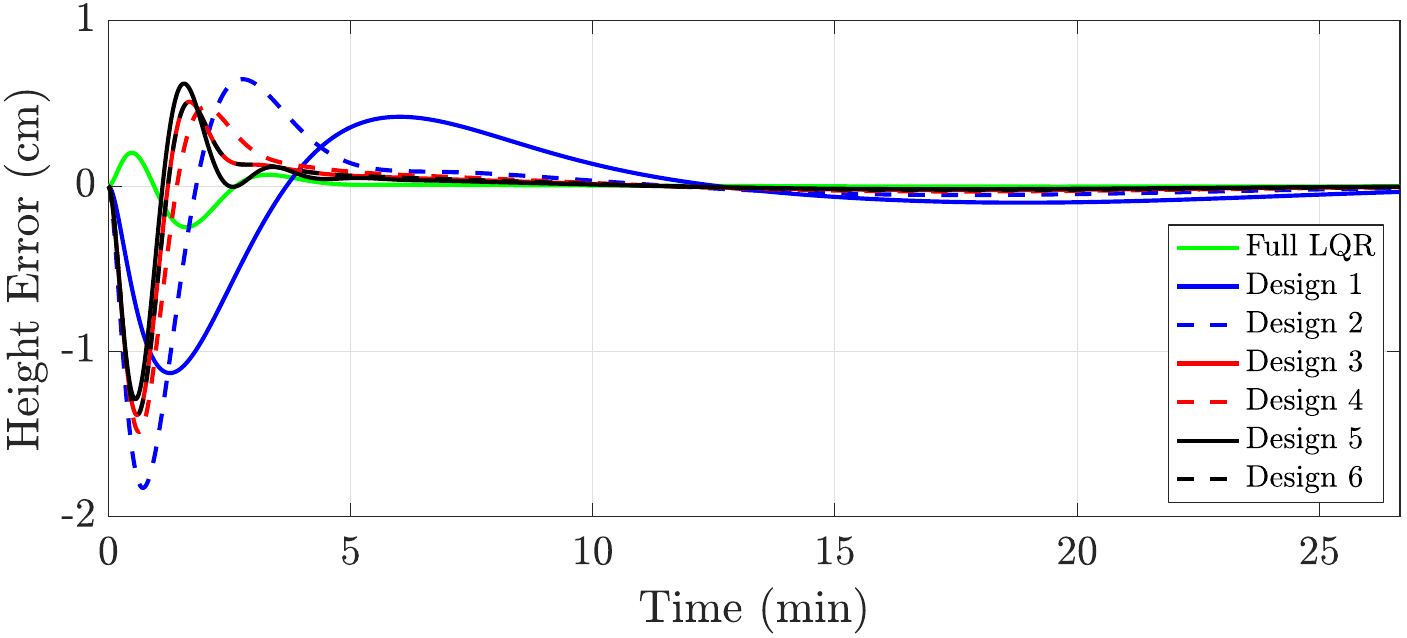}
\caption{Disturbance rejection response in four-tank system. Tuning parameters: gain $K$ computed via LQR with $Q = \mathrm{diag}(3,3,1,1)$ and $R = 0.1I$. Subsequent gains required in the designs computed via LQR with $Q = I$ and $R = \{10^6, 5\cdot 10^6, 10^4, 3\cdot10^6, 10^6,10^4\} \times I$.}
\label{Fig:Sim1}
\end{center}
\end{figure}

\begin{figure}[t]
\begin{center}
\includegraphics[width=1\columnwidth]{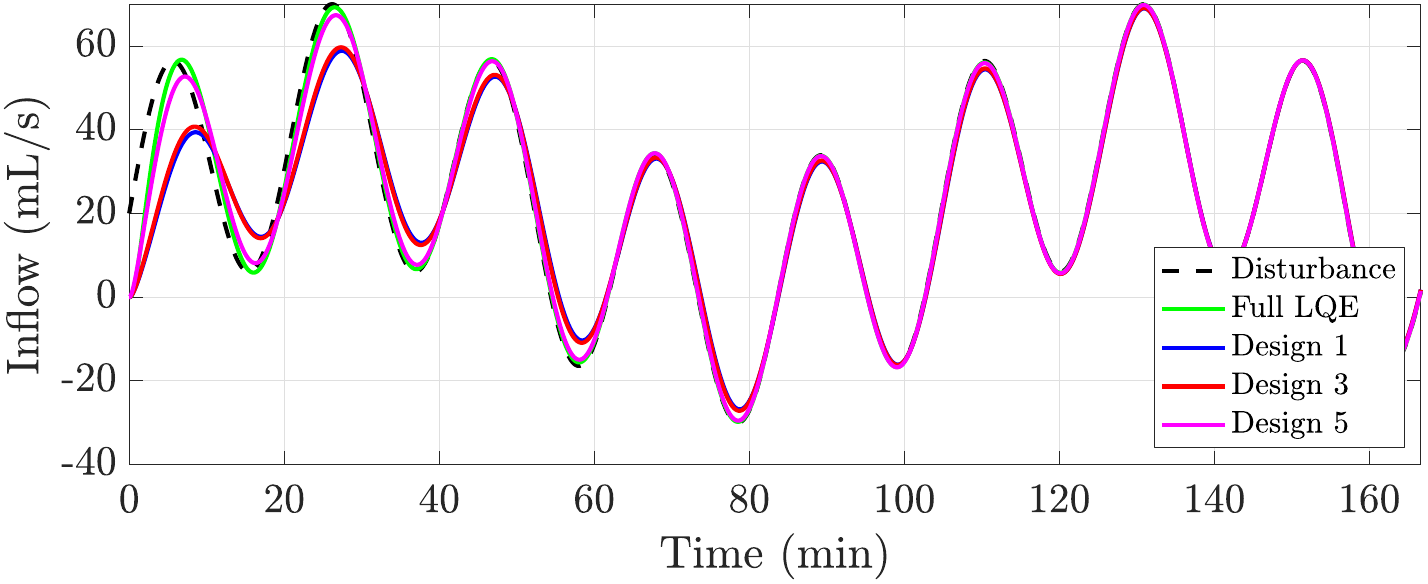}
\caption{Disturbance estimation in four-tank system. Tuning parameters: $L_x$ computed via LQR with $Q = \mathrm{diag}(3,3,1,1)$ and $R = I$. Subsequent gains required in the designs computed via LQR with $Q = I$ and $R = \{10^2, 10^2, 10^5\} \times I$.}
\label{Fig:Sim2}
\end{center}
\end{figure}
\color{black}

\section{Preliminary Results and Outlook for Nonlinear Systems}
\label{Sec:Nonlinear}

{\tb We now provide an outlook on how the main ideas of Section \ref{Sec:SSC} and Section \ref{Sec:Applications} may generalize in the nonlinear context.} The intent is not to comprehensively iron out all the details of such a development, but to lay the groundwork for a nonlinear theory paralleling the more complete linear theory of Sections \ref{Sec:SSC} -- \ref{Sec:Applications}. Section \ref{Sec:NonlinearInvariance} defines nonlinear SSC operators for nonlinear cascades in which one subsystem is LTI. Section \ref{Sec:StabilizationNonlinear} identifies which of our linear stabilization results have known generalizations to nonlinear cascades, thereby identifying unexplored directions for nonlinear design. Section \ref{Sec:ObsvCascadeNonlinear} presents a novel cascade observer design, illustrating how our catalog of linear designs can inspire new nonlinear designs. Finally, Section~\ref{Sec:MORNonlinear} identifies unexplored directions for nonlinear model reduction.

\subsection{Nonlinear Steady-State Cascade Operators}
\label{Sec:NonlinearInvariance}

Paralleling Section~\ref{Sec:SSC_DI}, we now study two nonlinear extensions of cascade systems, leading to corresponding invariance equations and SSC operators. In particular, 
using the same dimensions for all variables, consider the linear and nonlinear systems
\begin{equation}\label{Eq:SigmaNLPrimeSigmaL}
\Sigma_{\rm L}:\,\,\begin{cases}\,\begin{aligned}
\dot{x} &= Ax+ Bu\\
y &= Cx + Du
\end{aligned}\end{cases} \quad \Sigma_{\rm NL}':\,\,\begin{cases}\,\begin{aligned}
\dot{\eta} &= f(\eta) + g(\eta)v\\
z &= h(\eta)  
\end{aligned}\end{cases}
\end{equation}
and the cascade
$\Sigma_{\rm NL}' \to \Sigma_{\rm L}$
defined by the interconnection $u = z$ and $v = 0$; this is analogous to the cascade of Figure \ref{Fig:Cascade1}. 
We associate with \eqref{Eq:SigmaNLPrimeSigmaL} a \textit{primal invariance operator}
\begin{equation}\label{Eq:PrimalNonlinearSylvester}
\sylp(\pi)(\eta) \define \dfrac{\partial \pi}{\partial \eta}(\eta)f(\eta) - A \pi(\eta),
\end{equation}
where the notation $\sylp(\pi)(\eta)$ indicates the operator $\sylp$ acting on a function $\pi$ {\tb as its argument}, with the result being evaluated at $\eta \in \real^{\nu}$. Assuming (see Remark \ref{Rem:SolveInvariance}) that the abstract linear differential operator $\sylp$ is invertible, the associated invariance equation
\begin{equation}
\label{Eq:PrimalInvarianceEquation}
\sylp(\pi)(\eta) = \dfrac{\partial \pi}{\partial \eta}(\eta)f(\eta) - A \pi(\eta) = B h(\eta)
\end{equation}
will possess a unique solution $\pi = \sylp^{-1}(Bh)$. We now choose to think of $(A, B, C, D)$ and $f$ as fixed data and interpret the solution as a function of the data $h$. Based on this, we call 
\begin{equation}\label{Eq:NonlinearPrimalSSC}
\begin{aligned}
\lp(h)(\eta) &\define
C\pi(\eta) + Dh(\eta)\\
&= C\sylp^{-1}(Bh)(\eta) + Dh(\eta)
\end{aligned}
\end{equation}
the \textit{primal nonlinear SCC operator}. Equation \eqref{Eq:PrimalInvarianceEquation} generalizes the Sylvester equation $\Sylp(\Pi) = \Pi F - A\Pi = BH$ from \eqref{Eq:GenSyl1} and endows the cascade $\Sigma_{\rm NL}' \to \Sigma_{\rm L}$ with an invariant manifold $\setdef{(x,\eta)}{x = \pi(\eta)}$. Similarly, the nonlinear SSC operator $\lp$ generalizes the linear matrix SSC operator $\Lp$ from \eqref{Eq:GenL1}. If $A$ is Hurwitz, the trajectories of the above cascade converge to the invariant manifold, on which the driving system evolves as $\dot{\eta} = f(\eta)$. Thus, we obtain the unforced dynamics on the
invariant manifold, which are now simply (cf. \eqref{Eq:CascadePrimalObservation})
\[
\dot \eta = f(\eta) \qquad
y = \lp(h(\eta)).
\]

Theorem \ref{Thm:SSC2} now has immediate implications for the linearization of these dynamics. Observe that with $F \define \frac{\partial f}{\partial \eta}(0)$ and $H \define \frac{\partial h}{\partial \eta}(0)$, the linearization of $\pi(\eta)$ at the origin is given by $\frac{\partial \pi}{\partial \eta}(0) = \Sylp^{-1}(BH)$, and correspondingly, the linearization of the SSC operator $\lp(h)(\eta)$ is $\frac{\partial \lp(h)}{\partial \eta}(0) = \Lp(H)$. All results of Theorem \ref{Thm:SSC2} may now be applied as needed; for example, the linearized dynamics on the invariant manifold will be detectable if $(F,H)$ is detectable and if $\Ros_{\Sigma}(\lambda)$ has full column rank for all $\lambda \in \eig(F) \cap \complex_{\geq 0}$. Beyond linear analysis, one might ask under what conditions the pair $(f,\lp(h))$ possesses a corresponding \emph{nonlinear} detectability property. In the spirit of Theorem \ref{Thm:SSC2}, one would perhaps impose such a detectability property on the pair $(f,h)$, then seek to impose an appropriate \emph{nonlinear} non-resonance condition (e.g., \cite{LM-AI-AS:04,LW-LM-CW-HS:20}) between $\Sigma = (A,B,C,D)$ and the driving dynamics $\dot{\eta} = f(\eta)$. While such a development is outside our present scope, in Section \ref{Sec:ObsvCascadeNonlinear} we will place assumptions on $\ld(h)$ that are sufficient for successful observer design in the framework of metric-based differential dissipativity \cite[Section 4]{PB-VA-DA:22}.

As a second case of interest, consider the systems
\begin{equation}\label{Eq:SigmaNLSigmaLPrime}
\Sigma_{\rm NL}:\,\,\begin{cases}\,\begin{aligned}
\dot{x} &= a(x) + b(x)u\\
y &= c(x) + d(x)u
\end{aligned}\end{cases} \quad \Sigma_{\rm L}':\,\,\begin{cases}\,\begin{aligned}
\dot{\eta} &= F\eta + Gv\\
z &= H\eta + Jv
\end{aligned}\end{cases}
\end{equation}
and the cascade $\Sigma_{\rm NL} \to \Sigma_{\rm L}^{\prime}$ defined by the interconnection $v = y$, akin to the cascade of Figure \ref{Fig:Cascade2}. We associate with this cascade the \emph{dual invariance operator}
\begin{equation}\label{Eq:DualNonlinearSylvester}
\syld(\mu)(x) \define \tfrac{\partial \mu}{\partial x}(x)a(x) - F\mu(x)
\end{equation}
and an associated invariance equation
\begin{equation}\label{Eq:DualInvarianceEquation}
\syld(\mu)(x) = \tfrac{\partial \mu}{\partial x}(x)a(x) - F\mu(x) = Gc(x).
\end{equation}
Assuming again invertibility of this operator {\tb (see Remark \ref{Rem:SolveInvariance})}, we express the solution of \eqref{Eq:DualInvarianceEquation} as $\mu = \syld^{-1}(Gc)$ which is a function of the data $G$, and define the {\tb \emph{dual nonlinear SSC operator}}
\begin{equation}\label{Eq:NonlinearDualSSC}
\begin{aligned}
\ld(G)(x) &\define -\tfrac{\partial \mu}{\partial x}(x)b(x) + G d(x)\\
&= -\left[\tfrac{\partial}{\partial x}\syld^{-1}(Gc)(x)\right]b(x) + Gd(x).
\end{aligned}
\end{equation}
The equation \eqref{Eq:DualInvarianceEquation} generalizes the Sylvester equation $\Syld(M) = MA - FM = GC$ from \eqref{Eq:GenSyl2}, and endows the cascade $\Sigma_{\rm NL} \to \Sigma_{\rm L}^{\prime}$ with an invariant manifold $\setdef{(x,\eta)}{\eta = \mu(x)}$. Similarly, the nonlinear SSC operator \eqref{Eq:NonlinearDualSSC} generalizes the linear matrix operator $\Ld(G)$ from \eqref{Eq:GenL2}. 

With error variable $\zeta = \eta - \mu(x)$ capturing the distance to the invariant manifold, routine calculations now show that (cf. \eqref{Eq:CascadeDualTransformed})
\[
\begin{aligned}
\dot{\zeta} = F\zeta + \ld(G)(x)u.
\end{aligned}
\]
Similar comments to before hold regarding relations between the linearization of these dynamics and the SSC operator results of Theorem \ref{Thm:SSC}; for instance, the controllability of the linearization was leveraged in \cite[Lemma 1]{DA-LP-LM:22b}. We discuss in the next subsections how the invariance equations \eqref{Eq:PrimalInvarianceEquation}, \eqref{Eq:DualInvarianceEquation} and nonlinear SSC operators \eqref{Eq:NonlinearPrimalSSC}, \eqref{Eq:NonlinearDualSSC}  operators can be 
exploited
in the contexts of cascade stabilization, estimation and model reduction.



\begin{remark}[{\tb Existence of Nonlinear SSC Operators}]\label{Rem:SolveInvariance}
{\tb Existence of the nonlinear SSC operators can be ensured under appropriate technical conditions. For the case of $\lp$, if}
\begin{itemize}
\item[(A0)] {\tb $f$ and $h$ are continuously differentiable with $f(0) = h(0) = 0$,}
    \item[(A1)] the {\tb (unique)} solution to $\dot{\eta} = f(\eta)$ evolves in an open forward-invariant set $\mathcal{O} \subset \real^{\nu}$ {\tb containing the origin}, and
    \item[(A2)] $A$ is Hurwitz,
\end{itemize}
then
\begin{equation}\label{eq:solutionInvariancePi}
    \pi(\eta) =   \int_{-\infty}^0 e^{-As}B h(\bar \eta(s,\eta))\,\mathrm{d} s,
\end{equation}
{\tb is the unique continuously differentiable solution to \eqref{Eq:PrimalInvarianceEquation} satisfying the boundary condition $\pi(0) = 0$, where $\bar \eta(t,\eta_0)$ denotes the solution to $\dot \eta = f(\eta)$
at time $t$ with initial condition $\eta_0$; see \cite[Theorem 2.4]{VA-LP:06} for a related result, and \cite{AI-CIB:08} for appropriate nonlinear notions of steady-state.} Note that (A1)--(A2) are sufficient to ensure that the eigenvalues of $A$ and $\tfrac{\partial f}{\partial \eta}(0)$ are disjoint, ensuring unique solvability of the Sylvester  {\tb equation arising via linearization of \eqref{Eq:PrimalInvarianceEquation} at the origin. It follows that $\lp$ defined in \eqref{Eq:NonlinearPrimalSSC} is well-defined, with domain being the set of continuously differentiable functions $h$ vanishing at the origin.} {\tb Similarly for the case of $\ld$, if}
\begin{itemize}
\item[(B0)] {\tb $a, b, c, d$ are continuously differentiable with $a(0) = c(0) = 0$}
    \item[(B1)]   the origin of $\dot x = a(x)$ is globally asymptotically stable, and
    \item[(B2)] {\tb $\eig(F) \subset \CC_{\leq 0}$ with all eigenvalues on the imaginary axis being semi-simple,}
\end{itemize}
{\tb then the unique continuously differentiable solution to the invariance equation \eqref{Eq:DualInvarianceEquation} satisfying the boundary condition $\mu(0) = 0$ is}
\begin{equation}\label{eq:solutionInvarianceMu}
    \mu(x) = \int_0^\infty e^{Fs}G c(\bar x(s,x))\,\mathrm{d}s,
\end{equation}
where $\bar x(t,x_0)$ denotes the solution to $\dot x = a(x)$ at time $t$ with initial condition $x_0$. See, for instance, 
\cite[Lemma IV.2]{FM-LP:96}. {\tb It follows that $\ld$ in \eqref{Eq:NonlinearDualSSC} is a well-defined map on matrices $G$ into the space of continuous functions.} Similar to the comments in Remark 3, (A2) and (B1) could be enforced via a preliminary state-feedback design.

    The integral solutions above are complicated to obtain in closed-form. Classical numerical solutions based on power series expansions have been proposed in, e.g., \cite{AJK:92}; see \cite[Chapter 4]{JH:04} for an extensive exposition. Neural network solutions have been proposed in, e.g., \cite{JW-JH-SSTY:00}, and more recently in  \cite{JP-MN:21} to approximate the PDE \eqref{Eq:DualInvarianceEquation}. \tb{Recent results based on the Galerkin residual method have been presented in \cite{CD-AA-DK-AM-GS-JS:24}.} It is presently an open question how to extend the frequency response methodologies
    of Section~\ref{Sec:Moments} to this nonlinear case. 
    \hfill \oprocend
\end{remark}



\subsection{Stabilization of Nonlinear Cascade Systems}
\label{Sec:StabilizationNonlinear}

The stabilization of nonlinear systems in cascade form has been extensively studied since the late 1980's; we restrict our attention to the nonlinear approaches mostly closely related to the stabilization procedures proposed in Section~\ref{Sec:Cascade-Stab}.
First, a general (often, recursive) methodology 
called ``forwarding'' was developed in \cite{FM-LP:96}, see also \cite[Chapter 6.2]{RS-MJ-PK:97}. Most available forwarding results 
focus on the cascade $\Sigma_{\rm NL} \to \Sigma_{\rm L}^\prime$ described below \eqref{Eq:SigmaNLSigmaLPrime}. 

\color{black}
When $F=0$, a summary of different design choices
can be found in \cite[Section III]{DA-LP:17}.
This covers the designs of 
Sections~\ref{Sec:Cascade-Stab-Ld}-a) and
the small-gain approach in  \ref{Sec:Cascade-Stab-Ld-LowGain}-a).
Such approaches have been also extended
to a contraction framework
in \cite{MG-DA-VA-LM:22} in order to achieve global properties.
Again for the case with $F = 0$, the low-gain design of Section~\ref{Sec:LowGainStabLp}-a)
has been extended to the nonlinear case in \cite{JWSP:20a,PL-GW:23}
via  singular perturbations techniques.
\color{black}

Motivated by output regulation problems, such designs have been further extended (under appropriate conditions) to the case where the spectrum of $F$ is simple and 
lies on the imaginary axis
\cite{DA-LP-LM:22b}, and extended in the context of contraction and incremental stability, see, e.g. \cite{MG-DA-VA-LM:24}. We further note strong similarities with the so-called ``immersion and invariance'' approach (e.g., \cite{AA-RO:03}) where the same invariance equations are used for feedback design. In this approach however, the target system is selected as a virtual stable system, and is not considered as a part of the dynamics.

To the best of our knowledge, the following problems inspired by the LTI results of Section \ref{Sec:Applications} remain open in the context of nonlinear systems:
the stabilization of cascade $\Sigma_{\rm NL}\to \Sigma_{\rm L}^\prime$ with an unstable matrix $F$; the stabilization of  cascades 
$\Sigma_{\rm L}\to \Sigma_{\rm NL}^\prime$; 
the extension of the stabilization procedures proposed in 
Sections~\ref{Sec:Cascade-Stab-Ld}-b),  \ref{Sec:Cascade-Stab-Ld-LowGain}-b)
and \ref{Sec:LowGainStabLp}
{\tb for $F\neq 0$.}
Finally, we note that these open stabilization problems are directly relevant to design approaches in feedback-based optimization, wherein the stationarity conditions of certain optimization problems appear as nonlinear elements in a cascade; see \cite{GB-JC-JIP-ED:21, GB-DLM-MHDB-SB-RS-JL-FD:24} for recent work. The further development of these nonlinear stabilization approaches therefore appears to be a promising direction to enable the design of feedback-based optimization controllers for time-varying optimization problems.



\subsection{Observation of the Cascade $\Sigma_{\rm NL}^{\prime} \to \Sigma_{\rm L}$}
\label{Sec:ObsvCascadeNonlinear}

We present now a {\tb novel} nonlinear extension of the disturbance estimator design presented in Section \ref{Sec:EstDesignLp} a). In particular, consider the
cascade  $\Sigma_{\rm NL}^{\prime} \to \Sigma_{\rm L}$  of equation 
\eqref{Eq:SigmaNLPrimeSigmaL} with $D = 0$ and $v = 0$. 
Setting $u = y + \bar u$, we obtain the composite system
\begin{equation}
    \label{eq:nonlinear_sys_observed}
\begin{aligned}
\dot{\eta} = f(\eta), \qquad \dot{x} = Ax +Bh(\eta) +  B\bar u, \qquad y = Cx,
\end{aligned}
\end{equation}
where $\bar u$ is a measurable input signal. In this setting, the $\eta$-dynamics model unmeasured disturbances affecting the plant
$x$-dynamics for which a model generator $f$ is known (differently from 
\cite[Section III.A]{WHC-JY-LG-SL:15} where a linear model is selected). In the rest of this section, we assume that assumptions (A0)--(A2) of Remark \ref{Rem:SolveInvariance} hold.

{\tb Consider a candidate} full-state estimator of the form 
\begin{equation}\label{Eq:CandidateEstimatorNonlinear_orignal_coor}
\begin{aligned}
\dot{\hat{x}} &= A\hat{x} + Bh(\hat \eta) + B\bar u - w_x, \qquad \hat{y} = C\hat{x},\\
\dot{\hat{\eta}} &= f(\hat{\eta}) - w_{\eta},\\
\end{aligned}
\end{equation}
with $w_x$ and $w_\eta$ output injection terms to be selected.
To this end, 
we consider first the following 
change of coordinates
$x\mapsto \xi : = x- \pi(\eta)$
with $\pi = \sylp^{-1}(Bh)$ from \eqref{Eq:PrimalInvarianceEquation}.
In the new coordinates, \eqref{eq:nonlinear_sys_observed} becomes
$$
\begin{aligned}
\dot \eta  = f(\eta),\qquad     \dot \xi = A\xi + B\bar u, \qquad y = C\xi + \lp(h)(\eta), 
\end{aligned}
$$
where $\lp(h)(\eta)$ is the primal SSC operator defined in \eqref{Eq:NonlinearPrimalSSC}.
Similarly, in the new coordinates
$\hat x\mapsto \hat\xi : = \hat x- \pi(\hat \eta)$,
the estimator 
\eqref{Eq:CandidateEstimatorNonlinear_orignal_coor} reads
\begin{equation}\label{Eq:CandidateEstimatorNonlinear}
\begin{aligned}
\dot{\hat{\xi}} &= A\hat{\xi} + B\bar u
-w_x +\dfrac{\partial \pi}{\partial \eta}(\hat \eta)w_\eta,
\\
\dot{\hat{\eta}} &= f(\hat{\eta}) - w_{\eta},\\
\end{aligned}\qquad \begin{aligned}
\hat{y} &= C\hat{\xi} + \lp(h)(\hat \eta)\\
\hat x &= \hat \xi + \pi(\hat \eta).
\end{aligned}
\end{equation}
Letting $\tilde{x},\tilde{\eta},\tilde{y}$ denote the usual error variables, and selecting 
\begin{equation}\label{Eq:CorrectionSelections}
w_{\eta} = K_{\eta}(\hat y-{y}), \quad w_x = \tfrac{\partial \pi}{\partial \eta}(\hat \eta)w_\eta,
\end{equation}
with gain matrix $K_{\eta}$ to be chosen, 
the error dynamics are
\begin{equation}\label{Eq:CandidateEstimatorErrorNonlinear}
\begin{aligned}
\dot{\tilde{\xi}} &= A\tilde{\xi}\\
\dot{\tilde{\eta}} &= f(\eta) -f(\eta-\tilde\eta ) -K_{\eta}\tilde y\\
\tilde{y} &= C\tilde \xi + \lp(h)(\eta) - \lp(h)(\eta -\tilde\eta).
\end{aligned}
\end{equation}
Following
\cite[Section 4]{PB-VA-DA:22}, we know that if the pair $(f,\lp(h))$ is \emph{differentially detectable}, then 
one can design an observer for the $\eta$-dynamics
if the (fictitious) output $\lp(h)(\eta)$ is available; we seek now to build on this observation. To simplify the development of this section, we impose the following assumptions:
\begin{itemize}
    \item[(O1)] There exist a matrix $L \in \RR^{p\times \nu}$, a {\tb $C^1$} function $\psi:\RR^{p}\to\RR^{p}$, and a matrix $R \in \RR^{p\times p}$, $R\succ 0$, such that
    $$
    \lp(h)(\eta) = \psi(L\eta) , \qquad R\dfrac{\partial \psi}{\partial s} (s)+\dfrac{\partial \psi}{\partial s}(s)^{\sf T} R \succeq 2I_p,
    $$
    for all $\eta \in \real^{\nu}$, $s\in \RR^{p}$.
    \item[(O2)] There exist a matrix $P\in \RR^{\nu\times\nu}$, $P \succ 0$, and positive scalars $\varrho,\varepsilon > 0$ such that
    $$
P \dfrac{\partial f}{\partial \eta} (\eta) + 
\dfrac{\partial f}{\partial \eta}(\eta) ^{\sf T} P
-  2 \varrho L^{\sf T} L  \preceq -2\varepsilon I_\nu,
\quad \text{for all}\,\,\eta\in \RR^\nu.
$$
\end{itemize}
Assumption (O1) imposes that $\lp(h)$ can be expressed as the composition of a strongly monotone function $\psi$ and a linear map $L$, while Assumption (O2) is related to the differentiable detectability of the pair $(f,L)$, see \cite[Section 4.4]{PB-VA-DA:22}. {\tb Our subsequent design will make use of the matrices $(L,R,P)$, but does not require the explicit form of the function $\psi$.} We can now present the main result of this section, which extends the design of 
 Section \ref{Sec:EstDesignLp} a)
 to a particular class of nonlinear systems. 
Nonlinear extensions of the low-gain tuning estimator designs presented in Sections \ref{Sec:LowGainEstimatorLp}--\ref{Sec:LowGainEstimatorLd} and Remark \ref{Rem:TuningEstimator} are deferred to future work.




\color{black}

\begin{theorem}[\bf Nonlinear Cascade Estimator]\label{Thm:NonlinearTuningEstimator}
  Suppose Assumptions \textnormal{(A0)--(A2)} of Remark \ref{Rem:SolveInvariance} and Assumptions \textnormal{(O1)--(O2)} hold and set $K_\eta = \kappa P^{-1} L^{\sf T} R$ with any $\kappa\geq \varrho$. Then \eqref{Eq:CandidateEstimatorNonlinear_orignal_coor} with $w_x, w_{\eta}$ as in \eqref{Eq:CorrectionSelections} is a global observer 
  for the cascade
\eqref{eq:nonlinear_sys_observed}, i.e.,
    $$
    \lim_{t\to\infty} |\hat x(t)-x(t)| + |\hat \eta(t) - \eta(t)| = 0
    $$
    for all $(x(0),\eta(0))\in \RR^{n}\times {\mathcal O}$ and
    $(\hat \xi(0),\hat \eta(0))\in \RR^n\times {\mathcal O}$.
\end{theorem}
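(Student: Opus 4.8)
The plan is to exploit the triangular (cascade) structure of the error system \eqref{Eq:CandidateEstimatorErrorNonlinear} together with a single quadratic incremental Lyapunov function built from the metric $P$ furnished by Assumption (O2). First I would treat the $\tilde{\xi}$-subsystem $\dot{\tilde{\xi}} = A\tilde{\xi}$: it is autonomous and, since $A$ is Hurwitz by (A2), $\tilde{\xi}(t) = e^{At}\tilde{\xi}(0) \to 0$ exponentially. Because $\tilde{x} = \tilde{\xi} + \pi(\eta) - \pi(\hat{\eta})$ with $\pi$ continuously differentiable, the claimed convergence reduces to showing $\tilde{\eta}(t) \to 0$; the overall system then has the form of an exponentially stable linear block driving the $\tilde{\eta}$-dynamics.

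Next I would linearize the two increments appearing in \eqref{Eq:CandidateEstimatorErrorNonlinear} along the segment joining $\hat{\eta} = \eta - \tilde{\eta}$ to $\eta$. Applying the fundamental theorem of calculus to $\tau \mapsto \hat{\eta} + \tau\tilde{\eta}$, I would introduce the averaged Jacobians $\bar{F} \define \int_0^1 \tfrac{\partial f}{\partial \eta}(\hat{\eta}+\tau\tilde{\eta})\,\mathrm{d}\tau$ and, using the representation $\lp(h)(\eta) = \psi(L\eta)$ from (O1), $\bar{\Psi} \define \int_0^1 \tfrac{\partial \psi}{\partial s}(L\hat{\eta}+\tau L\tilde{\eta})\,\mathrm{d}\tau$, so that $f(\eta)-f(\hat{\eta}) = \bar{F}\tilde{\eta}$ and $\lp(h)(\eta)-\lp(h)(\hat{\eta}) = \bar{\Psi}L\tilde{\eta}$. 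The error dynamics then read $\dot{\tilde{\eta}} = (\bar{F} - K_{\eta}\bar{\Psi}L)\tilde{\eta} - K_{\eta}C\tilde{\xi}$.

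The core estimate uses $V(\tilde{\eta}) = \tilde{\eta}^{\sf T}P\tilde{\eta}$. Differentiating and substituting the identity $PK_{\eta} = \kappa L^{\sf T}R$ (from the gain choice $K_{\eta} = \kappa P^{-1}L^{\sf T}R$) yields
\[
\dot{V} = \tilde{\eta}^{\sf T}(P\bar{F}+\bar{F}^{\sf T}P)\tilde{\eta} - \kappa(L\tilde{\eta})^{\sf T}(R\bar{\Psi}+\bar{\Psi}^{\sf T}R)(L\tilde{\eta}) - 2\tilde{\eta}^{\sf T}PK_{\eta}C\tilde{\xi}.
\]
Averaging the pointwise bounds, (O2) gives $P\bar{F}+\bar{F}^{\sf T}P \preceq 2\varrho L^{\sf T}L - 2\varepsilon I_{\nu}$, while (O1) gives $R\bar{\Psi}+\bar{\Psi}^{\sf T}R \succeq 2I_p$; since $\kappa \geq \varrho$ the first two terms collapse to at most $-2\varepsilon|\tilde{\eta}|^2$. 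Bounding the coupling term $-2\tilde{\eta}^{\sf T}PK_{\eta}C\tilde{\xi}$ by Young's inequality then produces a differential inequality of the form $\dot{V} \leq -c_1 V + c_2 |\tilde{\xi}|^2$ with $c_1,c_2 > 0$, which is exactly an input-to-state stability estimate for the $\tilde{\eta}$-subsystem with input $\tilde{\xi}$.

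Finally, since an exponentially decaying $\tilde{\xi}$ is fed into an ISS $\tilde{\eta}$-subsystem, the comparison lemma (or a standard cascade argument) gives $V(t)\to 0$, hence $\tilde{\eta}(t)\to 0$; continuity of $\pi$ then yields $\tilde{x}(t)\to 0$, establishing the global observer property. I expect the main obstacle to be the rigorous bookkeeping around the averaged-Jacobian step and the cascade conclusion: one must justify forward completeness and boundedness of trajectories (so that the ISS-plus-vanishing-input argument is globally valid over $\mathcal{O}$) and confirm that $\pi$, $\tfrac{\partial \pi}{\partial \eta}$, and $\lp(h)$ remain well-defined along the estimator flow, in the spirit of the metric-based differential dissipativity framework of \cite[Section 4]{PB-VA-DA:22}. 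The Lyapunov algebra itself, engineered to absorb precisely (O1)--(O2) and the choice of $K_{\eta}$, is the routine part.
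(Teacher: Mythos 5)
Your proposal is correct and follows essentially the same route as the paper's proof: exponential decay of the autonomous $\tilde{\xi}$-subsystem, the quadratic Lyapunov function $V(\tilde{\eta}) = \tilde{\eta}^{\sf T}P\tilde{\eta}$, a mean-value/averaged-Jacobian bound combining (O1)--(O2) with $\kappa \geq \varrho$, Young's inequality on the $C\tilde{\xi}$ cross term, and an ISS cascade conclusion. The only cosmetic difference is that you average the Jacobians of $f$ and $\psi$ separately, whereas the paper applies the mean-value theorem once to the combined injected vector field $\Phi(\eta) = f(\eta) - \kappa P^{-1}L^{\sf T}R\,\psi(L\eta)$; these are algebraically identical.
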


\begin{proof}
With $w_{\eta} = K_{\eta}(y-\hat{y})$, the error dynamics
\eqref{Eq:CandidateEstimatorErrorNonlinear} read
\begin{equation}\label{eq:error_estimation_proof}
\begin{aligned}
\dot{\tilde{\xi}} &= A\tilde{\xi}, 
\\
\dot{\tilde{\eta}} &=  \tilde f(\tilde \eta,\eta)- \kappa P^{-1}L^{\sf T} R (\tilde{y}_x  + \tilde{y}_\eta),
\end{aligned}
\qquad 
\begin{aligned}
    \tilde{y}_x & = C\tilde \xi , \\
    \tilde{y}_\eta & = \tilde \psi(L\tilde \eta,L\eta),
\end{aligned}
\end{equation}
with the notation 
$\tilde f(\tilde\eta, \eta)
 = f(\eta)-f(\eta-\tilde\eta )$ and 
$\tilde \psi(L\tilde\eta, L\eta)
 = \psi(L\eta)-\psi(L(\eta-\tilde\eta ))$.
 In view of A1), the $\tilde \xi$-dynamics is 
 globally exponentially stable.
As a consequence, in view of standard ISS results
 for cascade systems (e.g. 
\cite[Chapter 10]{AI:99}), the error dynamics \eqref{eq:error_estimation_proof}
 is globally asymptotically stable if the $\tilde \eta$-dynamics is ISS with respect to $\tilde{y}_x$. Furthermore, $(\tilde \xi,\tilde \eta) = 0$ implies
 $(\tilde x, \tilde \eta)= 0$ and so the statement of the theorem.
 
 To this end,  
 consider the Lyapunov function 
 $V(\tilde{\eta}) = \tilde \eta^{\sf T} P \tilde \eta$.
 Its derivative along solutions to \eqref{eq:error_estimation_proof} yields
 \begin{align}
     \dot V & = 2\tilde \eta^{\sf T} P \big[ \tilde f(\tilde \eta,\eta)- \kappa P^{-1}L^{\sf T} R (\tilde{y}_x  + \tilde{y}_\eta) \big] \notag
     \\
      & \leq 2\tilde \eta^{\sf T} P \widetilde \Phi(\tilde \eta, \eta)
      + \varepsilon |\tilde \eta|^2 + 
      \tfrac{\kappa^2}{\varepsilon} |L^{\sf T} R \tilde{y}_x|^2, \label{eq:ineqV_proof}  
 \end{align}
with the compact notation
$\widetilde \Phi (\tilde \eta, \eta) 
     : = 
\Phi(\eta) - \Phi(\eta - \tilde \eta)$
and
$\Phi (\eta) : = f(\eta) - \kappa
P^{-1}L^{\sf T} R \psi(L\eta)$.
Using the mean-value theorem, we have 
\begin{equation}
    \label{eq:mean_value_proof}
    2\tilde \eta^{\sf T} P \widetilde\Phi(\tilde \eta,\eta)
       = 
        \tilde \eta^{\sf T} \!\int_0^1 
       \left[P \dfrac{\partial \Phi}{\partial \eta} (\eta + s\tilde\eta)  + 
       \dfrac{\partial \Phi}{\partial \eta} (\eta + s\tilde\eta)^{\sf T}  P\right] \mathrm{d}s \, \tilde \eta
\end{equation}
and if we show that the following inequality
holds 
\begin{equation}
    \label{eq:ineq_proof}
    P \dfrac{\partial \Phi}{\partial \eta} (\eta)  + 
       \dfrac{\partial \Phi}{\partial \eta} (\eta)^{\sf T}  P
       \preceq - 2\varepsilon I_\nu, 
    \quad \text{for all}\,\,\eta\in \RR^\nu.
\end{equation}
then, 
from 
\eqref{eq:mean_value_proof} we get
$
 2\tilde \eta^{\sf T} P \widetilde\Phi(\tilde \eta,\eta)\leq 
 -2\varepsilon |\tilde \eta|^2
$
and from \eqref{eq:ineqV_proof} we further obtain 
$\dot V \leq - \varepsilon|\tilde \eta|^2 + \gamma |y_x|^2$ with $\gamma = \tfrac{\kappa^2}{\varepsilon}|L^{\sf T} R|^2$,
showing the desired ISS-property of the $\tilde \eta$-dynamics with respect to $\tilde \xi$ and concluding the proof. So we are left with showing 
the inequality \eqref{eq:ineq_proof}.
To this end, using the definition of $\Phi$  we obtain 
$$
\dfrac{\partial \Phi}{\partial \eta}(\eta) = 
 \dfrac{\partial f}{\partial \eta} (\eta)
 - \kappa P^{-1}L^{\sf T} R
  \dfrac{\partial \psi}{\partial \eta} (L\eta) L.
$$
 Finally, combining inequality \eqref{eq:ineq_proof}
and properties (O1)--(O2) we get
\begin{equation*}
   \begin{aligned}
  &  P \dfrac{\partial \Phi}{\partial \eta} (\eta)  + 
       \dfrac{\partial \Phi}{\partial \eta} (\eta)^{\sf T}  P
        = 
        \\
    &      P \dfrac{\partial f}{\partial \eta} (\eta)  + 
       \dfrac{\partial f}{\partial \eta} (\eta)^{\sf T}  P
      - \kappa L^{\sf T} \left[  R\dfrac{\partial \psi}{\partial \eta} (L\eta)  
      + \dfrac{\partial \psi}{\partial \eta}(L\eta)^{\sf T}   R \right] L
       \preceq
\\
 &\qquad \quad P \dfrac{\partial f}{\partial \eta} (\eta)  + 
       \dfrac{\partial f}{\partial \eta} (\eta)^{\sf T}  P
      -  2\varrho L^{\sf T}  L - 2(\kappa - \varrho) L^{\sf T} L \preceq -2 \varepsilon I_\nu
   \end{aligned}
\end{equation*}
for all $\kappa\geq\varrho$ and all $\eta\in \RR^\nu$, concluding the proof.
\end{proof}

\smallskip

{\tb
\begin{remark}[\bf Estimator Structure]
For the design of state estimators having the same state dimension as the plant, differential detectability is generally required to ensure convergence, as shown in \cite{PB-VA-DA:22}. Conditions (O1)--(O2) specifically impose a differential detectability property with respect to a Euclidean metric. Consequently, these conditions could be relaxed by adopting a more general Riemannian framework  \cite[Chapter 4]{PB-VA-DA:22}. Alternative observability assumptions could be made, at the cost of employing a more complex class of observers (e.g., KKL observers) for the $\eta$-dynamics. \hfill \oprocend 
\end{remark}
}

\smallskip

\subsubsection*{An example} We illustrate the observer design procedure of Theorem \ref{Thm:NonlinearTuningEstimator} with the following academic example. The plant $\Sigma_{\rm L} = (A,B,C,0)$ in \eqref{eq:nonlinear_sys_observed} with state $x = (x_1,x_2) \in \real^2$ is given by
\[
A = \begin{bmatrix}    -\alpha_1 & \alpha_2\\  -\alpha_3 & -\alpha_4\end{bmatrix}, \quad B = \begin{bmatrix}
    0 & \beta_2 \\  \beta_3 & \beta_4
\end{bmatrix},\quad C = \begin{pmatrix}
    \kappa_1 & \kappa_2
\end{pmatrix},
\]
with $\alpha_1, \alpha_3, \alpha_4 > 0$ and  $\alpha_2 = \alpha_1$. The disturbance dynamics $\Sigma_{\rm NL}^{\prime}$ with state $\eta = (\eta_1,\eta_2) \in \real^2$ are modelled using a Van der Pol oscillator
\[
\begin{aligned}
f(\eta) & = 
\begin{bmatrix}
    \eta_2 
    \\
    \mu(1-\eta_1^2)\eta_2 - \eta_1
\end{bmatrix},
\qquad
h(\eta) = \begin{bmatrix}
  \eta_1 +  \eta_1^3
    \\
   \eta_2
\end{bmatrix}, 
\end{aligned}
\]
with $\mu > 0$. With $\bar\alpha = \alpha_1\alpha_4 +\alpha_2\alpha_3$ and $\beta_3$ left arbitrary, the remaining parameters are selected as $a_1 = \beta_3\alpha_2\bar \alpha^{-1}$, $a_2 = \beta_3\alpha_1 \bar\alpha^{-1}$, $b_1 = 3a_1 \mu^{-1}$, 
$b_2 = 3a_2\mu^{-1}$, $c_1 =\bar\alpha^{-1} (\alpha_4 b_1+\alpha_2(b_1+\beta_3))$,
$c_2 =\bar\alpha^{-1} (-\alpha_3 b_1+\alpha_1(b_1+\beta_3))$, $\kappa_1 = b_2$, 
$\kappa_2 = -b_1$,
 $\beta_2 = c_1 + (\mu+\alpha_1)b_1-\alpha_2b_2$, 
 and $
\beta_4 = c_2 + (\mu+\alpha_4)b_2   +\alpha_3b_1$. It can then be verified that the invariance equation \eqref{Eq:PrimalInvarianceEquation}
admits a unique solution $\pi :\RR^2\to\RR^2$ with components
$$
\pi_1(\eta)  =   a_1\eta_1^3 +b_1\eta_2 +c_1 \eta_1,
\quad
\pi_2(\eta)  = a_2 \eta_1^3 +b_2\eta_2 +c_2 \eta_1,
$$
and thus $\lp(h)(\eta) = C\pi(\eta) = k \eta_{1}$ with $k = 9\beta_3^2/(\mu^2\alpha_1(\alpha_3+\alpha_4)^2) > 0$. Assumption (O1) is therefore verified with 
$\psi(s) = ks$,
$L=[1 \; 0]$, and $R=k^{-1}$. Similarly, one can verify (O2) on any compact set\footnote{To be precise, one can show the assumptions is verified on sets of the form $\{\eta\in\RR^2: |\eta_1|^2>1-\tfrac{\varepsilon}{\mu}, |\eta|\leq R\}$, $R>0$, but this is not an issue because the limit cycle of the Van der Pol oscillator is attractive from the interior, that is, when $\eta_1$ is small.}  with a $P$ of the form 
$P = \begin{psmallmatrix}
    1 & -q \\ -q & 1
\end{psmallmatrix}$ and $q>0$ small enough and $\varrho>0$ large enough. The trace of the observer estimation errors $x - \hat{x}$ and $\eta - \hat{\eta}$ from a randomized initial condition is plotted in Figure \ref{Fig:Sim3} 
with the parameters chosen as 
 $\mu = 3$, $\alpha_1 = \alpha_3 = \alpha_4 = \beta_3 = 1$, $q = 0.01$, $\kappa = 100$.

\begin{figure}[ht!]
\begin{center}
\includegraphics[width=1\columnwidth]{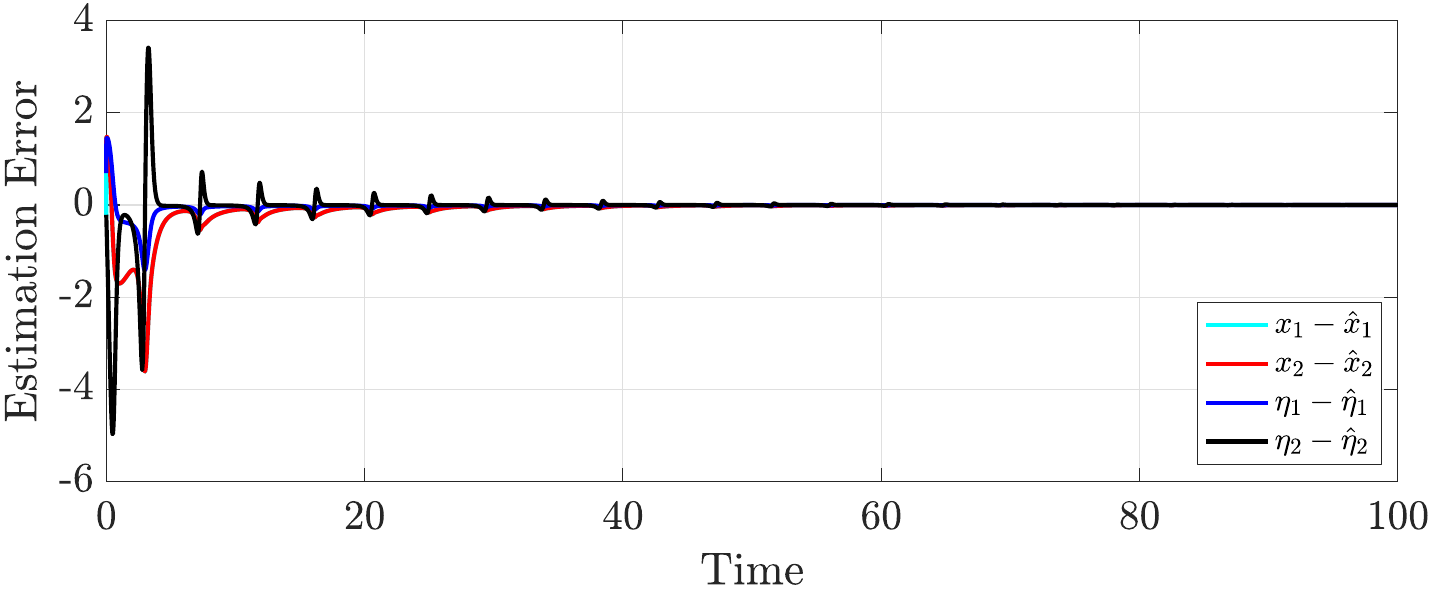}
\caption{Disturbance estimation for a linear system driven by a Van der Pol oscillator. 
}
\label{Fig:Sim3}
\end{center}
\end{figure}



\subsection{Model Order Reduction}
\label{Sec:MORNonlinear}
 
As noted in Section~\ref{Sec:Cascade-MOR}, moment matching is equivalent to matching the steady-state response of the cascade $\Sigma^{\prime} \rightarrow \Sigma$, or to matching the impulse response matrix of the cascade $\Sigma \rightarrow \Sigma^{\prime}$. This viewpoint has enabled the extension of the moment matching theory beyond linear systems: while nonlinear systems do not have a well-defined transfer function (and so classical moments), they may have well-defined interconnection responses. Thus, constructing a reduced-order model that has the same steady-state output as the original system for the same class of inputs (or, the same filtered impulse response for the same filter) has become a proxy for nonlinear moment matching that is equivalent to the classical moment matching when the systems are linear. The nonlinear enhancement of the moment $\Lp(H)$ has been introduced in \cite{AA:10} for $\Sigma_{\rm NL}^{\prime} \rightarrow \Sigma_{\rm NL}$ and $\Sigma_{\rm L}^{\prime} \rightarrow \Sigma_{\rm NL}$, while the nonlinear enhancement of the moment $\Ld(G)$ has been given in \cite{TI-AA:15} for $\Sigma_{\rm NL} \rightarrow \Sigma_{\rm NL}^{\prime}$ and $\Sigma_{\rm NL}^{\prime} \rightarrow \Sigma_{\rm NL} \rightarrow \Sigma_{\rm NL}^{\prime}$. A nonlinear Loewner framework has been presented in \cite{JDS-AA:21}.
Similar results have been provided for very general classes of systems, such as systems with time delays \cite{GS-AA:16}. A survey of the resulting ``interconnection-based'' model order reduction theory is given in  \cite{GS-AA:24}. The characterization of the structural properties of families of nonlinear reduced-order models based on nonlinear non-resonance-type conditions is an open question and a natural extension of the results in Section~\ref{Sec:Cascade-MOR}.



\section{Conclusions}
\label{Sec:Conclusions}


This work provided a unified study of steady-state cascade operators, our terminology to indicate the linear operators which naturally appear in problems of moment-based model-order reduction, and in stabilization and estimation problems involving cascaded linear systems (Section \ref{Sec:SSC_DI}). We have characterized system-theoretic properties of the operators (Section \ref{Sec:SystemTheoreticSSC}), their relation to frequency response and moments (Section \ref{Sec:Moments}), and catalogued analysis and design methodologies for the above application areas which directly leverage distinct properties of the SSC operators (Section \ref{Sec:Applications}). Notably, even in the LTI case, many of the presented design pathways are novel, particularly for the case of cascade estimation. In Section \ref{Sec:Nonlinear} we sketched a nonlinear theory of SSC operators, and provided evidence that the linear theory of Section \ref{Sec:SSC} can indeed inspire new design methodologies based on nonlinear SSC operators. Finally, we remark that some of the results herein
can be (or have already been) extended to {\tb discrete-time  (e.g., \cite{DB-AM-AA:24}) or infinite-dimensional systems (e.g., output regulation
\cite{LP-SP-TH:08,LP:16,NV-LB:23}, repetitive control  \cite{DA-SM-NvdW:21}, stabilization
\cite{VN:21,SM-LB-DA:21,SM-DA-VA:22,NV-LB:23}, model reduction \cite{TCI-OVI:12,GS-AA:16}).}

{\tb One direction for future study is the extension of the theoretical and design results of Sections  \ref{Sec:SSC} and \ref{Sec:Applications} to recursive design for more general cascaded interconnections, such as so-called (strict) feedforward systems \cite{FM-LP:96}. In this context, effort is needed to understand whether the properties of the resulting SSC operators (as in Theorem \ref{Thm:SSC}) can be assessed based on open-loop plant data, and how the low-gain design procedures of Section \ref{Sec:Applications} can be modified to prevent closed-loop performance degradation when they are repeatedly applied.}

{\tb A major open direction is the development of a similarly comprehensive set of analysis and design results for SSC operators for time-varying and nonlinear systems, mirroring Sections  \ref{Sec:SystemTheoreticSSC}, \ref{Sec:Moments} and \ref{Sec:Applications}.} For example, Theorems~\ref{Thm:SSC} and \ref{Thm:SSC2} suggest the introduction of appropriate nonlinear ``non-resonance'' conditions \cite{marconi2004non} may imply invertibility and (e.g., differential) stabilizability/detectability properties of nonlinear SSC operators. Similarly, Section~\ref{Sec:Moments} suggests that relationships between SSC operators and harmonic response may be obtainable under additional assumptions (cf. frequency response functions for convergent systems \cite{AP-NvdW-HN:07}), which would lead to novel low-gain stabilizer/estimator designs akin to those in Section~\ref{Sec:Applications}. {\tb However, significant challenges remain in extending these results to nonlinear systems. Although the nonlinear counterparts of \eqref{Eq:Sylvester} and \eqref{Eq:GenL} are reasonably well understood, their properties have not been thoroughly investigated. Even basic results on existence and uniqueness are scattered across the literature, often derived under varying assumptions and constraints. Developing nonlinear analogues of Theorems~\ref{Thm:SSC} and \ref{Thm:SSC2}, along with a unified theoretical framework comparable to the one presented here, remains a major challenge—one that we aim to address in future work.}

\section*{Acknowledgements}
J. W. Simpson-Porco acknowledges helpful discussions with L. Chen regarding the derivation in Theorem \ref{Thm:Moments}  leading to \eqref{Eq:LfFreqb}.

\appendices

\section{Solvability of Hautus and Dual Hautus Equations}
\label{Sec:Hautus}

This appendix contains results concerning solvability of certain linear matrix equations; the treatment here is inspired by \cite[Theorem 9.6]{HLT-AS-MH:01}. For $k, n_1, n_2, \nu \in \integer_{\geq 1}$, a matrix $F \in \complex^{\nu \times \nu}$, a set of matrices $(R_i)_{i=1}^{k}$ in $\complex^{n_1 \times n_2}$, and a set of polynomials $(q_i)_{i=1}^{k}$ with real coefficients, we define the ``primal'' Hautus operator
\[
\map{\Hp}{\complex^{n_2 \times \nu}}{\complex^{n_1 \times \nu}}, \quad \Hp(X) \define \sum_{i=1}^{k}\nolimits R_i X q_i(F),
\]
where $q_i(F)$ denotes formal substitution of $F$ as the indeterminate into the polynomial. The result below characterizes injectivity/surjectivity of this operator and provides analogous results for a ``dual'' operator
\[
\map{\Hd}{\complex^{\nu \times n_1}}{\complex^{\nu \times n_2}}, \quad \Hd(Y) \define \sum_{i=1}^{k}\nolimits q_i(F) Y R_i.
\]

\begin{theorem}[\bf Solvability of Hautus Equations]\label{Thm:HautusExtended}
Associated with the Hautus operators defined above, define the $n_1 \times n_2$ polynomial matrix
\[
R(\lambda) := \sum_{i=1}^{k}\nolimits R_iq_i(\lambda), \qquad \lambda \in \complex.
\]
Then
\begin{enumerate}[(i)]
\item \label{Itm:HautusP} $\Hp$ is surjective (resp. injective) if and only if $R(\lambda)$ has full row rank (resp. full column rank) for all $\lambda \in \eig(F)$;
\item \label{Itm:HautusD} $\Hd$ is surjective (resp. injective) if and only if $R(\lambda)$ has full column rank (resp. full row rank) for all $\lambda \in \eig(F)$.
\end{enumerate}
\end{theorem}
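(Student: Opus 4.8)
The plan is to prove the injectivity characterization of $\Hp$ directly, and then obtain the other three characterizations (surjectivity of $\Hp$, and both parts of $\Hd$) essentially for free by an adjoint argument and a transposition argument. This way only a single genuine computation is required.

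For injectivity of $\Hp$, necessity is by contraposition. If $R(\lambda_0)$ fails to have full column rank at some $\lambda_0 \in \eig(F)$, I would pick $0 \neq v \in \ker R(\lambda_0)$ together with a left eigenvector $\phi$ satisfying $\phi^{\sf T} F = \lambda_0 \phi^{\sf T}$, so that $\phi^{\sf T} q_i(F) = q_i(\lambda_0)\phi^{\sf T}$ for every $i$. The rank-one matrix $X = v\phi^{\sf T} \neq 0$ then satisfies $\Hp(X) = \sum_i R_i v \phi^{\sf T} q_i(F) = \bigl(\sum_i q_i(\lambda_0) R_i\bigr) v \, \phi^{\sf T} = R(\lambda_0)v\,\phi^{\sf T} = 0$, so $\Hp$ is not injective. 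For sufficiency, I would first reduce to the case where $F = J$ is in Jordan form: writing $F = VJV^{-1}$ gives $\Hp(X) = \bigl(\sum_i R_i (XV)\, q_i(J)\bigr)V^{-1}$, and since $V, V^{-1}$ are invertible, setting $\tilde X = XV$ shows injectivity of $\Hp$ is equivalent to injectivity of $\tilde X \mapsto \sum_i R_i \tilde X\, q_i(J)$, while the polynomial matrix $R(\lambda)$ and the spectrum are unchanged.

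The equation $\sum_i R_i \tilde X\, q_i(J) = 0$ decouples across Jordan blocks, so it suffices to treat a single block $J_k = \lambda_k I_{m_k} + N$ with $N$ nilpotent. The key observation is that each $q_i(J_k)$ is upper-triangular Toeplitz with $(r,\ell)$-entry $q_i^{(\ell-r)}(\lambda_k)/(\ell-r)!$; reading the block equation column by column therefore yields the triangular system $\sum_{s=0}^{\ell-1}\frac{1}{s!} R^{(s)}(\lambda_k)\,\xi_{\ell-s} = 0$ for $\ell = 1,\dots,m_k$, where the $\xi_r$ are the columns of the $k$-th block $\tilde X_k$ and $R^{(s)}$ denotes the $s$-th derivative of $R(\cdot)$. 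Full column rank of $R(\lambda_k)$ forces $\xi_1 = 0$ from the $\ell=1$ equation, and forward substitution forces $\xi_2 = \dots = \xi_{m_k} = 0$, giving $\tilde X = 0$ and hence $X = 0$. This triangular induction is the main technical obstacle; once it is set up, the conclusion is routine, and everything that follows is formal.

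Having characterized injectivity of $\Hp$, I would obtain its surjectivity via the adjoint identity $\Hp^*(Y) = \sum_i R_i^* Y\, q_i(F^*)$, which uses that the $q_i$ have real coefficients so that $q_i(F)^* = q_i(F^*)$. Then $\Hp$ is surjective iff $\Hp^*$ is injective, and since $\Hp^*$ is again a primal Hautus operator with data $(R_i^*, F^*)$, the injectivity result applies; using $\eig(F^*) = \overline{\eig(F)}$ and the identity $\sum_i R_i^* q_i(\bar\mu) = R(\mu)^*$, the condition becomes exactly ``$R(\lambda)$ has full row rank for all $\lambda \in \eig(F)$.'' Finally, statement (ii) follows by transposition: for any $Y$ one has $\Hd(Y)^{\sf T} = \sum_i R_i^{\sf T} Y^{\sf T} q_i(F^{\sf T})$, which is a primal Hautus operator in the variable $Y^{\sf T}$ with data $(R_i^{\sf T}, F^{\sf T})$ and associated polynomial matrix $R(\lambda)^{\sf T}$. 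Since transposition is a bijection, $\eig(F^{\sf T}) = \eig(F)$, and transposing swaps full row and full column rank, applying the $\Hp$-results to this operator delivers both the surjectivity and injectivity characterizations of $\Hd$ asserted in (ii).
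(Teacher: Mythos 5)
Your proof is correct, but it takes a genuinely different and more self-contained route than the paper. The paper never proves the core rank characterization from scratch: it imports the surjectivity half of part (i) directly from \cite[Theorem 9.6]{HLT-AS-MH:01}, and then obtains the remaining three statements by exactly the duality bookkeeping you use \textemdash{} the trace-inner-product adjoint $\Hp^*(\overbar{X}) = \sum_{i} R_i^*\overbar{X}q_i(F^*)$ to get injectivity of $\Hp$ from surjectivity, and (Hermitian) transposition to convert $\Hd$ into a primal-form operator for part (ii). You instead make \emph{injectivity} of $\Hp$ the base case and prove it from first principles: the rank-one kernel element $X = v\phi^{\sf T}$ for necessity, and the Jordan-form reduction with the triangular system $\sum_{s=0}^{\ell-1}\tfrac{1}{s!}R^{(s)}(\lambda_k)\xi_{\ell-s} = 0$ for sufficiency. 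Both computations are sound \textemdash{} the decoupling across Jordan blocks, the upper-triangular Toeplitz structure of $q_i(J_k)$, and the forward-substitution induction all check out \textemdash{} and your adjoint and transposition steps are the same formal machinery as the paper's, just run in the opposite direction (surjectivity derived from injectivity rather than vice versa). What your route buys is a proof that stands entirely on its own, with an explicit witness of non-injectivity in the degenerate case; what the paper's route buys is brevity, since the heavy lifting is delegated to a known textbook theorem. A minor stylistic point: your plain-transpose treatment of part (ii) is slightly cleaner than the paper's conjugate-transpose version, since $\eig(F^{\sf T}) = \eig(F)$ requires no conjugation of the spectrum or of the polynomial matrix.
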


\begin{proof}
(i): The surjectivity statement is precisely the result of \cite[Theorem 9.6]{HLT-AS-MH:01}. To show injectivity, we endow the domain and codomain of $\Hp$ with the inner product $\langle Z_1,Z_2\rangle = \mathrm{Tr}(Z_1^{*}Z_2)$ and compute the adjoint operator $\Hp^*$ of $\Hp$, which is the unique linear operator satisfying $\langle \overbar{X}, \Hp(X)\rangle = \langle \Hp^*(\overbar{X}),X\rangle$ for all $X \in \complex^{n_2 \times \nu}$ and all $\overbar{X} \in \complex^{n_1 \times \nu}$. Since both the domain and codomain are finite-dimensional,  $\Hp$ is injective if and only if $\Hp^*$ is surjective. Routine computation of the adjoint shows that
\[
\Hp^*(\overbar{X}) = \sum_{i=1}^{k}\nolimits R_i^*\overbar{X}q_i(F)^* = \sum_{i=1}^{k}\nolimits  R_i^*\overbar{X}q_i(F^*)
\]
where real-ness of the coefficients in the polynomials $(q_i)_{i=1}^{k}$ has been used. By the previous result, $\Hp^*$ is surjective if and only if $\sum_{i=1}^{k}R_i^{*}q_i(\lambda^*)$ has full row rank for all $\lambda^* \in \eig(F^*)$, or equivalently, if $\sum_{i=1}^{k}R_iq_i(\lambda^*)^*$ has full column rank for all $\lambda^{*} \in \eig(F^*)$. Since the eigenvalues of $F$ are the complex conjugates of the eigenvalues of $F^*$, this is the same as saying $\sum_{i=1}^{k}R_iq_i(\lambda)$ has full column rank for all $\lambda \in \eig(F)$, which shows the result. 

(ii): Simply taking Hermitian transposes, note that
\[
\Hd(Y)^* = \sum_{i=1}^{k}\nolimits R_i^*Y^*q_i(F^*),
\]
which can now be viewed as a linear operator $\complex^{n_1 \times \nu} \ni Y^* \mapsto \Hd(Y)^* \in \complex^{n_2 \times \nu}$ having the same form as adjoint $\Hp^*$ computed in part (i). By analogous arguments, this operator (and hence, also $\Hd$) is surjective if and only if $R(\lambda)$ has full column rank for all $\lambda \in \eig(F)$. For the injectivity statement, note that the adjoint $\map{\Hd^*}{\complex^{\nu \times n_2}}{\complex^{\nu \times n_1}}$ of $\Hd$ may be computed to be $\Hd^*(\overbar{Y}) = \sum_{i=1}^{k}q_i(F^{*})\overbar{Y}R_i^{*}$. Transposing again, we observe that 
\[
\Hd^*(\overbar{Y})^* = \sum_{i=1}^{k}\nolimits R_i \overbar{Y}^* q_i(F)
\]
has precisely the same form as $\Hp$; we may argue in the same fashion as part (i) that $\Hd^*$ is surjective, and hence $\Hd$ is injective, if and only if $R(\lambda)$ has full row rank for all $\lambda \in \eig(F)$.
\end{proof}

\section{Low-Gain Hurwitz Stability of Block Matrices} \label{App:LGHS}

Consider the block matrix
\begin{equation}\label{Eq:Block}
\mathcal{A}(\epsilon) = \begin{bmatrix}
A + N_1(\epsilon) & N_2(\epsilon)\\
N_3(\epsilon) & F(\epsilon)
\end{bmatrix}
\end{equation}
where $A$ is Hurwitz, $N_1, N_2, N_3$ are continuous matrix-valued functions of $\epsilon \geq 0$ which are $O(\epsilon)$ as $\epsilon \to 0^+$, and where $F$ is low-gain Hurwitz stable. A Lyapunov criteria for low-gain Hurwitz stability, established in \cite{LC-JWSP:23d}, is as follows. Let $\mathsf{Q}$ denote the set of continuous symmetric matrix-valued functions of $\epsilon \geq 0$ with the property that there exist constants $\epsilon_{Q}^{\star}, c_{Q} > 0$ such that  $Q(\epsilon) \succeq \epsilon \, c_{Q} \, I_n$ for all $\epsilon \in [0,\epsilon_{Q}^{\star}]$. Similarly, let $\mathsf{P}$ denote the set of continuous symmetric matrix-valued functions of $\epsilon \geq 0$ with the property that there exists $\epsilon_{P}^{\star} > 0$ such that $P(\epsilon) \succ 0$ for all $\epsilon \in (0,\epsilon_{P}^{\star}]$. Then a matrix $A(\epsilon)$ is low-gain Hurwitz stable if and only if for each $Q \in \mathsf{Q}$ there exists $\epsilon^{\star} > 0$ and $P \in \mathsf{P}$ such that $A(\epsilon)^{\sf T}P(\epsilon) + P(\epsilon)A(\epsilon) = -Q(\epsilon)$ for all $\epsilon \in (0,\epsilon^{\star})$. 
%
%
Returning to \eqref{Eq:Block}, let $P_{A} \succ 0$ be such that $A^{\sf T}P_{A} + P_{A}A \preceq -I$, and let $P_{F}(\epsilon)$ be a Lyapunov matrix certifying low-gain Hurwitz stability of $F(\epsilon)$ as described above with $Q(\epsilon) = \epsilon I$. With the composite Lyapunov candidate $\mathcal{P}(\epsilon) = \mathrm{blkdiag}(P_{A},P_{F}(\epsilon))$, routine computations show that $\mathcal{A}(\epsilon)^{\sf T}\mathcal{P}(\epsilon) + P(\epsilon)\mathcal{A}(\epsilon)$ evaluates to
\[
-\mathcal{Q}(\epsilon) \define -\begin{bmatrix}I + M_1(\epsilon) & M_2(\epsilon)\\ M_2(\epsilon)^{\sf T} & \epsilon I
\end{bmatrix}
\]
for all sufficiently small $\epsilon \geq 0$, where $M_1, M_2$ are $O(\epsilon)$ as $\epsilon \to 0^+$. Routine Schur complement arguments now establish that $\mathcal{Q} \in \mathsf{Q}$, which shows that \eqref{Eq:Block} is low-gain Hurwitz stable.

\bibliographystyle{IEEEtran}
\bibliography{brevalias, biblio, Main, JWSP, New}

\begin{IEEEbiography}[{\includegraphics[width=1in,height=1.25in,clip,keepaspectratio]{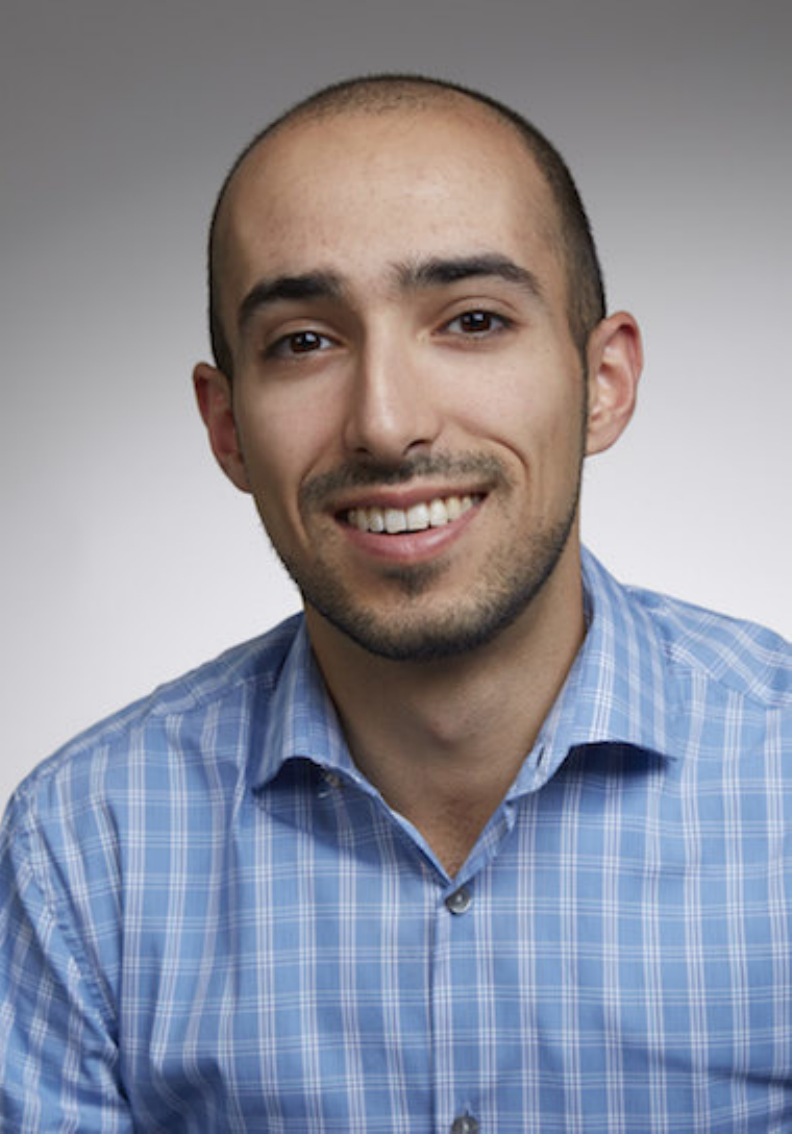}}]{John W. Simpson-Porco} (S'10--M'15--SM'22--) received the B.Sc. degree in engineering physics from Queen's University, Kingston, ON, Canada in 2010, and the Ph.D. degree in mechanical engineering from the University of California at Santa Barbara, Santa Barbara, CA, USA in 2015. He is currently an Assistant Professor of Electrical and Computer Engineering at the University of Toronto, Toronto, ON, Canada. He was previously an Assistant Professor at the University of Waterloo, Waterloo, ON, Canada and a visiting scientist with the Automatic Control Laboratory at ETH Z\"{u}rich, Z\"{u}rich, Switzerland. His research focuses on feedback control theory and applications of control in modernized power grids. He is a recipient of the Automatica Paper Prize, the Center for Control, Dynamical Systems and Computation Best Thesis Award, the IEEE PES Technical Committee Working Group Recognition Award for Outstanding Technical Report, and the Ontario Early Researcher Award.
\end{IEEEbiography}

\vspace{-2em}

\begin{IEEEbiography}
[{\includegraphics[width=1in,height=1.25in,clip,keepaspectratio]{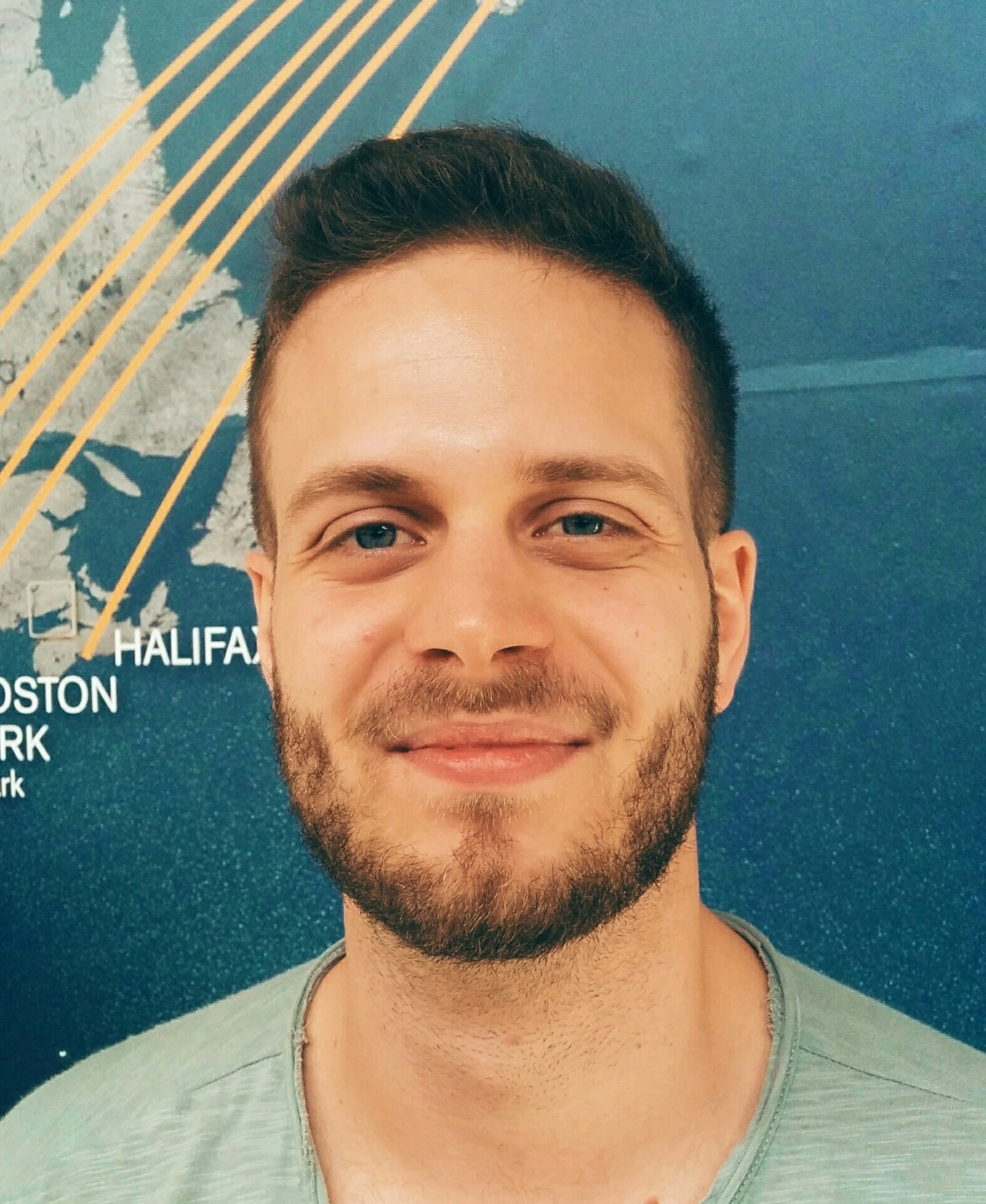}}]{Daniele Astolfi}
received the B.S. and M.S. degrees in
automation engineering from the University of Bologna,
Italy, in 2009 and 2012, respectively. He obtained a joint
Ph.D. degree in Control Theory from the University of
Bologna, Italy, and from Mines ParisTech, France, in 2016.
In 2016 and 2017, he has been a Research Assistant 
at the University of Lorraine (CRAN), Nancy, France.
Sinah ce 2018, he is a CNRS Researcher at 
LAGEPP, Lyon, France. 
His research interests include observer design, feedback
stabilization and output regulation  for nonlinear systems,
networked control systems, hybrid systems, and multi-agent systems.
 He serves as an associate 
editor of the IFAC journal Automatica since 2018 and European Journal of Control since 2023.
He was a recipient of the 2016 Best Italian Ph.D. Thesis Award in Automatica given by SIDRA.
\end{IEEEbiography}

\vspace{-2em}

\begin{IEEEbiography}
[{\includegraphics[width=1in,height=1.25in,clip,keepaspectratio]{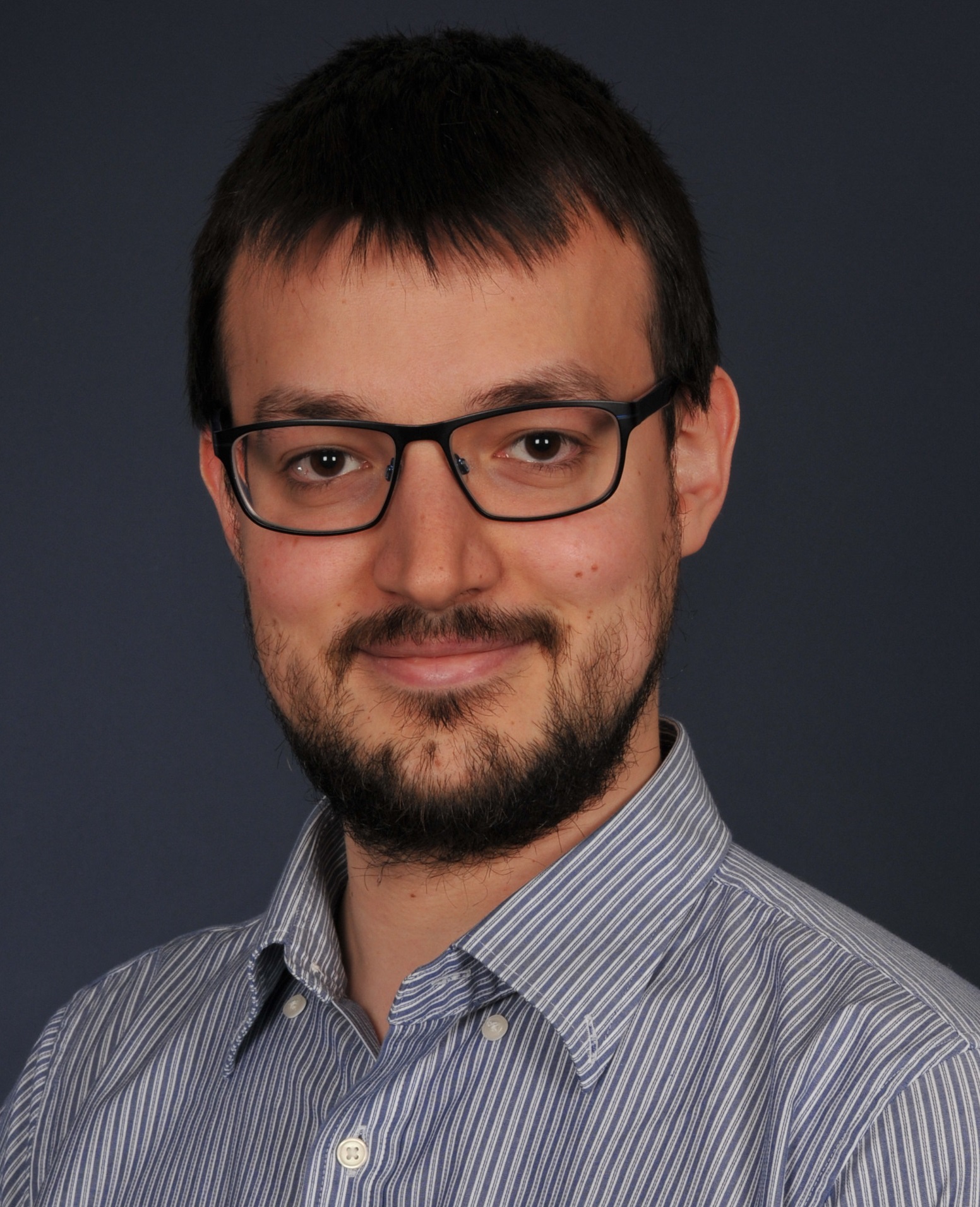}}]{Giordano Scarciotti} (Senior Member, IEEE) 
received his B.Sc. and M.Sc. degrees in Automation Engineering from the University of Rome “Tor Vergata”, Italy, in 2010 and 2012, respectively, and his Ph.D in Control Engineering and M.Sc. in Applied Mathematics from Imperial College London, UK, in 2016 and 2020, respectively. 
He is currently a Senior Lecturer (Associated Professor) at Imperial. 
He was a visiting scholar at New York University in 2015 and at University of California Santa Barbara in 2016 and a Visiting Fellow of Shanghai University in 2021-2022. He is the recipient of the IET Control \& Automation PhD Award (2016), the Eryl Cadwaladr Davies Prize (2017), an ItalyMadeMe award (2017), and the IEEE Transactions on Control Systems Technology Outstanding Paper Award (2023). He is a member of the EUCA CEB, and of the IFAC and IEEE CSS TCs on Nonlinear Control Systems. 
He is Associate Editor of Automatica. 
He was the NOC Chair for the ECC 2022 and of the 7th IFAC Conference on Analysis and Control of Nonlinear Dynamics and Chaos 2024, and the Invited Session Chair and Editor for the IFAC Symposium on Nonlinear Control Systems 2022 and 2025, respectively. 
\end{IEEEbiography}

\end{document}